\newtheorem{theorem}{Theorem}[section]
\newtheorem{lemma}[theorem]{Lemma}
\newtheorem{observation}[theorem]{Observation}
\newtheorem{definition}{Definition}
\newtheorem{step}{Step}
\newenvironment{proof}{\noindent\textbf{Proof.}}{{}\hfill$\Box$\\}
\newenvironment{proofof}{\noindent\textbf{Proof of}}{{}\hfill$\Box$\\}
\newcommand{\anchor}{\texttt{anchor}}
\newcommand{\cE}{\mathcal{E}}
\def\oY4{\protect\overrightarrow{Y_4^\infty}}
\def\Y4{Y_4^\infty}
\newcommand{\edge}[1]{(\protect\overrightarrow{#1})}
\newcommand{\cpath}[2]{\texttt{cpath}_{#1}(#2)}
\newcommand{\nodeList}{{(396,87)/0/1/1/1/0/1/1/1/0/right},{(309,180)/1/1/1/2/1/1/1/1/1/right},{(316,353)/2/0/2/1/1/0/1/1/1/right},{(301,441)/3/0/1/1/1/0/1/1/1/left},{(127,248)/4/1/1/1/1/1/1/1/1/right},{(323,305)/5/1/1/1/2/1/1/1/1/right},{(423,410)/6/0/1/1/0/0/1/1/0/right},{(204,408)/7/1/0/1/1/1/0/1/1/left},{(209,80)/8/1/1/0/1/1/1/0/1/right},{(50,459)/9/0/0/0/1/0/0/0/1/right},{(382,366)/10/1/1/1/1/1/1/1/1/right},{(48,160)/11/1/0/1/1/1/0/1/1/left},{(377,433)/12/0/1/2/1/0/1/1/1/right},{(332,72)/13/1/0/1/0/1/0/1/0/left},{(217,175)/14/1/1/1/2/1/1/1/0/right},{(250,350)/15/1/1/1/1/1/1/1/1/right},{(289,253)/16/1/1/1/1/0/1/1/1/right},{(147,312)/17/1/1/0/1/1/1/0/1/left},{(273,211)/18/1/1/1/1/1/1/1/1/right},{(375,208)/19/2/1/1/0/1/1/1/0/right},{(71,270)/20/1/1/0/1/1/1/0/1/left},{(392,328)/21/0/1/0/2/0/1/0/1/right},{(185,305)/22/1/1/1/0/1/1/1/0/right},{(247,198)/23/1/0/1/1/1/0/1/0/left},{(276,276)/24/1/1/1/1/1/1/1/1/left},{(190,132)/25/1/1/0/1/1/1/0/1/right},{(105,154)/26/0/1/1/1/0/1/1/1/right},{(30,100)/27/1/0/0/1/1/0/0/1/right},{(84,61)/28/1/1/0/0/1/1/0/0/right}}
\newcommand{\YaoEdges}{0/13/red!70,1/14/red!70,2/15/red!70,3/15/red!70,4/11/red!70,5/24/red!70,6/10/red!70,7/17/red!70,8/28/red!70,9/11/red!70,10/5/red!70,11/27/red!70,12/2/red!70,13/28/red!70,14/25/red!70,15/22/red!70,16/18/red!70,17/4/red!70,18/23/red!70,19/1/red!70,20/11/red!70,21/5/red!70,22/4/red!70,23/14/red!70,24/18/red!70,25/28/red!70,26/27/red!70,1/0/blue,2/5/blue,3/12/blue,4/14/blue,5/19/blue,7/15/blue,8/13/blue,9/17/blue,10/21/blue,11/26/blue,12/6/blue,14/13/blue,15/5/blue,16/1/blue,17/22/blue,18/1/blue,19/0/blue,20/4/blue,21/0/blue,22/24/blue,23/1/blue,24/16/blue,25/8/blue,26/25/blue,27/28/blue,0/6/red!70,1/19/red!70,2/10/red!70,4/22/red!70,5/10/red!70,7/3/red!70,8/14/red!70,10/6/red!70,11/4/red!70,13/0/red!70,14/23/red!70,15/2/red!70,16/5/red!70,17/7/red!70,18/16/red!70,19/21/red!70,20/17/red!70,21/6/red!70,22/15/red!70,23/18/red!70,24/5/red!70,25/14/red!70,26/4/red!70,27/11/red!70,28/26/red!70,0/1/blue,1/18/blue,2/3/blue,3/9/blue,4/20/blue,5/2/blue,6/12/blue,7/9/blue,8/25/blue,10/12/blue,12/3/blue,13/1/blue,14/4/blue,15/7/blue,16/24/blue,17/9/blue,18/22/blue,19/16/blue,20/9/blue,21/10/blue,22/17/blue,23/22/blue,24/15/blue,25/26/blue,26/11/blue,28/27/blue}
\newcommand{\stepTwoEdges} {21/0,23/1,12/2,2/3,7/3,16/5,10/12,8/13,1/14,8/14,20/17,24/18,5/19,19/21}
\newcommand{\stepThreeEdges} {21/0,5/19,2/3,10/12,7/3,8/13,20/17,24/18}
\newcommand{\shortcuts} {1/8}
\newcommand{\pseudoStrongEdges} {28/26/red!70,19/16/blue,24/15/blue}
\newcommand{\anchorEdges} {0/13/red!70,3/15/red!70,4/11/red!70,5/24/red!70,6/10/red!70,7/17/red!70,10/5/red!70,14/25/red!70,15/22/red!70,17/4/red!70,18/23/red!70,19/1/red!70,21/5/red!70,26/27/red!70,1/0/blue,3/12/blue,4/14/blue,7/15/blue,9/20/blue,10/21/blue,11/26/blue,12/6/blue,15/5/blue,16/1/blue,17/22/blue,20/4/blue,22/24/blue,24/16/blue,25/8/blue,26/25/blue,27/28/blue,1/19/red!70,2/10/red!70,4/22/red!70,5/10/red!70,11/4/red!70,13/0/red!70,14/23/red!70,15/2/red!70,17/7/red!70,18/16/red!70,22/15/red!70,23/18/red!70,27/11/red!70,28/26/red!70,0/1/blue,1/18/blue,4/20/blue,5/2/blue,8/25/blue,13/1/blue,14/4/blue,15/7/blue,16/24/blue,19/16/blue,21/10/blue,22/17/blue,24/15/blue,25/26/blue,26/11/blue,28/27/blue}
\newcommand{\StrongEdges} {0/13/red!70,4/11/red!70,5/24/red!70,6/10/red!70,7/17/red!70,10/5/red!70,14/25/red!70,15/22/red!70,18/23/red!70,19/1/red!70,1/0/blue,3/12/blue,4/14/blue,7/15/blue,9/20/blue,10/21/blue,11/26/blue,12/6/blue,17/22/blue,20/4/blue,24/16/blue,25/8/blue,26/25/blue,27/28/blue,1/19/red!70,4/22/red!70,5/10/red!70,11/4/red!70,13/0/red!70,14/23/red!70,15/2/red!70,17/7/red!70,18/16/red!70,22/15/red!70,23/18/red!70,27/11/red!70,0/1/blue,1/18/blue,4/20/blue,5/2/blue,8/25/blue,14/4/blue,15/7/blue,16/24/blue,21/10/blue,22/17/blue,25/26/blue,26/11/blue,28/27/blue}
\newcommand{\charge}[4]{\arraycolsep=1.2pt\def\arraystretch{0.75}\begin{array}{cc}
  #2 & #1 \\
  #3 & #4
 \end{array}}
\tikzstyle arrowstyle=[scale=1]
  \tikzstyle{vertex}=[fill=black, circle,inner sep = 1.5pt,]
  \tikzstyle{selected vertex} = [vertex, fill=red!24]
  \tikzstyle{selected edge} = [draw,line width=5pt,-,red!50]
  \tikzstyle{edge} = [draw,ultra thick,-,black]
\tikzstyle anchorEdge=[ultra thick, postaction={decorate,decoration={markings,
    mark=at position 1 with {\arrow[arrowstyle, scale=1.5]{stealth}}}}]
\tikzstyle directed=[postaction={decorate,decoration={markings,
    mark=at position .5 with {\arrow[arrowstyle, scale=1.3]{stealth}}}}]
\tikzstyle strongEdge=[anchorEdge]
\tikzstyle pseudoStrongEdge=[anchorEdge]
\tikzstyle shortcutEdge=[edge]
\tikzstyle addedEdge=[anchorEdge, color=purple, dotted]
 \tikzstyle{yaoEdge} = [edge, thick, directed]
\title{There are Plane Spanners of Maximum Degree 4}
\author{
Nicolas Bonichon\thanks{LaBRI, Universit\'{e} Bordeaux
  1. {bonichon@labri.fr}. This work was partially supported by ANR grant JCJC EGOS ANR-12-JS02-002-01.} \hspace{0.75cm}
Iyad Kanj\thanks{School of Computing, DePaul University. {\{ikanj,lperkovic\}@cs.depaul.edu}.} \hspace{0.75cm}
  Ljubomir Perkovi\'{c}\footnotemark[2] \hspace{0.75cm}
  Ge Xia\thanks{Department of Computer Science, Lafayette College. {xiag@lafayette.edu}.}
}
\date{}
\begin{document}

\maketitle
 
\thispagestyle{empty}

\begin{abstract}
Let $\cE$ be the complete Euclidean graph on a set of points embedded
in the plane. Given a constant $t \geq 1$, a spanning subgraph $G$
of $\cE$ is said to be a $t$-\emph{spanner}, or simply a spanner,
if for any pair of vertices $u,v$ in $\cE$ the distance between $u$ and
$v$ in $G$ is at most $t$ times their distance in $\cE$. A spanner is
{\em plane} if its edges do not cross.

This paper considers the question: ``What is the smallest {\em maximum degree}
that can always be achieved for a {\em plane} spanner of $\cE$?''
Without the planarity constraint, it is known that the answer is 3 which is
thus the best known lower bound on the degree of any plane spanner. With the
planarity requirement, the best known upper bound on the maximum degree is 6,
the last in a long sequence of results improving the upper bound. In this paper
we show that the complete Euclidean
graph always contains a plane spanner of maximum degree at most 4 and make a big
step toward closing the question. Our construction leads to an efficient
algorithm for obtaining the spanner from Chew's $L_1$-Delaunay triangulation.
\end{abstract}

\newpage
\setcounter{page}{1}
\section{Introduction}
\label{sec:intro}

Let $\cE$ be the complete Euclidean graph on a set of points $P$ embedded
in the plane. Given a constant $t \geq 1$, a spanning subgraph $G$ of $\cE$ is
said to be a $t$-\emph{spanner}, or simply a spanner, if for any
pair of vertices $u,v$ in $\cE$ the distance between $u$ and $v$ in $G$
is at most $t$ times their distance in $\cE$. The constant $t$ is referred
to as the {\em stretch factor}. A spanner is {\em plane}
if its edges do not cross.

In this paper, we consider the following question: {\em What is the smallest
maximum degree that can always be achieved for plane spanners of complete
Euclidean graphs?} Or, to put it more precisely: {\em What is the smallest $d$
such that for some constant $t \geq 1$ there always exists a plane
$t$-spanner of maximum degree at most $d$ on any set of points on the plane?}
This fundamental question was raised by Bose and Smid~\cite{BS10} in their
recent survey of geometric problems. It is a natural extension to classical
questions on spanners of complete Euclidean graphs, and Delaunay triangulations
in particular.

In the mid-1980s, the fundamental question of  whether a plane spanner of $\cE$
always exists was considered. In his seminal 1986 paper, Chew answered the question in the
affirmative~\cite{Che86}. He proved, in particular, that the $L_1$-Delaunay
triangulation of $P$, i.e. the dual of the Voronoi diagram of $P$ based on
the $L_1$-distance, is a $\sqrt{10}$-spanner of $\cE$.
Chew's result was followed by a series of papers demonstrating that other
Delaunay triangulations are plane spanners as well. In 1987, Dobkin
{\it et al.}~\cite{DFS90} were successful in showing that the (classical)
$L_2$-Delaunay triangulation of $P$, i.e. the dual of the Voronoi diagram
of $P$ based on the $L_2$-distance (i.e., the Euclidean distance) is a spanner
as well. The bound on
the stretch factor they obtained was subsequently improved by Keil and
Gutwin~\cite{KG92} as shown in Table~\ref{ta:related}. In the meantime,
Chew~\cite{Che89} showed that the $TD$-Delaunay triangulation---again a dual
of a Voronoi diagram but this time defined using a distance function
based on an equilateral triangle rather than a square ($L_1$-distance) or a
circle ($L_2$-distance)---is a $2$-spanner.

The bound on the stretch factor of an $L_2$-Delaunay triangulation by Keil
and Gutwin stood unchallenged for many years until Xia recently improved the
bound to below 2~\cite{Xia13} (see Table~\ref{ta:related}).
Recently as well, Bonichon {\em et al.}~\cite{BGHP12} improved Chew's
original bound on the stretch factor of the $L_1$-Delaunay triangulation to
$\sqrt{4+2\sqrt{2}}$ and showed this bound to be tight.

\begin{table}[!b]
\label{ta:related}
\begin{center}
\begin{tabular}{lrr}
{\bf Paper} &  {\bf Spanner} & {\bf Stretch factor bound} \\ \hline
Chew~\cite{Che86} & $L_1$-Delaunay & $\sqrt{10} \approx 3.16$ \\ \hline
Bonichon {et al.}~\cite{BGHP12} & $L_1$-Delaunay &
 $\mathbf{\sqrt{4+2\sqrt{2}}\approx 2.61}$ \\ \hline
Dobkin {\it et al.}~\cite{DFS90} & $L_2$-Delaunay & $\frac{\pi(1+\sqrt{5})}{2} \approx 5.08$ \\ \hline
Keil \& Gutwin~\cite{KG92} & $L_2$-Delaunay & $\frac{4\pi}{3\sqrt{3}} \approx 2.42$ \\ \hline
\cite{Xia13} & $L_2$-Delaunay & $1.998$ \\ \hline
Chew~\cite{Che89} & $TD$-Delaunay & $\mathbf{2}$ \\ \hline
\end{tabular}
\end{center}
\caption{Key results on (unbounded degree) plane spanners; tight bounds are
in bold.}
\end{table}

Minimizing the stretch factor of a plane spanner of $\cE$ is one natural goal.
Another
one is minimizing the maximum degree of the plane spanner. This restriction
eliminates, for example, the various Delaunay triangulations because they can
have unbounded degree. The lower bound on the maximum degree of a spanner is 3,
because a Hamiltonian path through a set of $n$ points arranged in a grid has
stretch factor $\Omega(\sqrt{n})$. Work on bounded degree {\em but not
necessarily
plane} spanners of $\cE$ closely followed the above-mentioned work on plane
spanners. In a 1992 breakthrough, Salowe \cite{Sal94} proved the existence
of spanners of maximum degree at most 4. The question was then resolved by Das
and Heffernan~\cite{DH96} who showed that spanners of maximum degree at most 3
always exist.

The focus in this line of research was to prove the existence of low degree
spanners and the techniques developed to do so were not tuned towards
constructing spanners that had both low degree {\em and} low stretch factor.
Furthermore, the bounded-degree spanners shown to exist were not guaranteed to be plane.
In recent years, bounded degree plane spanners have been used as the
building block of wireless network topologies. Emerging wireless
distributed system technologies, such as wireless ad-hoc and
sensor networks, are often modeled as proximity graphs in the Euclidean
plane. Spanners of proximity graphs represent topologies that can be
used for efficient unicasting, multicasting, {\em and/or}
broadcasting. For these applications, in addition to low stretch factor,
spanners are typically required to be plane and have bounded degree.
The planarity requirement is for efficient routing (see~\cite{planerouting}),
while the bounded degree requirement is due to the physical limitations of
wireless devices (see~\cite{boundedrouting}).

\begin{table}
\label{ta:related2}
\begin{center}
\begin{tabular}{lrr}
{\bf Paper} & {\bf $\Delta$} & {\bf Stretch factor bound} \\ \hline
Bose {\it et al.}~\cite{BGS05a} & 27 & $(\pi+1) C_0 \approx 8.27$ \\ \hline
Li and Wang~\cite{LW04} & 23 & $(1+\pi \sin \frac{\pi}{4}) C_0 \approx 6.43$ \\ \hline
Bose {\it et al.}~\cite{BSX09} & 17 & $(2+2\sqrt{3} + \frac{3\pi}{2} +
2\pi\sin(\frac{\pi}{12})) C_0 \approx 23.56 $\\ \hline
Kanj and Perkovi\'{c}~\cite{KP08} & 14 &
$(1+\frac{2\pi}{14\cos(\frac{\pi}{14})}) C_0 \approx 2.91$\\ \hline
Bonichon {\it et al.}~\cite{BGHP10} & 6 & 6 \\ \hline
Bose {\it et al.}~\cite{BCC12} & 6 & $1/(1-\tan(\pi/7)(1+1/\cos(\pi/14)))C_0 \approx 81.66$ \\ \hline
{\bf This paper} & 4 & $\sqrt{4+2\sqrt{2}} (1+\sqrt{2})^2 (3+\sqrt 2)^6 \approx 112676$ \\ \hline
\end{tabular}
\end{center}
\caption{Results on plane spanners with maximum degree bounded by $\Delta$. The
constant $C_0 = 1.998$ is the best known upper
bound on the stretch factor of the $L_2$-Delaunay triangulation~\cite{Xia13}.
The stretch factor bound in this paper can be made much tighter with a more
careful analysis.}

\end{table}

Bose {\it et al.}~\cite{BGS05a} were the first to show how to extract
a spanning subgraph of the classical $L_2$-Delaunay triangulation that is a
bounded-degree, plane spanner of $\cE$. The maximum degree and stretch factor
bounds they obtained were subsequently improved by Li and Wang~\cite{LW04}, by Bose
{\it et al.}~\cite{BSX09}, and by Kanj and Perkovi\'{c}~\cite{KP08}
(see all bounds in Table~\ref{ta:related2}). The approach used in all
these results was to extract a bounded degree spanning subgraph of the classical
$L_2$-Delaunay triangulation and the main goal was to obtain a bounded-degree
plane spanner of $\cE$ with the smallest possible stretch factor.

Recently, Bonichon {\it et al.}~\cite{BGHP10} focused on lowering the bound
on the maximum degree of a plane spanner and developed a new approach.
Instead of using the classical $L_2$-Delaunay triangulation
as the starting point of the spanner construction, they used the $TD$-Delaunay
triangulation defined by Chew~\cite{Che89}. They achieved a significant
decrease in the bound on the maximum degree: from 14 down to 6. The plane
spanner they constructed also had a surprisingly small stretch factor of 6.
Independently, Bose {\it et al.}~\cite{BCC12} have also been able
to obtain a plane spanner of maximum degree at most 6, by starting from
the $L_2$-Delaunay triangulation; the spanner they obtain has the additional
property of being {\em strong} which means that between every pair of
vertices $u$ and $v$ there is, in the spanner, a path that consists of edges
whose length is no more than the Euclidean distance between $u$ and $v$.

In this paper, we push the bound on the maximum degree of a plane spanner
from 6 down to 4 and make a big step toward closing a fundamental question.
Interestingly, the starting point for our spanner construction is Chew's
original $L_1$-Delaunay triangulation, a graph that has been largely
overlooked in the last quarter century. We define this triangulation, and the
equivalent $L_\infty$-Delaunay triangulation, in the next section. In
Section~\ref{se:yao}, we introduce a key tool: a directed
version of the $L_\infty$-distance-based Yao graph $\Y4$ introduced by Bose
{\it et al.}~\cite{BDD10}. {\em En passant}, we prove that $\Y4$ is a plane
$\sqrt{20+14\sqrt{2}} \approx 6.3$-spanner of $\cE$. Then,
in Section~\ref{se:anchors}, we define
{\em standard paths} between the endpoints of every edge in $\Y4$. In
Section~\ref{se:H8}, we construct a subgraph $H_8$ of $\Y4$ of maximum
degree at most 8 and show that it is a
spanner by proving that it contains short standard paths.
Finally, in Section~\ref{se:H4}, we show that some edges in $H_8$ are
redundant and we remove them, while adding new shortcut edges, to obtain
$H_4$, a spanner of maximum degree at most 4. While the proofs in the paper
are quite
technical, our construction leads to a simple and efficient algorithm for
computing the spanner.

\section{Preliminaries}
\label{sec:prelim}

Let $P$ be a set of points in the two-dimensional Euclidean space. The
Euclidean graph $\cE$ of $P$ is the complete weighted graph embedded in the
plane whose nodes are identified with the points of $P$. We assume that a
coordinate system is associated with the Euclidean plane and thus every
point can be specified by its $x$ and $y$ coordinates. For every pair of
nodes $u$ and $w$, we identify edge $(u,w)$ with the straight line segment
$[uw]$ and associate an edge length equal to the Euclidean distance
$d_2(u,w) = \sqrt{d_x(u,w)^2 + d_y(u,w)^2}$ where $d_x(u,w)$
(resp. $d_y(u,w)$) is the difference between the $x$ (resp. $y$)
coordinates of $u$ and $w$. Given a constant $t \geq 1$, we say that a
subgraph $H$ of a graph $G$ is a \emph{$t$-spanner}, or simply a {\em spanner},
of $G$ if for any pair of vertices $u,v$ of $G$, the distance between $u$ and
$v$ in $H$ is at most $t$ times the distance between $u$ and $v$ in $G$; the
constant $t$ is referred to as the \emph{stretch factor} of $H$ (with respect
to $G$).  We will say that $H$ is a $t$-spanner, or simply a spanner, if it is
a $t$-spanner of $\cE$.

In the introduction we defined the $L_1$-Delaunay triangulation
as the dual of the Voronoi diagram based on the $L_1$-distance defined as
$d_1(u,w) = d_x(u,w) + d_y(u,w)$ for two points $u$ and $w$.
In this paper, our working definition is an alternate but equivalent one.
Let a {\em square} in the plane be a square whose sides are
parallel to the $x$ and $y$ axes and let a {\em tipped square} be a square
tipped at $45^\circ$. For every pair of points $u,v \in P$, $(u,v)$ is an
edge in the {\em $L_1$-Delaunay triangulation} of $P$ iff there is a tipped
square that has $u$ and $v$ on its boundary and is {\em empty} (i.e.,
it contains no point of $P$ in its interior.
The assumption in this definition is that the points of
$P$ are in {\em general position} which implies that no four points lie on the
boundary of a tipped
square. With this assumption, an $L_1$-Delaunay triangulation is indeed
a plane graph whose interior faces are all triangles.

If a square with sides parallel to the $x$ and $y$ axes, rather than
a tipped square, is used in the above definition then a different triangulation
is obtained; it corresponds to the dual of the Voronoi diagram based on the
$L_\infty$-distance $d_\infty(u,w) = \max\{d_x(u,w), d_y(u,w)\}$. Here again
the assumption is that points of $P$ are in {\em general position} which in
{\em this} case implies that no four points lie
on the boundary of a square. We refer to the resulting triangulation as
the $L_\infty$-Delaunay triangulation. This triangulation is
nothing more than the $L_1$-triangulation of the set of points $P$ after
rotating all the points by $45^\circ$ around the origin. Therefore
Chew's bound of $\sqrt{10}$ on the stretch factor of the $L_1$-Delaunay
triangulation (\cite{Che86})
applies to $L_\infty$-Delaunay triangulations as well. In the remainder
of this paper, we will be using $L_\infty$-Delaunay (rather than $L_1$-)
triangulations because we will be (mostly) using the $L_\infty$-distance,
and squares rather than tipped squares.

In order to avoid technical difficulties we make the usual assumption that
points
of $P$ are in {\em general position} which for us means that 1) no four points
lie on the boundary of a square and 2) no two points have the same $x$ or $y$
coordinate. Note that it is always possible to perturb the points slightly so
they end up in general position and so that a plane spanner on the
perturbed points corresponds to a plane spanner on the original points.
Therefore, the main result in this paper holds for all sets of points and not
just for points in general position.

\section{A Yao subgraph of the $L_\infty$-Delaunay triangulation}
\label{se:yao}

In this section we describe the first step in the construction of our spanner
of $\cE$, the complete Euclidean graph on a set of points $P$. The result
of the first step is a version of the Yao subgraph of $\cE$ on four cones and
first defined by Bose {\it et al.}~\cite{BDD10}.

A \emph{cone} is the open region in the plane between two rays that emanate
from the
same point. With every point $u$ of $P$ we associate four disjoint $90^\circ$
cones emanating from $u$: they are defined by the translation of the $x$-
and $y$-axis from the origin to point $u$ and exclude
the translated axes. We label the cones 0, 1, 2, and 3,
in counter-clockwise order, starting with the cone corresponding to the
first quadrant. Given a cone $i$, the counter-clockwise next cone is cone
$i+1$, whereas the clockwise next cone is cone $i-1$; we assume that arithmetic
on the labels is done modulo 4 so that cone $i+1$ and cone $i-1$ are well
defined. Our general position assumption ensures that no point lies on the
boundary of another point's cone.


Given two points $v$ and $w$, we define $R(v,w)$ to be the rectangle,
with sides parallel to the $x$ and $y$ axes, having $v$ and $w$ as vertices.
The rectangle has positive area because (of our general assumption that)
no two points share the same $x$ or $y$ coordinate.
For a point $v$ and cone $i$ of $v$, we denote by $S_v^i(s)$ the
$s \times s$ square having $v$ as a vertex and whose two sides match the
boundary of cone $i$ of $v$, and by $S_v^i$ the square $S_v^i(s)$ with the
largest $s$ that contains no points of $P$ in its interior
(see Figure~\ref{fi:yao}).

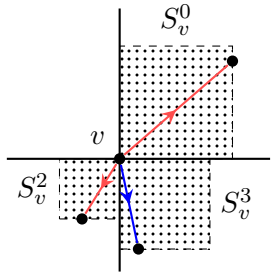
\begin{figure}[!b]
\begin{center}
\begin{tikzpicture}
\draw [thick] (-1.5,0) -- (2,0) node[right] {};
\draw [thick] (0,-1.5) -- (0,2) node[above] {};
\draw (0,0) node[fill,circle,inner sep= 1.5pt,label=135:{$v$}] (u) {};;

\draw (.75,.75) node[draw,dashed,pattern=dots,minimum width = 1.5cm, minimum height=1.5cm,label=90:{$S_v^0$}] {};
\draw (1.5,1.3) node[fill,circle,inner sep = 1.5pt] (v0) {};;
\draw [yaoEdge, color=red!70] (u) -- (v0);

\draw (-.4,-.4) node[draw,dashed,pattern=dots,minimum width = .8cm, minimum height=.8cm,label=180:{$S_v^2$}] {};
\draw (-.5,-.8) node[fill,circle,inner sep = 1.5pt] (v2) {};;
\draw [yaoEdge, color=red!70] (u) -- (v2);

\draw (.6,-.6) node[draw,dashed,pattern=dots,minimum width = 1.2cm, minimum height=1.2cm,label=0:{$S_v^3$}] {};
\draw (0.25,-1.2) node[fill,circle,inner sep = 1.5pt] (v3) {};;
\draw [yaoEdge, color=blue] (u) -- (v3);


\end{tikzpicture}
\caption{Definition of $S_v^i$ and orientation of edges in $\oY4$.}
\label{fi:yao}
\end{center}
\end{figure}

The following is the first step of our spanner construction:
\begin{step}
\label{st:yao}
For every node $v$ of $P$, we choose in each non-empty cone of $v$ the shortest
edge of $\cE$ incident to $v$ according to the $L_\infty$-distance, 
breaking ties arbitrarily, and we give it an orientation out of $v$. 
(See Figure~\ref{fi:yao}.)
\end{step}

We name the resulting directed graph $\oY4$ and denote an edge of $\oY4$ from
node $v$ to node $w$ using notation $\edge{v,w}$ (see 
Figure~\ref{fi:bigexample}-(a)). If edge $\edge{v,w}$ is in $\oY4$ then
$w$ must lie on the boundary of $S_v^i$ for some cone $i$. Because for
every $\edge{v,w} \in \oY4$ there is an empty square with $v$ and
$w$ on its boundary, $(v,w)$ must be an edge in the $L_\infty$-Delaunay
triangulation $T$ of the points in $P$ (see Figure~\ref{fi:bigexample}-(a)). 
Thus the undirected graph obtained by removing the orientations of edges in
$\oY4$ is a subgraph of $T$ which we denote as $\Y4$ (just as in~\cite{BDD10}).
For a given edge $(v,w) \in \Y4$ it is possible that orientation
$\edge{v,w} \in \oY4$, that orientation $\edge{w,v} \in \oY4$, or
that both orientations are in $\oY4$. We will call $(v,w)$ {\em uni-directional}
in the first two cases and {\em bi-directional} in the third case. 

If an edge $(u,v)$ of $\Y4$ is in cone $i$ of $u$ then it must also be
in cone $i+2$ of $v$. One possibility is that $(u,v)$ is the only edge incident
to $u$ in its cone $i$ {\em and} the only edge incident to $v$ in its cone
$i+2$; we call such an edge a {\em mutually-single} edge and note that it
must be bi-directional. If that is not the case, there 
must be either two or more edges of $\Y4$ incident to $u$ in its cone $i$, or
two or more edges of $\Y4$ incident to $v$ in its cone $i+2$, or both.
We call edge $(u,v)$ {\em dual} if there are
two or more edges of $\Y4$ incident to $u$ in its cone $i$ {\em and}
two or more edges of $\Y4$ incident to $v$ in its cone $i+2$.
Finally, given a node $u$ and cone $i$ of $u$, we define the
{\em fan} of $u$ in cone $i$ to be the sequence, in counter-clockwise order,
of all edges of $\Y4$ incident to $u$ in its cone $i$.

\begin{figure}
\begin{center}
\begin{tikzpicture}
\draw [thick] (-.5,0) -- (3.1,0) node[right] {};
\draw [thick] (0,-.5) -- (0,3.1) node[above] {};

\draw (0,0) node[fill,circle,inner sep= 1.5pt,label=225:{$u$}] (u) {};;

\fill [draw, dashed, pattern=dots] (0,2.9)--(.7,2.9)--(.7,2.5)--(1.5,2.5)--(1.5,1.35)--(2,1.35)--(2,.9)--(2.9,.9)--(2.9,0)--(0,0)--(0,2.9);

\draw (.7,2.9) node[fill,circle,inner sep = 1.5pt,label=0:{$v_4$}] (v0) {};
\draw [thick, color=red!70] (v0) -- (u);

\draw (1.5,2.5) node[fill,circle,inner sep = 1.5pt,label=0:{$v_3$}] (v1) {};
\draw [yaoEdge, color=red!70] (v1) -- (u);

\draw (2,1.35) node[fill,circle,inner sep = 1.5pt,label=0:{$v_2$}] (v2) {};
\draw [yaoEdge, color=red!70] (v2) -- (u);

\draw (2.9,.9) node[fill,circle,inner sep = 1.5pt,label=0:{$v_1$}] (v3) {};
\draw [thick, color=red!70] (v3) -- (u);


\draw (1,-.75) node {(a)};
\end{tikzpicture}
\begin{tikzpicture}
\draw [thick] (-.5,0) -- (3.1,0) node[right] {};
\draw [thick] (0,-.5) -- (0,3.1) node[above] {};

\draw (0,0) node[fill,circle,inner sep= 1.5pt,label=225:{$u$}] (u) {};;

\fill [draw, dashed, pattern=dots] (0,2.9)--(1.5,2.9)--(1.5,2.5)--(2,2.5)--(2,1.35)--(2.9,1.35)--(2.9,0)--(0,0)--(0,2.9);

\draw (.7,2.9) node[fill,circle,inner sep = 1.5pt,label=45:{$v_4$}] (v0) {};
\draw [thick, color=red!70] (v0) -- (u);

\draw (1.5,2.5) node[fill,circle,inner sep = 1.5pt,label=45:{$v_3$}] (v1) {};
\draw [yaoEdge, color=red!70] (v1) -- (u);

\draw (2,1.35) node[fill,circle,inner sep = 1.5pt,label=45:{$v_2$}] (v2) {};
\draw [yaoEdge, color=red!70] (v2) -- (u);

\draw (2.9,.9) node[fill,circle,inner sep = 1.5pt,label=0:{$v_1$}] (v3) {};
\draw [thick, color=red!70] (v3) -- (u);


\draw (1,-.75) node {(b)};
\end{tikzpicture}
\begin{tikzpicture}
\draw [thick] (-.5,0) -- (3.1,0) node[right] {};
\draw [thick] (0,-.5) -- (0,3.1) node[above] {};

\draw (0,0) node[fill,circle,inner sep= 1.5pt,label=225:{$u$}] (u) {};;

\draw (.7,2.9) node[fill,circle,inner sep = 1.5pt,label=90:{$v_4$}] (v0) {};
\draw (1.5,2.5) node[fill,circle,inner sep = 1.5pt,label=45:{$v_3$}] (v1) {};

\draw [yaoEdge, color=blue] (v0) -- (v1);
\draw (1.1,2.5) node[draw,dashed,pattern=dots,minimum width = .8cm, minimum height=.8cm,label=180:{$S_{v_4}^3$}] {};

\draw (2,1.35) node[fill,circle,inner sep = 1.5pt,label=45:{$v_2$}] (v2) {};

\draw [yaoEdge, color=blue] (v2) -- (v1);
\draw (1.425,1.925) node[draw,dashed,pattern=dots,minimum width = 1.15cm, minimum height=1.15cm,label=190:{$S_{v_2}^1$}] {};

\draw (2.9,.9) node[fill,circle,inner sep = 1.5pt,label=0:{$v_1$}] (v3) {};

\draw [yaoEdge, color=blue] (v2) -- (v3);
\draw (2.45,0.9) node[draw,dashed,pattern=dots,minimum width = .9cm, minimum height=.9cm,label=190:{$S_{v_2}^3$}] {};

\draw (1,-.75) node {(c)};
\end{tikzpicture}
\end{center}
\caption{Illustrations for Observation~\ref{ob:initial}.}
\label{fi:canonical2}
\end{figure}
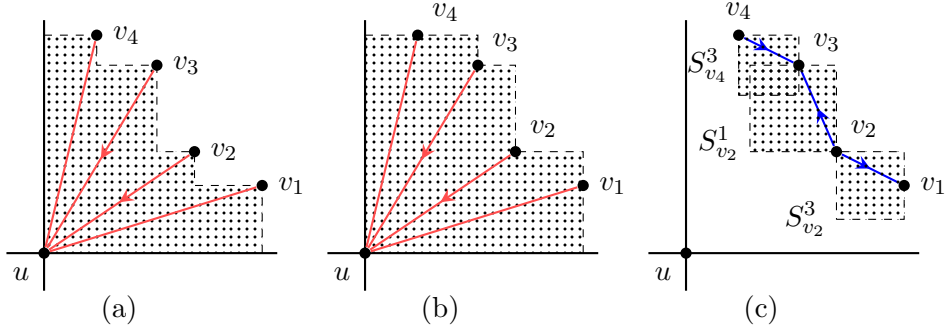

\begin{observation}
\label{ob:initial}
For every node $u$ with a fan $(u,v_1), \dots, (u,v_k)$, with $k \geq 2$, 
in its cone $i$:
\begin{enumerate}[(a)]

\item $R(u,v_l)$ is empty, for
every $l = 1, 2, ..., k$ (see Figure~\ref{fi:canonical2}-(a)).

\item For every $l \in \{2,\dots,k-1\}$, edge $(u,v_l)$ is the only
edge incident to $v_l$ in its cone $i+2$ and $\edge{v_l,u} \in \oY4$
(see Figure~\ref{fi:canonical2}-(a)).

\item For $l \in \{1,k\}$, if edge $(u,v_l)$ is not dual then $(u,v_l)$ is
the only edge incident to $v_l$ in its cone $i+2$
and $\edge{v_l,u} \in \oY4$.

\item For every $l \in \{1, \dots, k-1\}$, $v_l$ lies in cone $i+3$ of
$v_{l+1}$, $v_{l+1}$ lies in cone $i+1$ of $v_l$, and $R(v_l,v_{l+1})$
is empty (see Figure~\ref{fi:canonical2}-(b)).


\item For every $l \in \{1, ..., k-1\}$, 
{\bf if} $d_1(v_l,u) \leq d_1(v_{l+1},u)$
(resp. $d_1(v_l,u) \geq d_1(v_{l+1},u)$) {\bf then} $\edge{v_l, v_{l+1}} \in \oY4$
(resp. $\edge{v_{l+1},v_l} \in \oY4$); furthermore, if edge $(v_l, v_{l+1})$ is
uni-directional then the converse is also true
(see Figure~\ref{fi:canonical2}-(c)).
\end{enumerate}
\end{observation}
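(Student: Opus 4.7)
Set up coordinates with $u$ at the origin and $i=0$, so cone $0$ is the open first quadrant; each $v_l = (x_l, y_l)$ has $x_l, y_l > 0$, and the fan hypothesis gives $\theta_l := \arctan(y_l/x_l)$ strictly increasing in $l$. The plan is to prove the six items in the order (a), (d), (b), (c), (e), each building on the previous. Part (a) is immediate: since $(u, v_l) \in \Y4$, one of $\edge{u, v_l}, \edge{v_l, u}$ lies in $\oY4$, yielding an empty axis-aligned square ($S_u^0$ or $S_{v_l}^2$) with $u$ and $v_l$ on its boundary and containing $R(u, v_l)$.

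The angular component of (d) follows from (a) and the strict fan ordering: if $x_{l+1} \geq x_l$, then $\theta_{l+1} > \theta_l$ forces $y_{l+1} > y_l$, placing $v_l$ inside $R(u, v_{l+1})$ and contradicting (a); hence $x_{l+1} < x_l$ and symmetrically $y_{l+1} > y_l$. For the emptiness of $R(v_l, v_{l+1})$, pick $p \in P$ strictly inside with minimum $d_\infty(u, p)$. Minimality rules out any point of $P$ in the first quadrant of $u$ strictly inside $S_p^2(d_\infty(u, p))$ (else one finds a point of $R(v_l, v_{l+1})$ with smaller $d_\infty(u, \cdot)$ than $p$). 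Any point strictly inside that square lying in the second or fourth quadrant of $u$ would be strictly $L_\infty$-closer to $v_{l+1}$ (resp.\ $v_l$) in cone $2$ than $u$ is, so $\edge{v_{l+1}, u} \notin \oY4$ (resp.\ $\edge{v_l, u} \notin \oY4$); since $(u, v_{l+1}) \in \Y4$ (resp.\ $(u, v_l) \in \Y4$), this forces $\edge{u, v_{l+1}} \in \oY4$ (resp.\ $\edge{u, v_l} \in \oY4$), contradicting that the other of $v_l, v_{l+1}$ is strictly $L_\infty$-closer to $u$ in cone $0$. Thus $\edge{p, u} \in \oY4$, and $(u, p) \in \Y4$ would extend the fan at $u$ strictly between $v_l$ and $v_{l+1}$, contradicting the hypothesis.

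For (b) and (c), assume WLOG $y_l \geq x_l$, so $d_\infty(u, v_l) = y_l$ and the candidate empty square $S_{v_l}^2(y_l) = [x_l - y_l, x_l] \times [0, y_l]$ decomposes into $R(u, v_l)$ (empty by (a)) and a left strip in cone $1$ of $u$. For $l \in \{2, \ldots, k-1\}$, any point $q$ strictly inside the strip is $L_\infty$-strictly closer to $v_{l+1}$ in cone $2$ of $v_{l+1}$ than $u$ is, and the cascade used in (d) yields a contradiction; hence the strip is empty, $\edge{v_l, u} \in \oY4$, and an analogous cascade applied to any hypothetical other neighbor of $v_l$ on the boundary of $S_{v_l}^2(y_l)$, together with general position, gives uniqueness. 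Part (c) is the same argument for boundary indices $l \in \{1, k\}$: the non-dual hypothesis limits $v_l$ to at most one $\Y4$-edge in cone $2$, which the strip argument then identifies as $(u, v_l)$ with $\edge{v_l, u} \in \oY4$.

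For (e), the identity $d_1(v_l, u) \leq d_1(v_{l+1}, u) \iff x_l - x_{l+1} \leq y_{l+1} - y_l$ yields $d_\infty(v_l, v_{l+1}) = y_{l+1} - y_l$, and the candidate empty square $S_{v_l}^1(y_{l+1} - y_l) = [x_l - (y_{l+1} - y_l), x_l] \times [y_l, y_{l+1}]$ decomposes into $R(v_l, v_{l+1})$ (empty by (d)) and a left strip handled exactly as in (b). Hence $\edge{v_l, v_{l+1}} \in \oY4$. The uni-directional converse follows because only one orientation can lie in $\oY4$, and the $L_1$-comparison uniquely picks it. The main obstacle throughout is the strip-emptiness cascade: it requires carefully combining an empty Delaunay/Yao square at $v_{l \pm 1}$ with the $L_\infty$-ordering of the fan at $u$, together with case splits on which quadrant of $u$ the candidate point lies in and on which of $\edge{u, v_{l \pm 1}}, \edge{v_{l \pm 1}, u}$ is in $\oY4$, plus some bookkeeping for boundary indices.
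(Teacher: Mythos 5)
Your plan is a genuinely different route from the paper's: you work in coordinates and try to prove everything by empty-square/$L_\infty$-minimality ``cascades,'' whereas the paper's proof of parts (b) and (d) leans essentially on the planarity of $\Y4$ (as a subgraph of the $L_\infty$-Delaunay triangulation $T$) to exclude extra or crossing edges, and its part (e) covers $S_{v_l}^{i+1}(d_\infty(v_l,v_{l+1}))$ by $R(v_l,v_{l+1})\cup S_{v_{l+1}}^{i+2}$ (or $S_u^i$ in the dual case). Unfortunately, as written your cascade terminates in a false claim. In part (d) you derive $\edge{u,v_{l+1}}\in\oY4$ (resp.\ $\edge{u,v_l}\in\oY4$) and declare this ``contradicting that the other of $v_l,v_{l+1}$ is strictly $L_\infty$-closer to $u$''; but neither fan neighbour need be $L_\infty$-closer to $u$ than the other. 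Take $u=(0,0)$, $v_l=(10,1)$, $v_{l+1}=(2,9)$, $p=(3,8)$: here $d_\infty(u,v_l)=10>9=d_\infty(u,v_{l+1})$, so the stated contradiction evaporates in exactly the second-quadrant case you invoke it for. A correct contradiction does exist (in that case $d_\infty(u,p)=\max(a,b)<d_\infty(u,v_{l+1})$, so $p$ itself lies strictly inside the forced-empty square $S_u^i$; in the fourth-quadrant case likewise with $v_l$), and similarly case-dependent witnesses ($q$, $p$, or $v_l$) must be supplied in the (b)- and (e)-instances of the cascade, but none of this is what you wrote, and the blanket phrase ``the cascade used in (d)'' propagates the error.

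The second, more structural gap is the uniqueness half of part (b): that $(u,v_l)$ is the \emph{only} edge of $\Y4$ incident to $v_l$ in its cone $i+2$. You dispose of this by applying ``an analogous cascade to any hypothetical other neighbor of $v_l$ on the boundary of $S_{v_l}^{2}(y_l)$,'' but a second edge in that cone need not have its other endpoint anywhere near that square: an edge oriented \emph{into} $v_l$, i.e.\ $\edge{w,v_l}\in\oY4$, only requires $v_l$ to lie on the boundary of $S_w^{i}$, and $w$ can be arbitrarily far inside cone $i+2$ of $v_l$. Your empty-square toolkit says nothing about such a $w$ (e.g.\ $w$ with $w_x<x_l-y_l$, $0<w_y<y_l$: neither $u$ nor $v_{l\pm1}$ is forced into the interior of $S_w^i(d_\infty(w,v_l))$), and this is precisely where the paper invokes planarity: the segment $wv_l$ would cross $(u,v_{l-1})$ or $(u,v_{l+1})$ because $R(u,v_l)$ is empty and $\Y4\subseteq T$. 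Without that ingredient (or a substitute) part (b) is not proved. Two smaller points: inferring $\edge{p,u}\in\oY4$ or $\edge{v_l,u}\in\oY4$ from emptiness of a square's interior ignores possible $L_\infty$ ties on its boundary, which the stated general-position assumption does not exclude and Step~\ref{st:yao} breaks arbitrarily (the paper's route via uniqueness of the cone's $\Y4$-edge avoids this); and for part (c) your ``strip argument'' is unavailable on one side when $l=k$ (there is no $v_{k+1}$), though this is harmless since (c) follows directly from the non-dual hypothesis together with the existence of an outgoing edge of $v_l$ in its non-empty cone $i+2$.
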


\begin{proof}
Since $(u,v_l) \in \Y4$, there is an empty square with $u$ and $v_l$ on its
boundary. Rectangle $R(u,v_l)$ is
contained in this square and thus has no point of $P$ in its interior which
proves part~{\em (a)}.

For every $l \in \{2,\dots,k-1\}$, since $R(u, v_l)$ is empty, 
any edge other than $(u, v_l)$ incident to $v_l$ in its cone $(i+2)$ must either
intersect edge $(u, v_{l-1})$ or $(u, v_{l+1})$, contradicting the planarity of
$\Y4$ (recall that $\Y4$ is a subgraph of the $L_\infty$-Delaunay triangulation
$T$). Thus, for every
$l \in \{2,\dots,k-1\}$, $(u,v_l)$ is the only edge
incident to $v_l$ in its cone $i+2$. By construction of $\oY4$,
$\edge{v_l,u}$ must be in $\oY4$ and $S_{v_l}^{i+2}$ must have $u$ on its
boundary which proves part {\em (b)}. The previous statement is also true for
$l \in \{1,k\}$ if $(u,v_l)$ is the only edge incident to $v_l$ in its cone
$i+2$, which proves part {\em (c)}. 

Since edge $(u,v_{l+1})$ is counter-clockwise from edge $(u,v_l)$ inside cone
$i$ of $u$, for every $l \in \{1,\dots,k-1\}$,
node $v_{l+1}$ cannot be in cone $i+3$ of $v_l$. Node $v_{l+1}$
cannot be in cone $i$ of $v_l$ because rectangle $R(u,v_{l+1})$ would then
contain node $v_l$. Node $v_{l+1}$ also cannot be in cone $i+2$ of $v_l$ because rectangle $R(u,v_l)$
would then contain point $v_{l+1}$.
So $v_{l+1}$ must be in cone $i+1$ of $v_l$ and thus $v_l$ is in cone $i+3$
of $v_{l+1}$. If rectangle $R(v_l,v_{l+1})$ contains a point of $P$ for some
$l=1,\dots,k-1$, let $w$ be a point inside $R(v_l,v_{l+1})$ such that
rectangle $R(w,u)$ is empty. Because $u$ lies inside cone $i+2$ of $w$,
by construction of $\oY4$ there must be an edge in $\oY4$ out of $w$ in
its cone $i+2$. Because $R(w,u)$ is empty, any edge incident to $w$ in its
cone $i+2$ whose endpoint is not $u$ would have to intersect 
edge $(u, v_l)$ or $(u, v_{l+1})$, contradicting the planarity of
$\Y4$. Therefore, $(u,w)$ would have to be an edge of $\Y4$ lying between
$(u,v_l)$ and $(u,v_{l+1})$ in cone
$i$ of $u$ which contradicts our assumption and proves part~{\em (d)}.

Let $d_1(v_l,u) \leq d_1(v_{l+1},u)$ for some $l \in \{1, ..., k-1\}$ (the case
$d_1(v_l,u) \geq d_1(v_{l+1},u)$ is symmetric). We assume first that 
$u$ lies on the boundary of $S_{v_{l+1}}^{i+2}$; from the above proof of
part~{\em (b)}, that is not the case only if $l=k-1$ and $(u,v_{l+1}=v_k)$ is
dual. Since square $S = S_{v_l}^{i+1}(d_\infty(v_l,v_{l+1}))$ lies inside
$R(v_l,v_{l+1}) \cup S_{v_{l+1}}^{i+2}$, $S$ must be empty and no point of $P$
other than $v_l$ and $v_{l+1}$ lies on the boundary of $S$. If $l = k-1$,
$(u,v_{l+1}=v_k)$, but $u$ does not lie on the boundary of $S_{v_k}^{i+2}$
then, by construction of $\oY4$, $v_k$ must lie on the boundary of $S_u^i$.
In that case, $S = S_{v_l}^{i+1}(d_\infty(v_l,v_{l+1}))$ lies inside 
$R(v_l,v_{l+1}) \cup S_u^i$ and so $S$ is empty and no point of $P$
other than $v_l$ and $v_{l+1}$ lies on the boundary of $S$. Since $v_{l+1}$ lies
on the boundary of $S$, square $S_{v_l}^{i+1}$ is exactly square $S$.
Since no point of $P$ other than $v_l$ and $v_{l+1}$ lies on the boundary of
$S_{v_l}^{i+1}$, $\edge{v_l,v_{l+1}} \in \oY4$ and {\em (e)} follows.
\end{proof}

Let $(u,v_1), (u,v_2), ..., (u,v_k)$, with $k \geq 2$, be the fan of $u$ 
in its cone $i$. We call $(u,v_1)$ and $(u,v_k)$ the {\em first} and {\em last}
edge, respectively, in cone $i$ of $u$.\footnote{The first and the last edge in
a cone of $u$ are defined only if there are two or more edges incident to $u$ in
the cone.} We call any remaining edge $(u,v_l)$ ($1 < l < k$) a {\em middle
edge of $u$}; we say that an edge is a {\em middle edge} if it is a
{\em middle edge} of one of its endpoints.
By Observation~\ref{ob:initial}{\em-(e)}, $(v_1,v_2)$, $(v_2,v_3)$, ...,
$(v_{k-1},v_k)$ are all edges
in $\Y4$. We call these edges {\em canonical edges of $u$ in its cone $i$} and
we say that an edge is {\em canonical} if it is a canonical edge of some
node. We make a few observations to differentiate middle,
dual, and canonical edges:
\begin{observation}
\label{ob:initial1.5}
Let $(u,v)$ be an edge of $\Y4$ lying in cone $i$ of $u$.
\begin{enumerate}[(a)]

\item If $(u,v)$ is dual then $(u,v)$ is the first edge in cone $i$ of $u$
and cone $i+2$ of $v$ or the last edge in cone $i$ of $u$ and cone $i+2$ of $v$.


\item If $(u,v)$ is a uni-directional canonical edge such that
$\edge{v,u} \in \oY4$ then $(u,v)$ is the first or last edge in cone $i$
of $u$, the only edge in cone $i+2$ of $v$, and a canonical edge of just one
node.

\item $(u,v)$ can belong to at most one of the following categories: middle,
dual, or canonical.
\end{enumerate}
\end{observation}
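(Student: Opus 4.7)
For part (a), I plan to proceed in two stages. First, to show that a dual edge must be first or last in each of $u$'s cone-$i$ and $v$'s cone-$(i+2)$ fan, I would apply Observation~\ref{ob:initial}(b) at both endpoints: if $(u,v)$ is a middle element of, say, $u$'s fan in cone $i$, then $(u,v)$ is the only edge incident to $v$ in cone $i+2$, so $v$'s cone-$(i+2)$ fan cannot have two or more elements and $(u,v)$ cannot be dual. Second, to establish the first-first / last-last matching, I would argue by contradiction, assuming $v$ is first in $u$'s fan but $u$ is not first in $v$'s fan; then there is a $w$ strictly preceding $u$ in $v$'s fan. Observation~\ref{ob:initial}(d) forces $w$ upper-left of $v$ and also forces the CCW successor $v_2$ of $v$ in $u$'s fan to be upper-left of $v$. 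The two possible $\oY4$-witnesses for $(v,w)\in\Y4$ and the witness for $(u,v_2)\in\Y4$ impose opposite bounds on $|w_x|$ that cannot hold simultaneously. Carrying out this distance bookkeeping cleanly across the witness cases is the main obstacle.

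For part (c), the middle-dual case is immediate from Observation~\ref{ob:initial}(b). For middle-canonical, suppose $(u,v)$ is both middle of $u$'s fan and canonical of some $x$; by the two-case analysis, $x$ lies in cone $i+1$ or $i+3$ of both $u$ and $v$. In the cone-$(i+1)$ case, the empty-rectangle condition $R(x,v)=\emptyset$ from Observation~\ref{ob:initial}(a) at $x$'s fan forces the CCW successor $v_{l+1}$ of $v$ in $u$'s fan to lie above $x$ in $y$-coordinate. But then $x$ sits in cone $i+2$ of $v_{l+1}$ and is strictly $L_\infty$-closer to $v_{l+1}$ than $u$ is, so $\edge{v_{l+1},u}\notin\oY4$; combined with the fact that $v$ is $L_\infty$-closer to $u$ than $v_{l+1}$ is in $u$'s cone $i$ (hence $\edge{u,v_{l+1}}\notin\oY4$), we obtain $(u,v_{l+1})\notin\Y4$, contradicting $v_{l+1}$'s presence in $u$'s fan. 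The cone-$(i+3)$ case is symmetric using $v_{l-1}$. The dual-canonical case leverages part (a): the dual edge sits at an extreme of both fans, so the second fan members are placed in the direction of the canonical parent, where the same empty-rectangle and closest-point interferences rule out one of them.

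For part (b), the canonical-of-one-node claim is the cleanest: the only candidate parents lie in cone $i+1$ or $i+3$ of both $u,v$, and Observation~\ref{ob:initial}(e) in the uni-directional case forces $d_1(v,p)<d_1(u,p)$ strictly for each parent $p$. Taking $v-u=(a,b)$ in cone $i=0$, a direct $L_1$ computation yields $d_1(u,p)-d_1(v,p)=b-a$ when $p$ lies in cone $i+1$ and $a-b$ when $p$ lies in cone $i+3$; both cannot be positive, so there is exactly one such parent. The remaining two conclusions of (b) then follow by invoking part~(c) already established: a canonical edge is neither middle nor dual, so $v$ must be first or last in $u$'s cone-$i$ fan, and Observation~\ref{ob:initial}(c) (applicable precisely because $(u,v)$ is not dual) makes $(u,v)$ the unique edge at $v$ in cone $i+2$.
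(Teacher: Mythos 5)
Your decomposition starts like the paper's (dual $\Rightarrow$ extreme in both fans via Observation~\ref{ob:initial}(b)), but every place where you replace the paper's planarity arguments by metric bookkeeping is where the proof breaks. In part (a) you leave the first--first/last--last matching as an admitted obstacle, and the mechanism you propose (``opposite bounds on $|w_x|$'' from the $\oY4$-witnesses) is neither carried out nor the actual obstruction: the paper's short argument is that if $(u,v)$ is first at $u$ but some $(v,w)$ precedes $(v,u)$ at $v$, then Observation~\ref{ob:initial}(d), applied to both fans, places $w$ in cone $i+1$ of $u$ and the next fan member $v_2$ in cone $i+1$ of $v$, after which the segments $(v,w)$ and $(u,v_2)$ properly cross, contradicting planarity of $\Y4$; no distance comparison is needed, and the witness squares do not obviously give one. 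In part (c) (middle--canonical case) both of your key claims are unjustified and false as stated: knowing only that $v_{l+1}$ lies above the parent $x$ does not give $d_\infty(v_{l+1},x) < d_\infty(v_{l+1},u)$ --- since $x$ lies to the left of $u$, the horizontal spread from $v_{l+1}$ to $x$ is strictly larger than to $u$, so whenever horizontal spreads dominate, $x$ is strictly \emph{farther}; and nothing forces $d_\infty(u,v) < d_\infty(u,v_{l+1})$, because fan members are ordered by angle, not by length, so $\edge{u,v_{l+1}} \notin \oY4$ does not follow from what you cite. The contradiction in that configuration is again geometric: either $v_{l+1}$ lies inside the empty rectangle $R(x,v)$, or $(u,v_{l+1})$ crosses the edge $(x,v)$ of $\Y4$ --- which is in the spirit of the paper, where (c) is deduced from (a), (b) and Observation~\ref{ob:initial}(b). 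Your dual--canonical case is only sketched and leans on the same ``closest-point interference'' and on the unfinished part (a).

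In part (b), the $d_1$ computation via Observation~\ref{ob:initial}(e) is a genuinely nice alternative to the paper's planarity argument for showing that all parents of a uni-directional canonical edge lie on one side (the differences $b-a$ and $a-b$ cannot both be positive), but it does not yield ``exactly one'' parent: two distinct parents lying in the same cone pair, i.e.\ on the same side of $(u,v)$, are untouched by the sign argument, and excluding them again requires planarity plus emptiness (one parent is forced inside the triangle formed by the other with $u,v$, a region covered by the empty rectangles $R(p,u)\cup R(p,v)\cup R(u,v)$). The remaining conclusions of (b) you derive from (c) together with Observation~\ref{ob:initial}(c), so they inherit the gap in (c); you would also need to record that $\edge{v,u}\in\oY4$ with $(u,v)$ uni-directional forces a second edge of $\Y4$ at $u$ in cone $i$, so that ``first or last'' is even defined. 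In short, the missing idea throughout is the paper's systematic use of the planarity of $\Y4$ (crossing and empty-region arguments); the $L_\infty$- and $L_1$-comparisons you substitute for it are not implied by the observations you cite.
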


\begin{proof}
By Observation~\ref{ob:initial}{\em-(b)}, if $(u,v)$ is dual then it must be
the first or last edge in cone $i$ of $u$ and cone $i+2$ of $v$. W.l.o.g. we
assume that $(u,v)$ is first in cone $i$ of $u$. Since $R(v,u)$ is empty
(Observation~\ref{ob:initial}{\em-(a)}) and because $\Y4$ is planar,
any edge incident to $v$ in its cone $i+2$ other than $(v,u)$ must be
counter-clockwise from $(v,u)$ in cone $i+2$ of $v$. Thus $(v,u)$ must also
be first in cone $i+2$ of $v$ which proves part {\em (a)}. 

If $(u,v)$ is a canonical edge of some node $w$ then, by
Observation~\ref{ob:initial}{\em-(d)}, $w$ must lie in cones $i-1$ of $u$
and $v$ or in cones $i+1$ of $u$ and $v$. W.l.o.g. we assume the former.
If $(u,v)$ is also uni-directional and $\edge{v,u} \in \oY4$ then there must
be at least one more edge of $\Y4$ incident to $u$ in its cone $i$. Let 
$(u,v_1), \dots, (u,v_k)$, with $k \geq 2$, be the fan of $u$ in its cone $i$.
If $v = v_l$ for some $l > 1$ then either $v_1$ is contained in $R(v,w)$,
which contradicts Observation~\ref{ob:initial}{\em-(a)}, or $(v,w)$ intersects
$(u,v_1)$, which contradicts the planarity of $\Y4$. Therefore, $v$ must be
$v_1$. This same argument can be used to show that $(u,v)$ cannot be a canonical
edge of a node in cones $i+1$ of $u$ and $v$. Thus $(u,v)$ is a canonical edge
of node $w$ only. Because, by Observation~\ref{ob:initial}{\em-(a)}, $R(u,w)$
and $R(u,v)$ are empty, any edge incident to $v$ in its cone $i+2$ other than
$(u,v)$ would have
to intersect $(u,v_2)$ or $(u,w)$, which contradicts the planarity of $\Y4$.
Therefore, $(u,v)$ is the only edge in cone $i+2$ of $v$, which completes the
proof of part {\em (b)}.

If $(u,v)$ is a middle edge then, by Observation~\ref{ob:initial}{\em-(b)},
it cannot be the first or the last edge in cone $i$ of $u$ and in cone $i+2$
of $v$. This, together with parts {\em (a)} and {\em (b)}, proves part
{\em (c)}.
\end{proof}

\begin{figure}
\begin{center}
\begin{tikzpicture}
\draw (0.2,-1) node[fill,circle,inner sep = 1.5pt,label=270:{$u$}] (u) {};
\draw (0.8,1) node[fill,circle,inner sep = 1.5pt,label=90:{$v$}] (v) {};
\draw [thick, dashed, color=red!70] (u) -- (v);

\draw (.7,0) node[draw,loosely dotted,minimum width = 2cm, minimum height=2cm,label=180:{$S$}] {};

\draw (.95,-.25) node[draw,dashed,minimum width = 1.5cm, minimum height=1.5cm,label=-60:{$S_u^0$}] {};

\draw (1.1,.4) node[draw,densely dotted,minimum width = 1.2cm, minimum height=1.2cm,label=60:{$S_w^1$}] {};

\draw (1.7,-.2) node[fill,circle,inner sep = 1.5pt,label=0:{$w$}] (w) {};
\draw [yaoEdge, color=red!70] (u) -- (w);
\draw [yaoEdge, color=blue] (w) -- (v);

\end{tikzpicture}
\caption{Proof of Observation~\ref{ob:initial2}.}
\label{fi:yaospanner}
\end{center}
\end{figure}
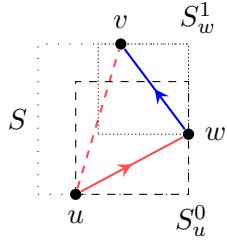

\begin{lemma}
\label{ob:initial2}
$\Y4$ is a plane $(1+\sqrt{2})$-spanner of the $L_\infty$-Delaunay
triangulation $T$ and also a $(1+\sqrt{2})\sqrt{4+2\sqrt{2}}$-spanner of $\cE$.
\end{lemma}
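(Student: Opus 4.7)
The lemma has two parts; the second is an immediate consequence of the first since the $L_\infty$-Delaunay triangulation $T$ is a $\sqrt{4+2\sqrt{2}}$-spanner of $\cE$ by the Bonichon et al.~\cite{BGHP12} bound on the $L_1$-Delaunay triangulation (the two triangulations are congruent under a $45^\circ$ rotation, as recalled in Section~\ref{sec:prelim}). Composing the $(1+\sqrt{2})$-stretch of $\Y4$ relative to $T$ with the $\sqrt{4+2\sqrt{2}}$-stretch of $T$ relative to $\cE$ gives the stated bound relative to $\cE$.

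So the heart of the argument is showing that $\Y4$ is a $(1+\sqrt{2})$-spanner of $T$. It suffices to do this edge-by-edge: I plan to show that for every edge $(u,v)\in T$ there is a path in $\Y4$ of Euclidean length at most $(1+\sqrt{2})\,d_2(u,v)$, and to extend to arbitrary pairs of vertices by summing along a shortest path in $T$. By the four-fold cone symmetry I may assume $v$ lies in cone $0$ of $u$.

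Let $w$ be the Yao pick from $u$ in cone $0$, so $(u,w) \in \oY4 \subseteq \Y4$. If $w=v$, the single edge $(u,v)\in\Y4$ already yields the bound. Otherwise $d_\infty(u,w) < d_\infty(u,v)$. Since $(u,v)\in T$ admits an empty witness square $S$, the rectangle $R(u,v)\subseteq S$ is empty; since $w\ne v$ lies in cone $0$ of $u$ and cannot sit inside $R(u,v)$, the point $w$ is pinned to a specific wedge of the empty Yao square $S_u^0$, and $v$ must lie in cone $1$ or cone $3$ of $w$. The key geometric claim, illustrated in Figure~\ref{fi:yaospanner}, is that $(w,v) \in \Y4$ as well: one of the squares $S_w^1(d_\infty(w,v))$ or $S_v^3(d_\infty(w,v))$ is empty, so at least one orientation of $(w,v)$ belongs to $\oY4$. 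This yields a two-edge path $u\to w\to v$, and a short trigonometric computation verifies $d_2(u,w)+d_2(w,v)\leq (1+\sqrt{2})\,d_2(u,v)$, with equality approached as $w$ moves toward the corner of $S_u^0$ diagonally opposite $u$---so no slack is available in the analysis.

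The main obstacle is the configuration in which additional points of $P$ simultaneously block both $S_w^1$ and $S_v^3$, so that neither orientation of $(w,v)$ lies in $\oY4$. For this case I plan to invoke the fan of $u$ in cone $0$ guaranteed by Observation~\ref{ob:initial}, which provides additional $\Y4$-edges incident to $u$ in cone $0$ together with canonical edges linking the consecutive endpoints of the fan. An induction on the number of blocking points---using that each such point strictly decreases the $L_\infty$-distance from the current vertex to $v$, so the recursion terminates---should route from $u$ to $v$ through the fan and its canonical edges while preserving the $(1+\sqrt{2})$ stretch bound.
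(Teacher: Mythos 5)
Your reduction of the second claim to the first (composing with the $\sqrt{4+2\sqrt{2}}$ bound of~\cite{BGHP12}) and your edge-by-edge strategy for Delaunay edges match the paper. The difficulty is in the core claim, and there your argument has a genuine gap. You take $w$ to be the Yao pick of $u$ in cone $0$ and hope that $(w,v)\in\Y4$; as you yourself concede, nothing forces $S_w^1(d_\infty(w,v))$ or $S_v^3(d_\infty(w,v))$ to be empty, so the two-hop path can simply fail to exist. Your fallback --- an induction through the fan of $u$ and its canonical edges, terminating because the $L_\infty$-distance to $v$ decreases --- is only a termination argument, not a length argument: the fan endpoints do not include $v$ (since $(u,v)\notin\Y4$), so you must recurse from some fan vertex toward $v$ along pairs that need not be Delaunay edges, and you give no reason the accumulated detour stays within $(1+\sqrt{2})\,d_2(u,v)$. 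This is especially problematic because, as you note, the two-hop bound is already tight (equality is approached when $w$ sits at the far corner of $S_u^0$), so there is no slack left to absorb extra canonical edges; the constant would have to be re-derived, and your sketch does not do so.

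The paper sidesteps the hard case entirely by choosing the intermediate vertex differently. For a Delaunay edge $(u,v)\notin\Y4$ it starts from the empty witness square $S$ of $(u,v)$: if $u,v$ lie on adjacent sides of $S$, then $S_u^0$ or $S_v^2$ sits inside $S$ and one orientation of $(u,v)$ is already in $\oY4$, a contradiction; otherwise $u,v$ lie on opposite sides, and $S$ is translated until a third point $w$ of $P$ lies on a lateral side of $S$ (Figure~\ref{fi:yaospanner}). Because this translated square is empty, the sub-squares $S_u^0(d_x(u,w))$ or $S_w^2(d_y(u,w))$ (and the analogous ones for $w,v$) are contained in $S$ and hence empty, which forces \emph{both} $(u,w)$ and $(w,v)$ into $\Y4$ with no case in which the detour is blocked; the position of $w$ on the side of $S$ between the sides carrying $u$ and $v$ then yields $d_2(u,w)+d_2(w,v)\le(1+\sqrt{2})\,d_2(u,v)$ directly. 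In short: the witness square of the Delaunay edge, not the Yao square of $u$, is what guarantees the two-edge detour, and without that idea (or a worked-out replacement for your blocked case with an explicit length analysis) your proof is incomplete.
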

\begin{proof}
Let $(u,v)$ be an edge in $T$ that is not in $\Y4$
and let $(u,v)$ lie in cone $i$ of $u$.
Since $(u,v) \in T$, there exists a square $S$ circumscribing
$(u,v)$ and containing no points of $P$ in its interior. W.l.o.g, we assume
that $v$ is in cone 0 of $u$. If $u$ and $v$ lie on adjacent sides of
$S$ then either $S_u^0$ is contained in $S$ and has $v$ on its boundary,
implying $\edge{u,v} \in \oY4$, or $S_v^2$ is contained in $S$ and has
$u$ on its boundary, implying $\edge{v,u} \in \oY4$, contradicting the
assumption that $(u,v) \not\in \Y4$. Therefore we can assume that $u$
and $v$ lie in the interior of opposite sides of
the square $S$, say the bottom and top sides, respectively. We can also
assume that $S$ also has another point, $w$,
on its boundary, say on the interior of the right side of $S$ (otherwise we
translate $S$ to the right until that occurs) as shown in
Figure~\ref{fi:yaospanner}. Because $S$ is devoid of points of $P$,
edges $(u,w)$ and $(w,v)$ are in $T$. Since $d_x(u,w)$ and $d_y(u,w)$ are
both less than $d_y(u,v)$, it follows that
$d_\infty(u,w) < d_\infty(u,v)$. If $d_x(u,w) \geq d_y(u,w)$ then
square $S_u^0(d_x(u,w))$ is contained inside square $S$ and is thus empty
and $\edge{u,w} \in \oY4$; otherwise
square $S_w^2(d_y(u,w))$ is empty and $\edge{w,u} \in \oY4$. So $(u,w) \in \Y4$.
A similar argument can be used to show that $(w,v)$ is in $\Y4$ as well. Given
that $u$ and $v$ lie on the bottom and top sides of square $S$ and $w$ lies on
the right side of $S$, it follows that
$d_2(u,w)+d_2(w,v) \leq (1+\sqrt{2}) d_2(u,v)$.

Since the $L_\infty$-Delaunay triangulation is a $\sqrt{4+2\sqrt{2}}$-spanner of
the complete Euclidean graph~\cite{BGHP12}, it follows that $\Y4$ is a 
$(1+\sqrt{2})\sqrt{4+2\sqrt{2}}$-spanner of $\cE$.
\end{proof}


\section{Anchors and standard paths}
\label{se:anchors}

In cone $i$ of some node $u$, either there is no edge
of $\Y4$ incident to $u$ or there is exactly one edge of $\oY4$ out of $u$ and
any number of edges of $\oY4$ into $u$. In this section we describe how, under
certain conditions, we choose a special {\em anchor} among all those edges.
We then use anchors to define special paths between endpoints of every
edge in $\Y4$. We start with some definitions:

\begin{definition} \rm
\label{de:maxuni}
Let $(u,v_1),\dots,(u,v_k)$ be the fan of $u$ in its cone $i$. For
any $r,s \in \{1,\dots,k\}$, we define $\cpath{u}{v_s,v_r}$ to be the path
$v_s,v_{s+1},\dots,v_r$ (if $s \leq r$) or $v_s,v_{s-1},\dots,v_r$ (if $s > r$)
in $\Y4$. We will say that $\cpath{u}{v_s,v_r}$
is a {\em uni-directional canonical path} if every edge in
the path is uni-directional and $v_s,\dots,v_r$ forms a directed path from
$v_s$ to $v_r$ in $\oY4$. A uni-directional canonical path {\em ending at $v_r$}
is {\em maximal} is it is not contained in any other uni-directional canonical
path ending at $v_r$.
\end{definition}
For instance, in Figure~\ref{fi:bigexample}{\em-(a)}, $(u_9,u_3)$ and $(u_9,u_{20})$ belong to the
fan of $u_9$ hence the path $\cpath{u_9}{u_3,u_{20}}$ is well-defined. The path
$\cpath{u_9}{u_{20},u_{17}}$ is a maximal uni-directional canonical path ending
at $u_{17}$.
Note that if $v_s,v_{s+1},\dots,v_r$ is a {\em maximal uni-directional canonical
path ending at $v_r$} then either $s=1$ or $\edge{v_s,v_{s-1}} \in \oY4$.
Similarly, if $v_s,v_{s-1},\dots,v_r$ is a {\em maximal uni-directional
canonical path ending
at $v_r$} then either $s=k$ or $\edge{v_s,v_{s+1}} \in \oY4$. We can now define
anchor edges:
\begin{definition} \rm 
\label{de:anchors}
For every node $u$ and every cone $i$ of $u$ containing an edge incident to $u$:
\begin{enumerate}[(a)]

\item If $(u,v) \in \Y4$ is a mutually-single edge in cone $i$ of $u$, we
define $(u,v)$ to be the anchor chosen by $u$ in cone $i$.

\item If there are two or more edges of $\Y4$ incident to $u$ in its cone $i$,
let $(u,v_1), \dots, (u,v_k)$, with $k \geq 2$, be the fan of $u$ in cone $i$
and let $\edge{u,v_l}$, for some $l \in \{1,...,k\}$, be the only edge of
$\oY4$ in cone $i$ of $u$ that is outgoing with respect to $u$:
\begin{enumerate}[(i)]

\item If $l \geq 2$ and $\edge{v_{l-1}, v_l} \in \oY4$ but
$\edge{v_l, v_{l-1}} \not\in \oY4$, let $l' < l$ be such that
$\cpath{u}{v_{l'},v_l}$ is a maximal uni-directional canonical path ending
at $v_l$; we define $(u,v_{l'})$ to be the anchor chosen by $u$ in cone $i$.

\item Otherwise, if $l \leq k-1$ and $\edge{v_{l+1}, v_l} \in \oY4$ but
$\edge{v_l, v_{l+1}} \not\in \oY4$, let $l' > l$ be such that
$\cpath{u}{v_{l'},v_l}$ is a maximal uni-directional canonical path ending
at $v_l$; we define $(u,v_{l'})$ to be the anchor chosen by $u$ in cone $i$.

\item Otherwise, we define $(u,v_{l'}) = (u,v_l)$ to be the anchor chosen
by $u$ in cone $i$.

\end{enumerate}
\end{enumerate}

\end{definition}
We use the notation $\anchor_i(u)$ to denote the anchor edge chosen by node
$u$ in its cone $i$. In Figures~\ref{fi:bigexample}{\em-(b)-(d)},
an anchor $(u,v_{l'})$ is represented
by a thick edge with an arrow toward $v_{l'}$ at the end of the edge.
In Figure~\ref{fi:bigexample}{\em-(b)}, $\anchor_0(u_{22})=(u_{22},u_{15})$
illustrates case (a), $\anchor_3(u_9) = (u_9,u_{20})$ illustrates
  case (b)-(i) and $\anchor_3(u_{22}) = (u_{22},u_{24})$ illustrates
  case (b)-(iii).
Note that if there is only one edge $(u,v)$ of $\Y4$
incident to $u$ in its cone $i$ but $v$ has two or more edges of $\Y4$
incident to it in its cone $i+2$, then $\anchor_i(u)$ is not
defined. For instance in Figure~\ref{fi:bigexample}{\em-(b)}, this is the case
for $\anchor_3(u_{14})$ and $\anchor_1(u_7)$. It is always true, however,
that if $(u,v)$ is an edge lying in cone $i$ of $u$ then either $\anchor_i(u)$
or $\anchor_{i+2}(v)$ is defined. We use this to define a special type
of path for every edge $(u,v) \in \Y4$:

\begin{definition} \rm
Let $(u,v)$ be an edge of $\Y4$ lying in cone $i$ of $u$ such that
$\anchor_i(u) = (u,v')$ is defined. The {\em $1$-standard path} from $u$ to $v$
consists of edge $(u,v')$ together with $\cpath{u}{v,v'}$.
\end{definition}
Since $\cpath{u}{v,v'} \in \Y4$, there is a $1$-standard path from $u$ to $v$ or
from $v$ to $u$ for every
edge $(u,v) \in \Y4$. If $(u,v)$ is a dual edge in $\Y4$, there is a
$1$-standard path from $v$ to $u$ as well as one from $u$ to $v$. The same
is true if $(u,v)$ is a mutually-single edge in cone $i$ of $u$.
\begin{lemma}
\label{le:L1path}
Let $(u,v)$ be an edge of $\Y4$ lying in cone $i$ of $u$ such that
$\anchor_i(u) = (u,v')$ is defined. Then (as illustrated in
Figure~\ref{fi:path}-(b)):
\begin{enumerate}[(a)]

\item $d_2(u,v') \leq 2 d_2(u,v)$,

\item the length of any edge in $\cpath{u}{v,v'}$ is at most
$\sqrt{2}d_2(u,v)$, and

\item the length of $\cpath{u}{v,v'}$ is at most $(1+\sqrt{2})d_2(u,v)$.
\end{enumerate}
The length of the $1$-standard path from $u$ to $v$ is thus at most
$(3+\sqrt{2})d_2(u,v)$.
\end{lemma}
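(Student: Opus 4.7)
The plan is to handle parts (a), (b), (c) in sequence and then combine. Take $i=0$ without loss of generality; let $(u,v_1),\dots,(u,v_k)$ be the fan of $u$ in cone $0$, with $\edge{u,v_l}$ the unique $\oY4$-edge outgoing from $u$ in this cone (so $v_l$ minimizes $d_\infty(u,\cdot)$ over the fan), and write $v=v_r$, $v'=v_{l'}$ (so $l'=l$ in case (b)(iii) and $l'=r$ in case (a) of Definition~\ref{de:anchors}). For part (a), in the non-trivial cases (b)(i)-(ii), $v_{l'}$ is the far endpoint of a maximal uni-directional canonical path ending at $v_l$, so iterating Observation~\ref{ob:initial}{\em-(e)} along this path gives $d_1(u,v_{l'})\le d_1(u,v_l)$. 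Chaining $d_2\le d_1$, $d_1(u,v_l)\le 2\,d_\infty(u,v_l)$ (since $v_l$ has nonnegative coordinates), and $d_\infty(u,v_l)\le d_\infty(u,v)\le d_2(u,v)$ yields $d_2(u,v')\le 2\,d_2(u,v)$; the cases $v'=v$ and $v'=v_l$ are immediate.

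For part (b), each edge $(v_j,v_{j+1})$ has $v_{j+1}$ in cone 1 of $v_j$ by Observation~\ref{ob:initial}{\em-(d)}, so its $L_2$-length is at most $\sqrt{2}\,d_\infty(v_j,v_{j+1})$. Writing $a_j=x_j-x_{j+1}>0$ and $b_j=y_{j+1}-y_j>0$, we have $a_j<x_j$ and $b_j<y_{j+1}$. On the uni-directional portion (indices between $l'$ and $l$), Observation~\ref{ob:initial}{\em-(e)} forces $b_j>a_j$ in case (b)(i) (and symmetrically $a_j>b_j$ in (b)(ii)), so $d_\infty(v_j,v_{j+1})=\max(a_j,b_j)\le\max(y_l,x_l)=d_\infty(u,v_l)$. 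Off this portion (indices in $[l,r-1]$ when $r>l$), $a_j\le x_l\le d_\infty(u,v_l)$ and $b_j\le y_r\le d_\infty(u,v_r)$. Hence $d_\infty(v_j,v_{j+1})\le d_2(u,v)$ in every case, and (b) follows.

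Part (c) proceeds by case analysis on the position of $v=v_r$ in the fan. The routine sub-cases ($v$ between $v_{l'}$ and $v_l$, or on the far side of $v_{l'}$ from $v_l$, or case (b)(iii)) each give a bound of at most $2\,d_2(u,v)<(1+\sqrt{2})\,d_2(u,v)$ by summing $L_2\le a_j+b_j$ across the monotonic staircase. The tight sub-case is when the canonical path crosses $v_l$---e.g., case (b)(i) with $r>l$ (case (b)(ii) with $r<l$ is symmetric). We split the path into the beyond portion $v_r,\dots,v_l$ of length at most $(x_l-x_r)+(y_r-y_l)$ (via $L_2\le a_j+b_j$ per edge) and the uni-directional portion $v_l,\dots,v_{l'}$ of length at most $\sqrt{2}(y_l-y_{l'})$ (via the sharper $L_2<\sqrt{2}\,b_j$ valid when $b_j>a_j$). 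After dropping the non-positive $-x_r$ and $-\sqrt{2}\,y_{l'}$ terms, the total is $\le y_r+x_l+(\sqrt{2}-1)\,y_l\le y_r+\sqrt{2}\max(x_l,y_l)$, where the last inequality just uses $x_l,y_l\le\max(x_l,y_l)$. Since $\max(x_l,y_l)=d_\infty(u,v_l)\le d_2(u,v)$ and $y_r\le d_\infty(u,v)\le d_2(u,v)$, the total is $\le(1+\sqrt{2})\,d_2(u,v)$. The principal obstacle is exactly this crossing sub-case: naively using $L_2\le a_j+b_j$ on every edge yields only a $(2+\sqrt{2})$ bound, so one must localize the $\sqrt{2}$-saving to the uni-directional portion and combine it with the elementary identity $x_l+(\sqrt{2}-1)\,y_l\le\sqrt{2}\max(x_l,y_l)$ to reach the tight constant. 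Finally, the $1$-standard path bound $(3+\sqrt{2})\,d_2(u,v)$ follows by adding $d_2(u,v')\le 2\,d_2(u,v)$ from (a) to the $(1+\sqrt{2})\,d_2(u,v)$ bound on $\cpath{u}{v,v'}$ from (c).
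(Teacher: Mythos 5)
Your proof is correct and follows essentially the same route as the paper: part (a) via the $d_1$-monotonicity along the maximal uni-directional canonical path (Observation~\ref{ob:initial}\emph{-(e)}), part (b) via per-edge $L_\infty$ bounds inside the relevant rectangles, and part (c) by splitting $\cpath{u}{v,v'}$ at $v_l$ and using the $\sqrt{2}$-saving only on the uni-directional portion. The only difference is cosmetic: your final grouping bounds the sum by $y_r+\sqrt{2}\max(x_l,y_l)$ whereas the paper groups it as $d_x(u,v_l)+\sqrt{2}d_y(u,v_r)$; both yield $(1+\sqrt{2})d_\infty(u,v_r)$.
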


\begin{figure}[!b]
\begin{center}
\begin{tikzpicture}
\draw [thick] (-.5,0) -- (3.6,0) node[right] {};
\draw [thick] (0,-.5) -- (0,3.6) node[above] {};
\draw (0,0) node[fill,circle,inner sep= 1.5pt,label=225:{$u$}] (u) {};;

\draw (.25,3) node[fill,circle,inner sep = 1.5pt,label=45:{$v=v_r$}] (vr) {};
\draw [yaoEdge,color=gray] (vr) -- (u);

\draw (1.2,2.7) node[fill,circle,inner sep = 1.5pt] (2) {};
\draw [yaoEdge, color=blue] (vr) -- (2);
\draw [yaoEdge,color=gray] (2) -- (u);

\draw (1.5,2) node[fill,circle,inner sep = 1.5pt,label=45:{$v_l$}] (vlpr) {};
\draw [yaoEdge, color=blue] (vlpr) -- (2);
\draw [yaoEdge,color=gray] (u) -- (vlpr);

\draw (2.1,1.2) node[fill,circle,inner sep = 1.5pt] (3) {};
\draw [yaoEdge, color=blue] (3) -- (vlpr);
\draw [yaoEdge,color=gray] (3) -- (u);

\draw (2.5,0.7) node[fill,circle,inner sep = 1.5pt] (4) {};
\draw [yaoEdge, color=blue] (4) -- (3);
\draw [yaoEdge,thick,color=gray] (4) --(u);

\draw (2.9,.15) node[fill,circle,inner sep = 1.5pt,label=-90:{$v'=v_{l'}$}] (vl) {};
\draw [yaoEdge,thick, color=red!70] (vl) -- (u);
\draw [anchorEdge, color=red!70] (u) -- (vl);
\draw [yaoEdge, color=blue] (vl) -- (4);



\draw [dotted] (vlpr)--(3.5,0);
\draw [dotted] (3)--(3.3,0);
\draw [dotted] (4)--(3.2,0);
\draw [dashed] (0,2)--(2,2)--(2,0);
\end{tikzpicture}\hspace{1cm}
\begin{tikzpicture}
\draw [thick] (-.5,0) -- (3.6,0) node[right] {};
\draw [thick] (0,-.5) -- (0,3.6) node[above] {};
\draw (0,0) node[fill,circle,inner sep= 1.5pt,label=225:{$u$}] (u) {};;

\draw (0,2) node[fill,circle,inner sep = 1.5pt,label=45:{$v=v_r$}] (vr) {};

\draw (2,2) node[fill,circle,inner sep = 1.5pt,label=45:{$v_l$}] (vlpr) {};
\draw [color=blue] (vlpr) -- (vr);

\draw (4, 0) node[fill,circle,inner sep = 1.5pt,label=-90:{$v'=v_{l'}$}] (vl) {};
\draw [yaoEdge, thick, color=red!70] (vl) -- (u);
\draw [anchorEdge, color=red!70] (u) -- (vl);
\draw [yaoEdge, color=blue] (vl) -- (vlpr);
\draw [yaoEdge, color=gray] (u) -- (vlpr);
\draw [dashed] (0,2)--(2,2)--(2,0);
\end{tikzpicture}

(a) \hspace{2in} (b)
\end{center}
\caption{(a) The $1$-standard path from $u$ to $v=v_r$ consists of
$\anchor_i(u) = (u,v'=v_{l'})$ (in red) and $\cpath{u}{v'=v_{l'},v=v_r}$
(in blue).
Illustrated is the case when $\edge{u,v_l} \in \oY4$ for $r > l > l'$;
$\cpath{u}{v'=v_{l'},v_l}$ is a uni-directional canonical path. (b) Illustration
of Lemma~\ref{le:L1path}.}
\label{fi:path}
\end{figure}
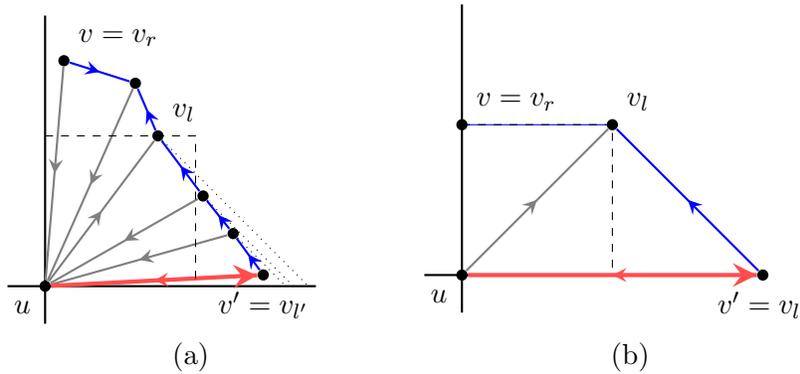

\begin{proof}
We assume w.l.o.g. that $i = 0$. If $(u,v) = (u,v')$ the lemma trivially holds.
Otherwise, let $(u,v_1), \dots, (u,v_k)$ be the
fan of $u$ in its cone $0$ and let $\edge{u,v_l} \in \oY4$, $v' = v_{l'}$,
and $v = v_r$ for some $l,l',r \in \{1, \dots, k\}$. We assume w.l.o.g. that
$r > l'$.
We assume that $r > l > l'$ as shown in Figure~\ref{fi:path}-(a) and
Figure~\ref{fi:path}-(b).
The proof for this case can be applied to prove the cases when $l \geq r$ or
$l' \geq l$. By Definition~\ref{de:anchors},
$\cpath{u}{v_{l'},v_l}$ is a maximal uni-directional canonical path ending at
$v_l$ and, using Observation~\ref{ob:initial}{\em-(e)}, we have
$d_2(u,v_{l'}) \leq d_1(u,v_{l'}) \leq d_1(u,v_l) \leq 2d_\infty(u,v_l)$.
Since $(u,v_l)$ is the shortest (with respect to the $L_\infty$-distance)
edge in cone $0$ of $u$, we have
$2d_\infty(u,v_l) \leq 2d_\infty(u,v_r) \leq 2d_2(u,v_r)$.
which proves part {\em (a)}.

Any edge $(v_s,v_{s+1})$, for $s \in \{l', \dots, l-1\}$, of
$\cpath{u}{v_{l'},v_l}$ must lie within rectangle $R(v_s,v_{s+1})$ and thus
$d_2(v_s,v_{s+1}) \leq \sqrt{2}d_\infty(v_s,v_{s+1})$. Since
$\cpath{u}{v_{l'},v_l}$ is a maximal uni-directional canonical path ending at
$v_l$ and using Observation~\ref{ob:initial}{\em-(e)}, we have that
$d_\infty(v_s,v_{s+1}) = d_y(v_s,v_{s+1})$ which in turn is at most
$d_y(v_{l'},v_l) \leq d_y(u,v_l) \leq d_\infty(u,v_l) \leq d_\infty(u,v_r) \leq d_2(u,v_r)$.
Thus $d_2(v_s,v_{s+1}) \leq \sqrt{2}d_2(u,v_r)$.
Any edge $(v_s,v_{s+1})$ for $s \in \{l, \dots, r-1\}$ must lie within
$R(v_r,v_l)$ which in turn is contained in $S_u^i(d_\infty(u,v_r))$. Therefore
$d_2(v_s,v_{s+1}) \leq \sqrt{2}d_\infty(u,v_r) \leq \sqrt{2}d_2(u,v_r)$ which
completes the proof of part {\em (b)}.

Using an argument similar to the one we used in the previous paragraph,
$d_2(v_s,v_{s+1}) \leq \sqrt{2}d_y(v_s,v_{s+1})$,
and hence, the length of $\cpath{u}{v_{l'},v_l}$ is at most
$\sqrt{2}d_y(v_{l'},v_l)$. The length of $\cpath{u}{v_r,v_l}$ is at most
$d_1(v_r,v_l) = d_x(v_r,v_l) + d_y(v_r,v_l)$. So the length of
$\cpath{u}{v_{l'},v_r}$ is at most $d_x(v_r,v_l) + d_y(v_r,v_l) +
\sqrt{2}d_y(v_{l'},v_l) \leq d_x(u,v_l) + \sqrt{2}d_y(u,v_r) \leq
d_\infty(u,v_l) + \sqrt{2}d_\infty(u,v_r) \leq (1+\sqrt{2})d_\infty(u,v_r) \leq
(1+\sqrt{2})d_2(u,v_r)$, which completes the proof.
\end{proof}

The reader can verify that the graph obtained by taking the union
of $1$-standard paths defined over all edges in $\Y4$, is a
$(3+\sqrt{2})$-spanner of $\Y4$ of maximum degree at most 12. 
We omit the proof because we do not make use of this fact in the
rest of the paper.

\begin{definition} \rm
\label{de:strongweak}
An anchor $(u,v)$ chosen by $u$ in its cone $i$ is {\em strong} if
$\anchor_{i+2}(v) = (v,u)$ or if $\anchor_{i+2}(v)$ is not defined;
it is {\em weak} if $\anchor_{i+2}(v) \not= (v,u)$ (see
Figure~\ref{fi:bigexample}-(b)).
\end{definition}

Let $(u,v)$ be an edge of $\Y4$ lying in cone $i$ of $u$. If $(u,v)$ is
mutually-single then $(u,v)$ is a strong anchor. For
edges that are not mutually-single, we make the following observations:
\begin{observation}
\label{ob:anchors1}
Let $u$ be a node that has a fan $(u,v_1),\dots,(u,v_k)$, with $k \geq 2$,
in its cone $i$. Then $\anchor_i(u)$ is defined, and if
$\anchor_i(u) = (u,v_{l'})$ for some $l' \in \{1, \dots, k\}$ then:
\begin{enumerate}[(a)]

\item If $l'=1$ then $\edge{v_1,v_2} \in \oY4$. If $l'=k$ then
$\edge{v_k,v_{k-1}} \in \oY4$. If $1 < l' < k$ then
$\edge{v_{l'},v_{l'-1}} \in \oY4$ and $\edge{v_{l'},v_{l'+1}} \in \oY4$.


\item If $(u,v_{l'})$ is a weak anchor then it is a dual edge.

\end{enumerate}
\end{observation}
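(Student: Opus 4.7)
The plan is to directly analyze the three sub-cases of Definition~\ref{de:anchors}(b); since $k \geq 2$ rules out Definition~\ref{de:anchors}(a) and the three sub-cases are exhaustive (sub-case (iii) is the catch-all), $\anchor_i(u)$ is defined. Let $\edge{u,v_l}$ denote the unique edge of $\oY4$ outgoing from $u$ in cone $i$.

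For part (a), I would first handle sub-case (b)(i), where $l' < l$ and $\cpath{u}{v_{l'},v_l}$ is a maximal uni-directional canonical path ending at $v_l$. The fact that $v_{l'}, v_{l'+1}, \dots, v_l$ is a directed path in $\oY4$ yields $\edge{v_{l'}, v_{l'+1}} \in \oY4$, and maximality (as noted right after Definition~\ref{de:maxuni}) gives either $l' = 1$ or $\edge{v_{l'}, v_{l'-1}} \in \oY4$; this covers the first and, together, the third claim of (a) whenever we are in (b)(i). Sub-case (b)(ii) is symmetric and yields the analogous conclusions with indices reflected, supplying the second claim and the other half of the third claim. For sub-case (b)(iii), where $l' = l$, the failure of (b)(i) to trigger means that whenever $l > 1$ we cannot simultaneously have $\edge{v_{l-1},v_l} \in \oY4$ and $\edge{v_l,v_{l-1}} \notin \oY4$; since Observation~\ref{ob:initial}(e) guarantees at least one of the two orientations is in $\oY4$, this forces $\edge{v_l,v_{l-1}} \in \oY4$. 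A symmetric argument from the failure of (b)(ii) (when $l < k$) gives $\edge{v_l,v_{l+1}} \in \oY4$. The boundary cases $l = 1$ and $l = k$ of (b)(iii) use only one half of this dichotomy and deliver the first two claims of (a).

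For part (b), assume $(u,v_{l'})$ is weak, so $\anchor_{i+2}(v_{l'})$ is defined and not equal to $(v_{l'},u)$. Because $v_{l'}$ lies in cone $i$ of $u$, $u$ lies in cone $i+2$ of $v_{l'}$, and $(u,v_{l'})$ is an edge of $\Y4$ in that cone. If $(u,v_{l'})$ were mutually-single, then Definition~\ref{de:anchors}(a) would force $\anchor_{i+2}(v_{l'}) = (v_{l'},u)$, contradicting weakness. If instead $(u,v_{l'})$ were the single (but not mutually-single) edge of $\Y4$ at $v_{l'}$ in cone $i+2$, neither clause of Definition~\ref{de:anchors} would apply at $v_{l'}$ in cone $i+2$, so $\anchor_{i+2}(v_{l'})$ would be undefined, again contradicting weakness (since by Definition~\ref{de:strongweak} ``undefined'' belongs to the strong case). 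Therefore $v_{l'}$ has at least two edges of $\Y4$ incident to it in cone $i+2$, and since $u$ has $k \geq 2$ edges in cone $i$, the edge $(u,v_{l'})$ is dual.

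The main obstacle I anticipate is the careful bookkeeping in (a) between the orientations supplied directly by the maximal canonical path in sub-cases (b)(i)--(ii) and the orientations that must be \emph{forced} in sub-case (b)(iii) via Observation~\ref{ob:initial}(e), together with making sure the boundary situations $l' = 1$ and $l' = k$ are treated uniformly so that the weaker one-sided conclusion in (a) is always the one being asserted.
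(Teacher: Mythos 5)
Your proposal is correct and follows essentially the same route as the paper: part (a) is exactly the paper's appeal to Definition~\ref{de:anchors}{\em-(b)} together with Observation~\ref{ob:initial}{\em-(e)} (and the maximality remark after Definition~\ref{de:maxuni}), just spelled out case by case, and part (b) matches the paper's argument that a weak anchor forces $\anchor_{i+2}(v_{l'})$ to be defined and different from $(v_{l'},u)$, hence two or more edges at $v_{l'}$ in cone $i+2$, hence a dual edge.
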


\begin{proof}
By Definition~\ref{de:anchors}{\em-(b)}, if $u$ has a fan of size at least 2
in its cone $i$ then $\anchor_i(u)$ is defined. Part {\em (a)} follows from
Observation~\ref{ob:initial}{\em-(e)} and
Definition~\ref{de:anchors}{\em-(b)}.
For part {\em (b)}, note that if $(u,v_{l'})$ is a
weak anchor then $\anchor_{i+2}(v_{l'})$ is defined and is not $(u,v_{l'})$ which
means that there must be two or more edges of $\Y4$ incident to $v$ in its
cone $i+2$. Therefore $(u,v_{l'})$ is a dual edge.
\end{proof}

By Definition~\ref{de:anchors} and Definition~\ref{de:strongweak}, if
$\anchor_i(w_1) = (w_1,w_2)$ is a weak anchor then
$\anchor_{i+2}(w_2)$ is defined, and $\anchor_{i+2}(w_2)$ could be
either weak or strong. This means that starting from any weak anchor
$(w_1,w_2)$ there is a well-defined path of weak anchors
$\anchor_i(w_1) = (w_1,w_2)$,
$\anchor_{i+2}(w_2) = (w_2,w_3)$, $\anchor_i(w_3) = (w_3,w_4)$, ...
that would end when a strong anchor is encountered.
Furthermore, if $\anchor_i(w_1) = (w_1,w_2)$ is a weak anchor then any
other anchor incident to $w_1$ in its cone $i$ would have
to be weak. Since weak anchors are dual and there can only be two dual
edges incident to $w_1$ in its cone $i$, one of which is $(w_1,w_2)$,
there can be only one other weak anchor incident to $w_1$ in its cone $i$, say
$\anchor_{i+2}(w_0) = (w_0,w_1)$.
By repeatedly applying Observation~\ref{ob:initial}{\em-(d)} to every
successive pairs of nodes on $\cpath{w_j}{w_{j+1},w_{j-1}}$, we note that
$w_{j+1}$ is always in cone $i+3$ of $w_{j-1}$
and so the path $w_0,w_1,\dots$ cannot form a cycle. This means that we can
partition all weak anchors into maximal paths that we define as follows:
\begin{definition} \rm
A \emph{weak anchor chain} is a path
$w_0,w_1,\dots, w_k$ of maximal length consisting, for some $i \in \{0,1,2,3\}$,
of weak anchors $\anchor_i(w_0) = (w_0,w_1)$, $\anchor_{i+2}(w_1) = (w_1,w_2)$,
$\anchor_i(w_2) = (w_2,w_3)$, $\anchor_{i+2}(w_3) = (w_3,w_4)$, and so on until:
\begin{itemize}
\item if $k$ is even, weak anchor $\anchor_{i+2}(v_{k-1}) = (v_{k-1}, v_k)$ such that
$\anchor_i(v_k) = (v_k,w)$ is a strong anchor, or
\item if $k$ is odd, weak anchor $\anchor_i(v_{k-1}) = (v_{k-1}, v_k)$ such that
$\anchor_{i+2}(v_k) = (v_k,w)$ is a strong anchor.
\end{itemize}
\end{definition}

For instance, in Figure~\ref{fi:bigexample}{\em-(b)} $u_{22},
u_{24},u_{15},u_{5}$ and $u_{28},u_{26}, u_{27}$ and $u_{21},u_5$ are three
weak anchor chains.
We now select anchor edges that will actually be included in the spanner:
\begin{definition} \rm
\label{de:anchors2}
We designate all strong anchors as {\em selected}. Furthermore,
for every weak anchor chain $w_0, w_1, ..., w_k$:
\begin{enumerate}[(a)]

\item For $l = k-1, k-3, \dots$, we designate anchor $(w_{l-1}, w_l)$
(chosen by $w_{l-1}$) as {\em selected}.


\item If $(w_0,w_1)$ is not selected, i.e. $k$ is odd, then we designate
$(w_0,w_1)$ to be a {\em start-of-odd-chain anchor} (chosen by $w_0$).
\end{enumerate}
\end{definition}
In Figure~\ref{fi:bigexample}-(c), the dashed (not dotted) edges are the
selected weak anchors; start-of-odd-chain anchors include edges
$(u_3,u_{15}),(u_2,u_{10}), (u_{21},u_5),(u_{17},u_4),(u_{22},u_{24})$
and $(u_{13},u_1)$. The following observations regarding anchors are easy to
check and the proofs are left to the reader:
\begin{observation}
\label{ob:anchors2}
For every node $u$ and cone $i$ of $u$:
\begin{enumerate}[(a)]
\item There is at most one selected anchor incident to $u$ in
its cone $i$, whether the anchor is chosen by $u$ or not.
\item If $(u,v)$ is a start-of-odd-chain anchor {\em chosen by $u$} in its cone $i$
then there is no selected anchor incident to $u$ in its cone $i$.
\item If $(u,v)$ is an anchor chosen by $u$ in its cone $i$ that is not
selected, then there is a selected anchor chosen by $v$ in its cone
$i+2$.
\item If $(u,v)$ is an anchor chosen by $u$ in its cone $i$ that is not
selected and that is not a start-off-odd-chain anchor, then there is another,
selected anchor incident to $u$ in its cone $i$ (chosen by a node other than
$u$).
\end{enumerate}
\end{observation}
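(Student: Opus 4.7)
The plan is to base all four parts on two structural observations. First, I would establish that the anchors incident to $u$ in its cone $i$ are restricted to at most three candidates: $\anchor_i(u) = (u,v_{l'})$ (if defined), and the possible ``reverse'' anchors $\anchor_{i+2}(v_1)=(v_1,u)$ and $\anchor_{i+2}(v_k)=(v_k,u)$, where $(u,v_1),\dots,(u,v_k)$ is the fan of $u$ in cone $i$. This is because by Observation~\ref{ob:initial}{\em-(b)}, for any middle fan edge $(u,v_l)$ with $1<l<k$, the edge $(u,v_l)$ is the unique edge in cone $i+2$ of $v_l$, so $\anchor_{i+2}(v_l)$ is undefined (a mutually-single edge is ruled out since $k\geq 2$). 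Second, I would record that each weak anchor belongs to exactly one chain $w_0,\dots,w_k$, within which the selected edges occupy positions $l \in \{k-1,k-3,\dots\}$ --- positions that are pairwise non-consecutive --- and that the chain edge leaving a vertex $w_j$ is uniquely determined by the appropriate anchor chosen by $w_j$. Consequently, if two distinct weak anchors both arrive at $u$ as a non-terminal vertex of their chains, they would both continue through $u$ via the same edge $\anchor_i(u)$ and therefore belong to a common chain.

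For part (a), I would case-split on which of the above candidates are selected. In the case where $\anchor_i(u)=(u,v_{l'})$ with $l'\neq 1$ and $\anchor_{i+2}(v_1)=(v_1,u)$ are both selected and distinct, Definition~\ref{de:strongweak} forces the latter to be weak. If $\anchor_i(u)$ is also weak, the two edges are consecutive in a common chain, contradicting the non-consecutive property of selected positions. If $\anchor_i(u)$ is strong instead, the chain of $(v_1,u)$ terminates at $u$ with $(v_1,u)$ at the last position, which a parity count on the cone pattern reveals to be unselected --- again a contradiction. The symmetric case involving $v_k$, and the case where both reverse anchors are simultaneously selected, are handled analogously; the latter uses the fact that only one chain can leave $u$ via $\anchor_i(u)$, and that if $\anchor_i(u)$ is strong both chains terminate at $u$ at unselected positions.

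For part (b), assume $(u,v)=\anchor_i(u)$ is a start-of-odd-chain anchor, so its chain starts at $u$ and has no predecessor there. Any other candidate anchor incident to $u$ in cone $i$ takes the form $\anchor_{i+2}(x)=(x,u)$ with $x\in\{v_1,v_k\}$; it cannot be strong (strong would force $\anchor_i(u)=(u,x)$, giving $x=v$ and contradicting the weakness of $(u,v)$), and it cannot be weak (its chain would then need to continue through $u$ via $\anchor_i(u)$, contradicting that $(u,v)$ starts its own chain). Thus no such anchor exists, and $\anchor_i(u)$ itself is unselected by Definition~\ref{de:anchors2}.

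For parts (c) and (d), I would exploit the alternating structure of the chain containing the weak anchor $(u,v)=\anchor_i(u)$. Since $(u,v)$ is not selected, its position $l$ in the chain is unselected; a short parity check shows that the successor of $(u,v)$ in the chain (or the terminating strong anchor, if $(u,v)$ is the last edge) equals $\anchor_{i+2}(v)$ and sits at a selected position, proving (c). Symmetrically, because $(u,v)$ is not a start-of-odd-chain anchor, its position in the chain is at least $2$, so the chain contains a predecessor edge of the form $\anchor_{i+2}(w)=(w,u)$ for some $w\neq u$ in cone $i$ of $u$, and this predecessor sits at a selected position, proving (d). I expect the main difficulty to lie in part (a), where the case analysis must simultaneously exploit the uniqueness of the outgoing chain edge at $u$ and the parity-based ``terminal position unselected'' property; parts (b)--(d) follow more readily once the chain structure is understood.
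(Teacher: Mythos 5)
The paper gives no proof of this observation---it is explicitly ``left to the reader''---so there is no official argument to compare yours against; judged on its own, your proof is correct and stays entirely within the paper's framework. Your candidate restriction is sound: by Observation~\ref{ob:initial}\emph{-(b)} (and \emph{-(c)}) only $\anchor_i(u)$ and the reverse anchors $\anchor_{i+2}(v_1)=(v_1,u)$ and $\anchor_{i+2}(v_k)=(v_k,u)$ can be incident to $u$ in its cone $i$, and your chain bookkeeping (the selected positions $k-1,k-3,\dots$ of Definition~\ref{de:anchors2} are pairwise non-consecutive and never include the last position) correctly delivers all four parts, with maximality of the chain doing the work in part (b) and the position-parity argument doing the work in parts (c) and (d). Two small tightenings: the fact that the terminal weak anchor of a chain is unselected follows immediately from Definition~\ref{de:anchors2}\emph{-(a)} (selection starts at $l=k-1$), so no ``parity count'' is needed there; and the sub-case of part (a) in which both reverse anchors are weak while $\anchor_i(u)$ is weak can be dismissed outright---a weak $\anchor_i(u)$ is dual by Observation~\ref{ob:anchors1}\emph{-(b)}, hence the first or last fan edge, and then the reverse anchor at that endpoint cannot be chosen---which avoids leaning on the slightly circular ``both chains would continue through the same edge, hence merge'' argument, whose rigorous backing is the dual-edge-counting remark the paper makes just before defining weak anchor chains.
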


We now define a new type of standard path that makes use of selected anchors
only:
\begin{definition} \rm
\label{de:2standard}
Let $(u,v)$ be an edge of $\Y4$ lying in cone $i$ of $u$ such that
$\anchor_i(u) = (u,v')$ is defined. The {\em $2$-standard path} from $u$ to
$v$ is:
\begin{itemize}

\item The $1$-standard path from $u$ to $v$, if $\anchor_i(u)$ is
selected.

\item The path $\cpath{u}{v,v'}$ together with the $1$-standard path from $v'$
to $u$, if $\anchor_i(u)$ is not selected.
\end{itemize}
\end{definition}
By Observation~\ref{ob:anchors2}{\em-(c)}, if $\anchor_i(u)=(u,v')$ is defined
but not selected then $\anchor_{i+2}(v')$ is defined and selected. The
$1$-standard path from $v'$ to $u$ is thus well-defined and hence a $2$-standard
path from $u$ to $v$ is well-defined. By applying Lemma~\ref{le:L1path} twice,
the length of the $2$-standard path from $u$ to $v$ is at most
$(3+\sqrt 2)^2 d_2(u,v)$. In Figure~\ref{fi:bigexample}{\em-(a)}, the
$2$-standard path from $u_{22}$ to $u_{23}$ is path
$u_{22},u_{15},u_{24},u_{18},u_{23}$. Note that non-selected anchors do not appear
in $2$-standard paths.

In the next section we will construct our first bounded degree spanner of
$\Y4$. We will show that it contains short and highly structured paths between
the endpoints of every edge in $\Y4$. We define the structure of these paths
now:

\begin{definition} \rm
\label{de:standard}
Let $H$ be a subgraph of $\Y4$ that includes all selected edges, let
$(u,v)$ be an edge of $\Y4$ lying in cone $i$ of $u$ such that
$\anchor_i(u)$ is defined, and let $p$ be the $2$-standard path from $u$ to
$v$. For every $d \geq 1$, the {\em $2d$-standard pre-path in $H$} from $u$
to $v$ consists of:
\begin{enumerate}[(a)]

\item all edges on $p$ that are in $H$

\item and, if $d > 1$, the $2(d-1)$-standard pre-path in $H$ from $w$ to $w'$ or
from $w'$ to $w$ for every canonical edge $(w, w')$ on $p$ that is not in $H$.
\end{enumerate}
When the $2d$-standard pre-path in $H$ is a path from $u$ to $v$ we call it the
{\em $2d$-standard path} or, more simply, the {\em standard path} when the value
of $d$ is understood from the context.
\end{definition}
Because there is a $2$-standard path for every $(u,v) \in \Y4$ and because $H$
includes selected anchors, $2d$-standard pre-paths in $H$ are well-defined for all
$(u,v) \in \Y4$. When a $2d$-standard pre-path in $H$ from $u$ to $v$ is a path,
its length can be bounded by $(3+\sqrt 2)^{2d} d_2(u,v)$ by applying
Lemma~\ref{le:L1path} recursively.




\section{A spanner of maximum degree at most 8}
\label{se:H8}


We now construct $H_8$, our first bounded degree spanner of $\Y4$. It consists
of all selected anchors and a subset of the uni-directional canonical
edges of $\Y4$. We choose the edges in $H_8$ as follows:
\begin{step}
\label{st:spanner1}
We choose all the selected anchors. Then, for every node $u$ and cone $i$
of $u$, if $(u,v_1), \dots, (u,v_k)$, with $k \geq 2$, is the fan $u$ in its
cone $i$, we choose all the uni-directional edges on $\cpath{u}{v_1,v_k}$
except for the following two cases:
\begin{enumerate}[(a)]

\item $(v_2,v_1) \not\in H_8$ iff $(v_1,u)$ is a dual edge but not a
start-of-odd-chain anchor chosen by $v_1$ and edge $(v_2,v_1)$ is a
non-anchor, uni-directional edge such that $\edge{v_2,v_1} \in \oY4$ but
$\edge{v_1,v_2} \not\in \oY4$.

\item $(v_{k-1},v_k) \not\in H_8$ iff $(v_k,u)$ is a dual edge but
not a start-of-odd-chain anchor chosen by $v_k$ and edge $(v_{k-1},v_k)$ is a
non-anchor, uni-directional edge such that $\edge{v_{k-1},v_k} \in \oY4$ but
$\edge{v_k,v_{k-1}} \not\in \oY4$.
\end{enumerate}
\end{step}

In order to facilitate the analysis of the
maximum degree of the spanner, we devise a {\em charging scheme} that assigns
each chosen spanner edge $(u, v)$ to a cone of $u$ and to a cone of $v$.
A chosen edge $(u,v)$ is charged to the cone of $u$
containing it if it is a selected anchor or if
it is a uni-directional canonical edge with $\edge{u,v} \in \oY4$. By
Observation~\ref{ob:anchors2}{\em-(a)} and
Observation~\ref{ob:initial1.5}{\em-(b)}
at most one edge is charged to a
cone in this way. If, however, edge $(u,v)$ is a non-anchor,
uni-directional canonical edge of some node $w$ with $\edge{v,u} \in \oY4$,
i.e. its orientation in $\oY4$ is incoming at $u$, it is charged to the cone
of $u$ that contains $w$. Therefore, to bound the degree of each
node we only need to focus on cones that have one or more non-anchor,
uni-directional, canonical, incoming (in $\oY4$) edges charged to them.

We show in the following lemma that such cones have at most 2 edges charged
to them, which will imply that $H_8$ has maximum degree at most $8$:

\begin{lemma}
\label{le:charge2}
Let $u$ be a node with the fan $(u,v_1), \dots, (u,v_k)$, with $k \geq 2$,
in its cone $i$ and
let $r \in \{1,\dots,k\}$. No more than 1 edge is charged to cone $i+2$ of
$v_r$ except in the following cases when 2 edges are charged:
\begin{enumerate}[(a)]
\item $1 < r < k$, $(v_r,u) \not\in H_8$ and both $(v_{r-1}, v_r)$ and
$(v_r,v_{r+1})$ are non-anchor uni-directional canonical edges in $H_8$ such that
$\edge{v_{r-1},v_r} \in \oY4$ and $\edge{v_{r+1},v_r} \in \oY4$
(e.g., cone 3 of $u_{14}$ in Figure~\ref{fi:bigexample}-(c)).

\item $r=1$, $(v_1,u)$ is a non-anchor uni-directional canonical edge in $H_8$ such that
$\edge{v_1,u} \in \oY4$ and $(v_2, v_1)$ is a non-anchor uni-directional
canonical edge in $H_8$ such that $\edge{v_2,v_1} \in \oY4$ (e.g., cone 3
of $u_{21}$ Figure~\ref{fi:bigexample}-(c)).

\item $r=k$, $(v_k,u)$ is a non-anchor uni-directional canonical edge in $H_8$ such that
$\edge{v_k,u} \in \oY4$ and $(v_{k-1}, v_k)$ is a non-anchor uni-directional
canonical edge in $H_8$ such that $\edge{v_{k-1},v_k} \in \oY4$ (e.g., cone
0 of $u_{19}$ Figure~\ref{fi:bigexample}-(c)).
\end{enumerate}
\end{lemma}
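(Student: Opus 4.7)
My plan is to enumerate the $H_8$-edges charged to cone $i+2$ of $v_r$ under the three charging rules, which I label (R1) selected anchors incident to $v_r$ lying in that cone, (R2) uni-directional canonical edges $(v_r,x)\in H_8$ in that cone with $\edge{v_r,x}\in\oY4$, and (R3) non-anchor uni-directional canonical edges $(v_r,y)\in H_8$ with $\edge{y,v_r}\in\oY4$ that are canonical of some third node $w$ lying in cone $i+2$ of $v_r$. By Observations~\ref{ob:anchors2}{\em-(a)} and~\ref{ob:initial1.5}{\em-(b)}, (R1) and (R2) together contribute at most one edge, so the main work is bounding (R3) contributions and their interaction with any (R1)/(R2) charge, case by case on $r$.

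For the interior case $1<r<k$: by Observation~\ref{ob:initial}{\em-(b)}, $u$ is the unique neighbor of $v_r$ in cone $i+2$ so only $w=u$ is available for (R3), and $(u,v_r)$ is a middle edge, hence by Observation~\ref{ob:initial1.5}{\em-(c)} not in $H_8$, killing (R1)/(R2). The only (R3) candidates are $(v_{r-1},v_r)$ and $(v_r,v_{r+1})$ with incoming orientation at $v_r$; two simultaneous charges occur exactly under case~(a). For the boundary case $r=1$ with $(u,v_1)$ not dual: Observation~\ref{ob:initial}{\em-(c)} gives $w=u$ as the sole (R3) source with candidate $(v_1,v_2)$. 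If $(u,v_1)=\anchor_i(u)$ is selected, inspection of Definition~\ref{de:anchors}{\em-(b)}(i)/(iii) forces $\edge{v_1,v_2}\in\oY4$ (via the maximal uni-directional canonical path in (i), or by the failure of case (ii) in (iii)), so (R3) cannot fire and we get at most one charge. Otherwise $(u,v_1)$ can still enter $H_8$ as a non-anchor uni-directional canonical edge with $\edge{v_1,u}\in\oY4$, charged via (R2); since $(u,v_1)$ is not dual the Step~\ref{st:spanner1}(a) exception does not remove $(v_1,v_2)$ from $H_8$, so if $(v_2,v_1)$ is also a non-anchor uni-directional canonical edge of $u$ with incoming orientation at $v_1$ we obtain the second charge---exactly case~(b).

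The delicate case is $r=1$ with $(u,v_1)$ dual: then $v_1$ has a fan $(v_1,u_1{=}u),\ldots,(v_1,u_m)$ in cone $i+2$ with $m\geq 2$. Applying Observation~\ref{ob:initial}{\em-(b)} to this fan (with $v_1$ playing ``$u$'') shows that for $1<j<m$ the only edge incident to $u_j$ in its cone $i$ is $(v_1,u_j)$, so $u_j$'s cone-$i$ fan is a singleton and provides no (R3) candidates; the same holds for $u_m$ unless $(v_1,u_m)$ is itself dual. Thus the only potential (R3) sources are $u$ (via $(v_1,v_2)$) and, when $(v_1,u_m)$ is dual, $u_m$ (via the first canonical edge in its cone-$i$ fan). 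I plan to invoke Step~\ref{st:spanner1}(a) applied to $u$'s fan and Step~\ref{st:spanner1}(b) applied to $u_m$'s fan to show that each of these (R3) candidates is removed from $H_8$ unless the corresponding dual anchor is a start-of-odd-chain anchor chosen by $v_1$; since $v_1$ has at most one SOC anchor chosen by it in cone $i+2$, at most one candidate survives. Finally, a short weak-anchor-chain check shows that an SOC anchor is by Definition~\ref{de:anchors2} unselected and, being dual, non-canonical by Observation~\ref{ob:initial1.5}{\em-(c)}, and that no other $(v_1,u_j)$ can yield an (R1)/(R2) charge alongside this single (R3) charge (any such $(u_j,v_1)=\anchor_i(u_j)$ would be weak and would extend the chain past $v_1$, contradicting SOC-at-$v_1$), so at most one edge is charged overall. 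The case $r=k$ is symmetric and yields case~(c).

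The hardest part will be this $(u,v_1)$-dual subcase: coordinating Step~\ref{st:spanner1}'s two exceptions with Definition~\ref{de:anchors2}'s SOC/selected dichotomy and the singleton-fan consequence of Observation~\ref{ob:initial}{\em-(b)} takes careful bookkeeping across all combinations of which $(v_1,u_j)$'s are dual, and I expect this to be the main technical burden of the proof.
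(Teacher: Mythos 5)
Your overall route is the paper's: classify the charges to cone $i+2$ of $v_r$ into the selected-anchor/outgoing-canonical type and the incoming-canonical type, bound the first type by Observation~\ref{ob:anchors2}\emph{-(a)} and Observation~\ref{ob:initial1.5}\emph{-(b)}, and control the second via Observation~\ref{ob:initial}\emph{-(b)/(c)} and the two exceptions of Step~\ref{st:spanner1}; your handling of the subcase where $(v_1,u)$ is dual (including the possible charge coming from a canonical edge of $u_m$ when $(v_1,u_m)$ is dual, keyed to which anchor is start-of-odd-chain chosen by $v_1$) is sound and in fact more explicit than the paper's own text. However, one step fails as stated: in the interior case $1<r<k$ you assert that $(u,v_r)$, being a middle edge, is ``by Observation~\ref{ob:initial1.5}\emph{-(c)} not in $H_8$,'' and you use this unconditionally to kill your (R1)/(R2) charges. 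Observation~\ref{ob:initial1.5}\emph{-(c)} only says a middle edge is neither dual nor canonical; a middle edge can still be the anchor chosen by $u$ (Definition~\ref{de:anchors}\emph{-(b)}, notably case (iii)), and such an anchor is strong, because $\anchor_{i+2}(v_r)$ is undefined ($(u,v_r)$ is the only edge in cone $i+2$ of $v_r$ by Observation~\ref{ob:initial}\emph{-(b)} and is not mutually-single); it is therefore selected, lies in $H_8$, and is charged to cone $i+2$ of $v_r$. So ``middle $\Rightarrow$ not in $H_8$'' is false. The interior case still goes through, but only with the paper's ordering: \emph{first} assume some incoming canonical edge is charged to cone $i+2$ of $v_r$, so that $(v_{r-1},v_r)$ or $(v_{r+1},v_r)$ is uni-directional with the orientation into $v_r$; then Observation~\ref{ob:anchors1}\emph{-(a)} gives $\anchor_i(u)\neq(u,v_r)$, and only then does ``middle, hence not canonical'' yield $(u,v_r)\notin H_8$. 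As written, your argument does not exclude the (hypothetical) configuration of a selected middle-edge anchor coexisting with an incoming canonical charge, which is exactly what must be ruled out to get the ``2 charges only under case (a)'' statement.

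A smaller repair is needed in the dual subcase: when the surviving (R3) charge exists because the corresponding anchor is a start-of-odd-chain anchor chosen by $v_1$, your parenthetical only excludes coexisting selected-anchor charges; you also need that no uni-directional canonical edge $(v_1,u_l)$ with $\edge{v_1,u_l}\in\oY4$ is charged to cone $i+2$ of $v_1$. This follows from Observation~\ref{ob:initial1.5}\emph{-(b)}, since such an edge would be the only edge of $\Y4$ in cone $i+2$ of $v_1$, contradicting that $(v_1,u)$ is dual; it is a one-line fix, but it must be said, as otherwise a total of 2 charges not matching cases (a)--(c) is not excluded. With these two corrections your proposal coincides with (and in the dual subcase slightly sharpens) the paper's proof.
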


\begin{proof}
We consider first the case $1 < r < k$.
If an incoming (in $\oY4$) edge is charged to cone $i+2$ of $v_r$
then either $\edge{v_r,v_{r-1}} \not \in \oY4$
or $\edge{v_r,v_{r+1}} \not\in \oY4$. By Observation~\ref{ob:anchors1}{\em-(a)},
this means that $\anchor_i(u) \not= (u,v_r)$. Furthermore, since $(u,v_r)$ is
a middle edge, by Observation~\ref{ob:initial1.5}{\em-(c)}, $(u,v_r)$ is not
canonical and thus could not be chosen in Step~\ref{st:spanner1}.
So $(u,v_r) \not\in H_8$.
Therefore, the maximum charge of 2 in cone $i+2$ of $v_r$
occurs when both $(v_{r-1},v_r)$ and $(v_{r+1},v_r)$ are charged to it
which happens when they are both non-anchor uni-directional canonical edges
with $\edge{v_{r-1},v_r} \in \oY4$, and $\edge{v_{r+1},v_r} \in \oY4$.

We consider case $r=1$ next. Edge $(v_2,v_1)$ is charged to cone $i+2$ of
$v_1$ only if $(v_2,v_1)$ is a non-anchor, uni-directional edge,
$\edge{v_2,v_1} \in \oY4$, and either $(v_1,u)$ is not dual or $(v_1,u)$ is
a start-of-odd-chain anchor chosen by $v_1$.

In the second case, $(v_1,u)$ is not a selected anchor and is
thus not in $H_8$. Furthermore, by Observation~\ref{ob:anchors2}{\em-(b)},
no other selected anchor is also charged to cone $i+2$ of $v_1$.
Finally,
there cannot be another start-of-odd-chain anchor out of $v_i$ in its cone
$i+2$ because there is only one anchor out of a node in a cone. So cone
$i+2$ of $v_1$ could not be charged more than 1 in this case.

If $(v_1,u)$ is not dual then it must be the only edge of $\Y4$ in cone
$i+2$ of $v_1$. Thus, for cone $i+2$ to
have a charge of 2, it must be that $(v_1,u) \in H_8$. Note that $(v_1,u)$
cannot be an anchor chosen by $v_1$ because by Definition~\ref{de:anchors}
$v_1$ chose no anchor. It also cannot be $\anchor_i(u)$ because, using
Observation~\ref{ob:anchors1}{\em-(a)}, that would violate the assumption
that $(v_2,v_1)$ is uni-directional and $\edge{v_2,v_1} \in \oY4$. Therefore
$(v_1,u)$ must be a uni-directional canonical edge (with respect to some node
$w$ in cone $i+3$ of $u$ and $v_1$) and $\edge{v_1,u} \in \oY4$.

The case $r=k$ follows by a symmetric argument to that in the case when $r=1$.
\end{proof}

We will show next that $H_8$ is a spanner of $\Y4$. In fact, we show something
stronger:

\begin{theorem}
\label{th:span}
Let $(u,v)$ be an edge of $\Y4$ lying in cone $i$ of $u$. There is
a $6$-standard path in $H_8$ from $u$ to $v$ or from $v$ to $u$
of length at most $(3+\sqrt 2)^6 \cdot d_2(u,v)$.
\end{theorem}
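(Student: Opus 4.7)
The plan is to first dispose of the length bound, then focus on the combinatorial existence of the path. The length bound is essentially free: by the remark immediately following Definition~\ref{de:standard}, any $2d$-standard path has length at most $(3+\sqrt 2)^{2d} d_2(u,v)$ by recursive application of Lemma~\ref{le:L1path}, so for $d=3$ the bound $(3+\sqrt 2)^6 d_2(u,v)$ drops out once existence is known. The substantive content of the theorem is therefore to show that the $6$-standard \emph{pre-path} in $H_8$ is actually a \emph{path} in at least one of the two directions between $u$ and $v$.

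The starting observation is that every $2$-standard path is built from a selected anchor together with canonical edges (by Definition~\ref{de:2standard}, combined with Observation~\ref{ob:anchors2}{\em-(c)} which guarantees that the anchor employed is always selected). Since $H_8$ contains all selected anchors by Step~\ref{st:spanner1}, the only potential obstructions to the pre-path being a path are canonical edges missing from $H_8$. By Step~\ref{st:spanner1}(a)--(b), such a missing canonical edge $(v_2, v_1)$ (or symmetrically $(v_{k-1}, v_k)$) is heavily constrained: it must be an extreme edge of a fan whose extreme fan-edge $(u, v_1)$ (resp. $(u, v_k)$) is a \emph{dual} edge that is \emph{not} a start-of-odd-chain anchor chosen from $v_1$ (resp. $v_k$), and moreover $(v_2, v_1)$ must be uni-directional and directed outward from $v_1$ in $\oY4$. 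I would then prove by induction on $d$ (for $d=1,2,3$) that whenever we recursively substitute a $2(d-1)$-standard pre-path in $H_8$ for a missing canonical edge $(w,w')$, there is a direction of traversal for which the substitution yields an actual path.

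The induction step is the technical heart. For each missing canonical edge $(w, w')$, I would inspect both $2$-standard paths (from $w$ to $w'$ and from $w'$ to $w$) and argue that, for at least one direction, either every canonical edge on the substitute path is already in $H_8$, or the remaining missing canonical edges on it again fall into the restrictive Step~\ref{st:spanner1}(a)--(b) pattern but are pinned to a \emph{different} dual edge than the one that caused the parent obstruction. The structural fact I would lean on is Observation~\ref{ob:initial1.5}{\em-(a)}, which forces any dual edge to be an extreme edge in both of its cones, so the dual edges triggering successive recursive failures form a short, directionally consistent chain along the fans. Combined with the asymmetry of Step~\ref{st:spanner1}(a)--(b) (only one of the two traversal directions is ever blocked at a given fan-extreme), this bounds the recursive depth by $3$, which is exactly what the $6$-standard definition provides. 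The main obstacle is the careful direction-picking at each recursion level and the bookkeeping of dual/weak-chain/start-of-odd-chain labels; once that is managed, the theorem falls out, and the length bound is automatic.
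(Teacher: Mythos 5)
Your outline gets the easy half right (the length bound is indeed automatic from Lemma~\ref{le:L1path}, as in the paper), but the combinatorial half has a genuine gap: you claim that the only canonical edges that can be missing from $H_8$ are those excluded by the exceptions (a)--(b) of Step~\ref{st:spanner1}, hence uni-directional, at a fan extreme, and tied to a dual edge. That is not what Step~\ref{st:spanner1} says: besides the selected anchors it only ever chooses \emph{uni-directional} edges of each canonical path, so \emph{every bi-directional canonical edge that is not itself a selected anchor is absent from $H_8$}, and such edges are not constrained in the way you describe. Your induction hinges on the missing edges being ``pinned'' to dual fan-extreme edges forming a short, directionally consistent chain, so it never gets started for these bi-directional omissions, and your claimed depth-$3$ termination collapses. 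This is precisely why the paper isolates Lemma~\ref{le:bidirectional}: for a missing bi-directional canonical edge $(u,v)$ the whole $2$-standard path between its endpoints is shown to lie in $H_8$, and the proof of that fact is where the start-of-odd-chain-anchor bookkeeping actually does work --- the carve-outs ``but not a start-of-odd-chain anchor chosen by $v_1$ (resp.\ $v_k$)'' in Step~\ref{st:spanner1} exist exactly so that the edge $(u_2,u_1=u)$ needed in that lemma survives. You mention such bookkeeping only as an afterthought, whereas it is the crux.

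Relatedly, the paper does not bound the recursion depth by any direction-picking or ``only one direction is ever blocked at a fan extreme'' principle (which you assert but do not prove). It layers three statements: missing bi-directional edges are replaced by $2$-standard paths entirely in $H_8$ (Lemma~\ref{le:bidirectional}); a missing uni-directional canonical edge with $\edge{v,u}\in\oY4$ --- the only kind produced by the Step~\ref{st:spanner1} exceptions --- is replaced by a $4$-standard pre-path that is a path, because on its $2$-standard path the potentially excepted end edge is present ($(u,v)$ being canonical is not dual, Observation~\ref{ob:initial1.5}) and the only other gaps are bi-directional (Lemma~\ref{le:unidirectional}); and the general edge then composes these into a $6$-standard path, with the direction fixed once at the top by which endpoint has the anchor on the appropriate side. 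To repair your proposal you would essentially have to reprove both lemmas, so the missing idea is the treatment of non-anchor bi-directional canonical edges, not the dual-edge chain heuristic you propose.
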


We start by proving two special cases of the theorem:
\begin{lemma}
\label{le:bidirectional}
Let $(u,v)$ be a bi-directional canonical edge of $\Y4$ lying in cone $i$ of
$u$ such that $\anchor_i(u)$ is defined. The $2$-standard path from
$u$ to $v$ is in $H_8$ and has length at most $(3+\sqrt 2)^2 \cdot d_2(u,v)$.
\end{lemma}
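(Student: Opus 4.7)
The length bound is the easy half: the comment following Definition~\ref{de:2standard} already observes that applying Lemma~\ref{le:L1path} twice bounds the length of any 2-standard path by $(3+\sqrt{2})^2 d_2(u,v)$, so the substantive work is to show that every edge of the 2-standard path from $u$ to $v$ lies in $H_8$. If $(u,v)$ is mutually-single the claim is immediate, since then $(u,v) = \anchor_i(u)$ is a strong anchor and the 2-standard path is just the edge $(u,v)\in H_8$. So assume $(u,v)$ is bi-directional but not mutually-single, and let $(u,v_1),\dots,(u,v_k)$ with $k\geq 2$ be the fan of $u$ in cone $i$.

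The key reduction is that bi-directionality gives $\edge{u,v}\in \oY4$, so $v$ equals the unique fan member $v_l$ with $\edge{u,v_l}\in \oY4$. Writing $\anchor_i(u) = (u, v_{l'})$, this forces the canonical piece $\cpath{u}{v, v_{l'}}$ to coincide, as an undirected edge set, with the maximal uni-directional canonical path used to select the anchor in Definition~\ref{de:anchors}(b)(i)--(iii), so every canonical edge on $\cpath{u}{v, v_{l'}}$ is uni-directional.

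Case 1: $\anchor_i(u)$ is selected. Then the 2-standard path equals the 1-standard path consisting of $(u, v_{l'})$ together with $\cpath{u}{v, v_{l'}}$. The anchor is in $H_8$ by Step~\ref{st:spanner1}, and each uni-directional canonical edge on the path lies in $H_8$ unless exception (a) or (b) of Step~\ref{st:spanner1} fires. But exception (a) would require $\edge{v_1, v_2}\notin \oY4$, whereas whenever $(v_1,v_2)$ appears on our path the direction of the maximal uni-directional canonical path forces $\edge{v_1, v_2}\in \oY4$; exception (b) is ruled out symmetrically.

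Case 2: $\anchor_i(u)$ is not selected. Observation~\ref{ob:anchors1}(b) then says $(u, v_{l'})$ is a weak dual anchor, and Observation~\ref{ob:anchors2}(c) says $\anchor_{i+2}(v_{l'}) = (v_{l'}, w)$ is selected and hence in $H_8$. The 2-standard path decomposes as $\cpath{u}{v, v_{l'}}$ followed by $(v_{l'}, w)$ followed by $\cpath{v_{l'}}{u, w}$; the first piece and the middle anchor are handled exactly as in Case 1. For $\cpath{v_{l'}}{u, w}$, I argue that $u$ is the unique fan member of $v_{l'}$ in cone $i+2$ with $\edge{v_{l'}, u}\in \oY4$: when $l'=l$ this is immediate from bi-directionality of $(u,v) = (u, v_{l'})$, and when $l'\neq l$ it follows because $\edge{u, v_l}$ is the unique out-edge of $u$ in cone $i$, forcing $(u, v_{l'})$ to be uni-directional incoming at $u$. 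Hence $\cpath{v_{l'}}{u, w}$ coincides with the maximal uni-directional canonical path defining $\anchor_{i+2}(v_{l'})$, every edge on it is uni-directional, and the identical exception analysis rules out (a) and (b). The main delicacy is exactly this exception bookkeeping applied to two canonical sub-paths at once in Case 2, together with the uniform treatment of the degenerate sub-case $l'=l$, both of which ultimately rest on the bi-directional hypothesis.
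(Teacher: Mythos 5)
There is a genuine gap, and it sits exactly where the hypothesis that $(u,v)$ is \emph{canonical} has to do real work. Your dismissal of the exceptions of Step~\ref{st:spanner1} by orientation alone is not correct at the terminus of the canonical path. Say the anchor is clockwise from $(u,v)$, so $\cpath{u}{v_{l'},v}$ is oriented toward $v=v_l$; since $(u,v)$ is canonical, $v$ is in fact the \emph{last} fan member $v_k$, so the terminal edge $(v_{k-1},v_k)$ has precisely the orientation $\edge{v_{k-1},v_k}\in\oY4$, $\edge{v_k,v_{k-1}}\notin\oY4$ that exception~(b) targets --- the orientation of the maximal uni-directional path agrees with, rather than contradicts, the exception's condition, so ``ruled out symmetrically'' fails. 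What actually saves this edge is that $(v_k,u)=(v,u)$ is canonical and hence not dual (Observation~\ref{ob:initial1.5}-(c)), so the exception's precondition fails; this is the step the paper states as ``since $(u,v_k)$ is not dual, Step~\ref{st:spanner1} ensures $\cpath{u}{v_{l'},v_k}$ is in $H_8$,'' and it is missing from your argument.

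The gap is more serious in your Case~2. There the second piece $\cpath{v_{l'}}{u,w}$ is oriented toward $u=u_1$, so its terminal edge $(u_2,u_1)$ has exactly the form of exception~(a) for the fan of $v_{l'}$ in cone $i+2$, and now $(u_1,v_{l'})=(u,v_{l'})$ \emph{is} dual (it is a weak anchor), so neither orientation nor non-duality blocks the exception. The only remaining escape clause is that $(u,v_{l'})$ is a start-of-odd-chain anchor chosen by $u$, and establishing this requires its own argument: $(u,v_{l'})$, being weak and dual, is the first edge of the fan; any other anchor incident to $u$ in cone $i$ would be weak, hence dual, hence first or last, but the first edge is $(u,v_{l'})$ itself and the last edge $(u,v)$ is canonical and so not dual; hence no other anchor is incident to $u$ in cone $i$, so $u$ is the start of its weak anchor chain and the non-selected anchor $(u,v_{l'})$ is a start-of-odd-chain anchor. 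Without this, your proof allows Step~\ref{st:spanner1} to delete $(u_2,u)$ and break the $2$-standard path. Apart from this exception bookkeeping, your decomposition and use of Observation~\ref{ob:anchors2}-(c) follow the paper's route, and the length bound via Lemma~\ref{le:L1path} is fine.
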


\begin{proof}
W.l.o.g. we assume that $i=0$. The lemma clearly holds if
$\anchor_0(u) = (u,v)$ since $(u,v)$, being a canonical edge, would have to be
a selected anchor and thus in $H_8$. Therefore we can assume that
$\anchor_0(u) \not= (u, v)$ and, w.l.o.g., that $\anchor_0(u)$ is
clockwise from edge $(u, v)$ in cone $0$ of $u$. Then, since $(u,v)$ is
canonical, $(u,v)$ must be the last edge in cone $0$ of $u$. Let
$(u,v_1), \dots, (u,v_k) = (u, v)$ be the fan of $u$ in its cone $0$
and let $\anchor_0(u) = (u,v_{l'})$ for some $l' \in \{1, \dots, k-1\}$.

Since $(u,v_k)$ is bi-directional, $\edge{u,v_k} \in \oY4$ and, by
Definition~\ref{de:anchors}, $\cpath{u}{v_{l'},v_k}$ is a maximal
uni-directional canonical path ending at $v_k$. Since $(u,v_k)$ is not dual,
Step~\ref{st:spanner1} ensures that $\cpath{u}{v_{l'},v_k=v}$ is a path in
$H_8$.
Therefore, if $(u,v_{l'}) \in H_8$ then the $1$-standard (and thus $2$-standard)
path from $u$ to $v$ is in $H_8$.


If $(u,v_{l'}) \not\in H_8$ then $(u,v_{l'})$ must be a weak anchor and thus a dual
edge. Since $(u,v_{l'})$ lies clockwise from $(u,v_k)$ within cone $0$ of $u$,
$(u,v_{l'})$ must be the first edge in cone 0 of $u$, i.e. $l'=1$. Any other
anchor incident to $u$ in its cone $0$ would have to be a weak anchor and thus
a dual edge (Observation~\ref{ob:anchors1}{\em-(b)}). Since the first edge
in the cone is $(u,v_1)$ and the last is $(u,v_k)$, there cannot
be another anchor incident to $u$ in its cone $0$. So $(u,v_1)$ is a
start-of-odd-chain anchor chosen by $u$. By
Observation~\ref{ob:anchors2}{\em-(c)}, $\anchor_2(v_1) \in H_8$ and
$\anchor_2(v_1) \not= (v_1,u)$.
Since at least two edges of $\Y4$ are incident
to $v_1$ in its cone $2$, node $v_1$ has a fan
$(v_1,u_1), \dots, (v_1,u_{k'})$ with $k' \geq 2$ in its cone $2$. Let
$\anchor_2(v_1) = (v_1, u_{l'})$ for some $l' \in \{1,\dots,k'\}$.
Since, by Observation~\ref{ob:initial1.5}{\em-(a)}, $(v_1,u)$ is the
first edge in cone $2$ of $v_1$, $u=u_1$.
Because $\edge{v_1,u} \in \oY4$ (a consequence of the assumption that
$\edge{u,v_k}=\edge{u,v} \in \oY4$), by Definition~\ref{de:anchors}
$\cpath{v_1}{u_{l'},u_1=u}$ is a maximal uni-directional canonical path
ending at $u=u_1$.
Step~\ref{st:spanner1} ensures that all the edges on this path are in $H_8$
(in particular $(u_2,u_1=u)$ is in because $(u,v_1)$ is a start-of-odd-chain
anchor chosen by $u$). Therefore, the $2$-standard path from $u$ to $v=v_r$ is
in $H_8$. The bound on the length of the path follows from
Lemma~\ref{le:L1path}.
\end{proof}

\begin{lemma}
\label{le:unidirectional}
Let $(u,v)$ be a uni-directional canonical edge of $\Y4$ in cone $i$ of
$u$ such that $\edge{v,u} \in \oY4$ and $\anchor_i(u)$ is defined.
The $4$-standard pre-path in $H_8$ from $u$ to $v$ is a path in $H_8$
and has length at most $(3+\sqrt 2)^4 \cdot d_2(u,v)$.
\end{lemma}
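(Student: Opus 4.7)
The plan is to reduce to the case $v = v_1$ in the fan $(u,v_1),\dots,(u,v_k)$ of $u$ in cone $i$ (the case $v = v_k$ is symmetric), set $\anchor_i(u) = (u,v_{l'})$, and trace the $2$-standard path from $u$ to $v$ edge by edge, verifying that every edge is either in $H_8$ or is a bi-directional canonical edge whose two endpoints are joined by an inner $2$-standard path that Lemma~\ref{le:bidirectional} places inside $H_8$. In the base case $l' = 1$, the anchor is $(u,v)$ itself; since $(u,v)$ is uni-directional it is not dual, so by Observation~\ref{ob:anchors1}(b) it cannot be a weak anchor, and Observation~\ref{ob:initial1.5}(b) tells us $(u,v)$ is the sole edge of $v$'s cone $i+2$ fan while not being mutually-single, so $\anchor_{i+2}(v)$ is undefined and $(u,v)$ is strong (Definition~\ref{de:strongweak}), hence selected and in $H_8$.

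In the main case $l' \geq 2$, the $2$-standard path always contains the canonical subpath $\cpath{u}{v_1,v_{l'}}$. I would argue that every uni-directional canonical edge on this subpath belongs to $H_8$: exception~(a) of Step~\ref{st:spanner1} requires $(v_1,u)$ to be dual, which fails since $(u,v_1)$ is uni-directional; exception~(b) can only apply when $l' = k$, but Definition~\ref{de:anchors}(b)(ii) then forces the orientation $\edge{v_k,v_{k-1}}\in\oY4$, contradicting the $\edge{v_{k-1},v_k}\in\oY4$ required by the exception; middle canonical edges are never excluded. Bi-directional canonical edges on this subpath are handled by Lemma~\ref{le:bidirectional}, whose hypothesis is met because bi-directionality forces the relevant anchor to exist.

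The anchor portion of the $2$-standard path is handled as follows. If $\anchor_i(u)$ is selected, $(u,v_{l'})\in H_8$ and we are done. Otherwise $(u,v_{l'})$ is weak and hence dual (Observation~\ref{ob:anchors1}(b)), so $v_{l'}$ lies at the first or last position of $u$'s cone $i$ fan (Observation~\ref{ob:initial1.5}(a)); since $v_1 = v$ is non-dual we have $v_{l'} = v_k$. The structural heart of the argument is to show that $(u,v_{l'})$ is then necessarily a \emph{start-of-odd-chain anchor chosen by $u$}: if $u$ occupied a position $j \geq 2$ in a weak anchor chain $w_0,w_1,\dots,w_K$, the preceding anchor $(w_{j-1},u)$ would also be weak and hence dual, forcing $w_{j-1}$ to be first or last in $u$'s cone $i$ fan; $v_1$ is excluded by its non-duality and the last slot is already $v_k = v_{l'} = w_{j+1}$, giving $w_{j-1} = w_{j+1}$, a contradiction. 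Thus $u = w_0$ in an odd chain, and Definition~\ref{de:anchors2} makes the next anchor $\anchor_{i+2}(v_{l'})$ selected and in $H_8$. In the canonical subpath of the $1$-standard path from $v_{l'}$ to $u$, sitting inside $v_{l'}$'s cone $i+2$ fan, exception~(a) cannot fire at the edge incident to $u$ because $(u,v_{l'})$ IS a start-of-odd-chain anchor chosen by $u$; exception~(b) at the far end is blocked by the orientation given by Definition~\ref{de:anchors}(b)(ii); and bi-directional canonical edges are absorbed by a second application of Lemma~\ref{le:bidirectional}.

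Two nested applications of Lemma~\ref{le:L1path} then bound the length of the assembled $4$-standard pre-path by $(3+\sqrt{2})^4 \cdot d_2(u,v)$. The main obstacle is the structural argument pinning $(u,v_{l'})$ at the head of an odd weak anchor chain; once this is established, every exception in Step~\ref{st:spanner1} becomes inapplicable along the specific canonical subpaths traversed by the $2$-standard path from $u$ to $v$, and the $4$-standard pre-path in $H_8$ is assembled routinely by gluing Lemma~\ref{le:bidirectional}'s inner $2$-standard paths to the uni-directional canonical edges already in $H_8$.
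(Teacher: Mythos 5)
Your proof is correct and follows essentially the same route as the paper's: the same case split on $\anchor_i(u)$ (equal to $(u,v)$, selected, or not selected), the same verification that Step~\ref{st:spanner1}'s two exceptions cannot fire on the relevant canonical subpaths, Lemma~\ref{le:bidirectional} to absorb bi-directional canonical edges, and Lemma~\ref{le:L1path} for the length bound, with your chain-head argument (that $u$ must be $w_0$ of an odd weak anchor chain) being an equivalent rephrasing of the paper's argument that no other anchor is incident to $u$ in cone $i$ together with Observation~\ref{ob:anchors2}. Two harmless misattributions: non-duality of $(u,v)$ follows from its being canonical (Observation~\ref{ob:initial1.5}), not from uni-directionality alone, and the orientation $\edge{v_k,v_{k-1}} \in \oY4$ when $l'=k$ is Observation~\ref{ob:anchors1}(a) in general rather than only Definition~\ref{de:anchors}(b)(ii); both facts hold as you use them, so nothing breaks.
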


\begin{proof}
If $\anchor_i(u) = (u,v)$ then $(u,v)$ must be
a selected anchor (only anchors that are dual may not be selected) and thus in
$H_8$ so the theorem holds trivially. So we can assume that
$\anchor_i(u) \not= (u,v)$ and that, w.l.o.g., it lies clockwise from $(u,v)$
within cone $i$ of $u$.

If $\anchor_i(u) = (u,v') \in H_8$ then the $2$-standard path from $u$ to
$v$ consists of edge $(u,v')$ and
$\cpath{u}{v',v}$, say $v'=v_{l'},v_{l'+1}, \dots, v_k=v$.
Step~\ref{st:spanner1} ensures that all the uni-directional canonical
edges on this path are in $H_8$ (in particular, edge $(v_{k-1},v_k=v)$ is
in because $(u,v_k=v)$ is a canonical edge and thus not dual).
By Lemma~\ref{le:bidirectional}, for every bi-directional edge in
$\cpath{u}{v',v}$ not in $H_8$, a $2$-standard path between its endpoints is
in $H_8$.
Therefore the $4$-standard pre-path in $H_8$ from $u$ to $v$ is a path from
$u$ to $v$.

If $\anchor_i(u) = (u,v') \not\in H_8$ then the $2$-standard path from
$u$ to $v$ consists of $\cpath{u}{v',v}$, say $v'=v_{l'},v_{l'+1}, \dots, v_k=v$,
$\anchor_{i+2}(v') = (v',u')$, and $\cpath{v'}{u',u}$, say
$u=u_1,u_2, \dots, u_{l'}=u'$. Furthermore, just as in the proof of
Lemma~\ref{le:bidirectional}, $(u,v'=v_{l'})$ must be a start-of-odd chain
anchor chosen by $u$.
Step~\ref{st:spanner1} ensures that all the uni-directional canonical
edges on $\cpath{u}{v',v}$ and $\cpath{v'}{u',u}$ are in $H_8$; in particular,
edge $(v_{k-1},v_k=v)$ is in because $(u,v_k=v)$ is a canonical edge and
thus not dual and $(u_2,u_1=u)$ is in because $\anchor_i(u)$ is a
start-of-odd-chain anchor. By Lemma~\ref{le:bidirectional},
for every bi-directional edge in $\cpath{u}{v',v}$ or
$\cpath{v'}{u,u'}$ that is not in $H_8$, a $2$-standard path between its
endpoints is in $H_8$.
Therefore the $4$-standard pre-path in $H_8$ from $u$ to $v$ is a path; the
bound on its length follows from Lemma~\ref{le:L1path}.
\end{proof}

\begin{proofof} {\bf Theorem~\ref{th:span}.}
The theorem holds trivially if $(u,v)$ is a selected anchor, so we
assume otherwise. W.l.o.g. we assume that $\anchor_i(u)$ is defined and that
$\anchor_i(u)$ is either clockwise from $(u,v)$ within cone $i$ of $u$ or that
$\anchor_i(u) = (u,v)$.

If $\anchor_i(u) = (u,v')$ is in $H_8$ then the $2$-standard path from $u$ to
$v$ consists of edge $(u,v')$ and
$\cpath{u}{v',v}$, say $v'=v_{l'},v_{l'+1}, \dots, v_r=v$.
Step~\ref{st:spanner1} ensures that all uni-directional canonical
edges on this path are in $H_8$ except for possibly edge $(v_{r-1},v_r=v)$;
if missing, this edge is a uni-directional canonical edge such that
$\edge{v_{r-1},v_r} \in \oY4$. By Lemma~\ref{le:unidirectional}, the
$4$-standard pre-path in $H_8$ from $v_r$ to $v_{r-1}$ is a path from $u$
to $v$. For every
bi-directional edge in $\cpath{u}{v',v}$ not in $H_8$, a $2$-standard path
between its end points is in $H_8$. Therefore the $6$-standard pre-path in $H_8$
from $u$ to $v$ is a path in $H_8$.

If $\anchor_i(u) = (u,v')$ is not in $H_8$ then the $2$-standard path from
$u$ to $v$ consists of $\cpath{u}{v',v}$, say $v'=v_1,v_2, \dots, v_r=v$,
$\anchor_{i+2}(v') = (v',u')$, and $\cpath{v'}{u',u}$, say
$u=u_1,u_2, \dots, u_{l'}=u'$. Step~\ref{st:spanner1} ensures that all
uni-directional canonical
edges on $\cpath{u}{v',v}$ and $\cpath{v'}{u',u}$ are in $H_8$ except for
possibly $(v_{r-1},v_r=v)$ and $(u_2,u_1=u)$.
By Lemma~\ref{le:unidirectional}, the $4$-standard pre-paths in $H_8$ from $v_r$
to $v_{r-1}$ and from $u_1$ to $u_2$ are paths in $H_8$. By
Lemma~\ref{le:bidirectional},
for every bi-directional edge in $\cpath{u}{v',v}$ or
$\cpath{v'}{u,u'}$ that is not in $H_8$, a $2$-standard path between the
endpoints is in $H_8$. Therefore, the $6$-standard pre-path in $H_8$ from
$u$ to $v$ is a path and the bound on its length follows from
Lemma~\ref{le:L1path}.
\end{proofof}

\section{Reducing the maximum degree bound to 4}
\label{se:H4}

In order to reduce the degree of $H_8$, we need to remove for every cone
with a charge of 2 at least one edge of $H_8$ that contributes to the charge.
Lemma~\ref{le:charge2} describes the three cases in which a cone receives a
charge of 2. We name the pair of non-anchor uni-directional canonical edges of
$u$ in its cone $i$ satisfying the condition in case {\em (a)} of
Lemma~\ref{le:charge2} an {\em edge pair in cone $i$ of $u$}. We also name
the non-anchor uni-directional canonical edge of $u$ satisfying case {\em (b)}
the {\em duplicate first edge in cone $i$ of $u$} and the non-anchor
uni-directional canonical edge of $u$ satisfying case {\em (c)}
the {\em duplicate last edge in cone $i$ of $u$}.

We have shown in Theorem~\ref{th:span} that there is a
$6$-standard, or simply standard, path in $H_8$ between the endpoints of every
edge in $\Y4$. These paths together satisfy the following:
\begin{observation}
\label{ob:standard}
For every node $u$ with a fan $(u,v_1), \dots, (u,v_k)$, with $k \geq 2$,
in its cone $i$ and $\anchor_i(u) = (u,v_{l'})$ for some
$l' \in \{1,\dots,k\}$, the following hold for the set of all
$6$-standard paths in $H_8$:
\begin{enumerate}[(a)]

\item For every edge pair $(v_{r-1},v_{r}), (v_r,v_{r+1})$ of $u$
(where $r \in \{2,\dots,k-1\}$) if a standard path in $H_8$ contains both
edges then they must appear consecutively in the path, and if a standard path
contains just one of them then the standard path must be from $u$ to $v_r$.

\item If $(v_2,v_1)$ is a duplicate first edge (of $u$) then
no standard path in $H_8$ can contain $(v_2,v_1)$ other than the standard path
from $u$ to $v_1$.

\item If $(v_{k-1},v_k)$ is a duplicate last edge (of $u$) then
no standard path in $H_8$ can contain $(v_{k-1},v_k)$ other than the standard
path from $u$ to $v_k$.

\end{enumerate}
\end{observation}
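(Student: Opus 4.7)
The plan is to exploit the fact that each edge mentioned in the statement is a canonical edge of exactly one node, namely $u$, and then to trace when such an edge can appear in a $2$-standard path and whether that $2$-standard path can sit below the top level of a $6$-standard pre-path.

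First, I would apply Observation~\ref{ob:initial1.5}(b) with the roles of the two endpoints swapped so that the $\oY4$-outgoing endpoint plays the role of ``$u$'' in the observation. For the edge $(v_{r-1},v_r)$ in part~(a), the hypothesis $\edge{v_{r-1},v_r}\in\oY4$ from Lemma~\ref{le:charge2}(a) makes Observation~\ref{ob:initial1.5}(b) applicable with $v_r$ in the ``$u$''-role and the edge viewed in cone $i+3$ of $v_r$; its conclusion is that the edge is canonical of a single node, and since it is canonical of $u$'s cone $i$ fan, that node must be $u$. The identical reasoning applies to $(v_r,v_{r+1})$, to $(v_2,v_1)$, and to $(v_{k-1},v_k)$.

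Next, I would determine which $2$-standard paths can contain these edges. Since each such edge is canonical only of $u$, it must enter a $2$-standard path through a $\cpath{u}{v_s,v_{l'}}$ segment built from $u$'s cone $i$ fan, where $\anchor_i(u)=(u,v_{l'})$ and $v_s$ is in that fan. A direct case analysis on whether $s\le l'$ or $s>l'$ then yields the claim. For part~(a), the uni-directionality hypotheses combined with Observation~\ref{ob:anchors1}(a) force $l'\ne r$, so both edges of the pair appear precisely when $s$ lies strictly on the far side of $r$ from $l'$ (in which case the traversal visits them consecutively as $v_{r-1},v_r,v_{r+1}$ or $v_{r+1},v_r,v_{r-1}$), and exactly one of them appears precisely when $s=r$. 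For parts~(b) and~(c), the non-anchor hypothesis on $(v_1,u)$ (resp.\ $(v_k,u)$) gives $l'\ge 2$ (resp.\ $l'\le k-1$), so the only $\cpath{u}{v_s,v_{l'}}$ that contains $(v_2,v_1)$ (resp.\ $(v_{k-1},v_k)$) is the one with $s=1$ (resp.\ $s=k$).

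Finally, to lift this from $2$-standard paths to the full $6$-standard pre-paths, I would rule out the possibility that the edge is contributed by the recursive replacement of some canonical edge not in $H_8$. The only candidate replaced edge is $(u,v_r)$, $(u,v_1)$, or $(u,v_k)$ depending on the part. In part~(a), since $1<r<k$ the edge $(u,v_r)$ is a middle edge and hence, by Observation~\ref{ob:initial1.5}(c), not canonical at all, so it is never replaced; moreover, both edges of the pair lie in $H_8$, so no intermediate replacement can break their consecutivity. In parts~(b) and~(c), the Lemma~\ref{le:charge2} hypothesis puts $(u,v_1)$ (resp.\ $(u,v_k)$) in $H_8$, so again no replacement occurs. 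Consequently the relevant $2$-standard path must sit at the top level of the $6$-standard pre-path, making the overall standard path precisely the one indexed by $(u,v_r)$, $(u,v_1)$, or $(u,v_k)$, as required. The main bookkeeping subtlety is keeping cone indices and $\oY4$-orientations straight when applying Observation~\ref{ob:initial1.5}(b); once that is handled, the rest is a routine index check.
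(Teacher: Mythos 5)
Your proof is correct and takes essentially the same route as the paper's own: Observation~\ref{ob:initial1.5}\emph{-(b)} makes each edge in question a canonical edge of $u$ only, an index comparison inside $\cpath{u}{v_s,v_{l'}}$ pins down which $2$-standard paths can carry it, and the recursive clause of Definition~\ref{de:standard} is disposed of because $(u,v_r)$ is a non-canonical middle edge while $(u,v_1)$ and $(u,v_k)$ are in $H_8$. (One minor wording slip: in Observation~\ref{ob:initial1.5}\emph{-(b)} the node playing the ``$u$''-role is the head of the $\oY4$ arrow, i.e.\ the edge is incoming there, not the outgoing endpoint --- your concrete instantiation with $v_r$, $v_1$, $v_k$ is nevertheless the intended one.)
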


\begin{proof}
By Definition~\ref{de:standard},
if $(v_r,v_{r+1})$, for some $r \in \{1,\dots,k-1\}$, is a non-anchor
uni-directional canonical edge that appears on a $2d$-standard path in $H_8$,
say path $p$, from
some node $u'$ to another node $v'$, then either 1) $(v_r,v_{r+1})$ belongs to
the $2$-standard path $p_1$ from $u'$ to $v'$ or 2) $(v_r,v_{r+1})$ belongs to the
$2(d-1)$-standard path in $H_8$ from $w$ to $w'$ for some canonical edge
$(w,w')$ in $p_1$ that is not in $H_8$.
In case 1), by definition of $2$-standard paths, $u'$ must be $u$ and the
subpath of $p_1$ starting at
$u$ and ending with edge $(v_r,v_{r+1})$ is the $2$-standard path from $u$ to
$v_r$, if $l' \geq r+1$, or to $v_{r+1}$, if $l' \leq r$. By
Definition~\ref{de:standard}, the $2d$-standard path from $u$ to $v_r$,
if $l' \geq r+1$, or $v_{r+1}$, if $l' \leq r$, is contained in $p$.
In case 2), we apply recursion until we obtain that $(v_r,v_{r+1})$
belongs to the $2$-standard path from $w$ to $w'$ for some canonical
edge $(w,w')$ not in $H_8$. Using the above argument, $w$ must be $u$,
$(v_r,v_{r+1})$ must be contained in $\cpath{u}{v_{l'},w'}$, and $(u,w')$
must be the first or last edge in cone $i$ of $u$.

If $(v_2,v_1)$ is a duplicate first edge of $u$ then $(v_1,u)$ is not an anchor
and $l' > 1$. Hence the only $1$-standard path that uses $(v_2,v_1)$ is the
$1$-standard path from $u$ to $v_1$ and the only
$2$-standard path that uses $(v_2,v_1)$ is the $2$-standard path from $u$
to $v_1$. Since $(u, v_1)$ is in $H_8$, $(v_2,v_1)$ appears only in one standard
path in $H_8$, the one from $u$ to $v_1$. This proves part {\em (b)} and, by
symmetry, {\em (c)}.
To prove part {\em (a)}, suppose $(v_{r-1},v_{r}), (v_r,v_{r+1})$,
for some $r \in \{2,\dots,k-1\}$, is an edge pair of $u$. By
Observation~\ref{ob:anchors1}{\em-(a)}, $r \not= l'$. If only
$(v_{r-1},v_r)$ appears in standard path $p$ then $l' < r$ and, since
$(u,v_r)$ is not canonical, $p$ must be the standard path
from $u$ to $v_r$. Similarly, if only $(v_{r+1},v_r)$ appears in standard path
$p$ then $l > r$ and $p$ must also be the standard path from $u$ to $v_r$.
Finally, if both edges appear in standard path $p$ then they must be, by
Definition~\ref{de:standard}, consecutive edges in the path.
\end{proof}

The following observation applies to $H_8$ but we find it useful to state it
more generally:
\begin{observation}
\label{ob:standard2}
Let $H$ be a subgraph of $\Y4$ that includes all selected edges, let
$(u,v_1), \dots, (u,v_k)$, with $k \geq 2$, be the fan of cone $u$ in its
cone $i$, and let $\anchor_i(u) = (u,v_{l'})$ for some $l' \in \{1,\dots,k\}$.
If $(u,v)$, where $v = v_1$ or $v = v_k$, is a non-anchor
uni-directional canonical edge then $(u,v)$ cannot appear on the
$6$-standard pre-path in $H$ from $u$ to $v$.
\end{observation}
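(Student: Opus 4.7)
My plan is to prove the claim by analyzing every level of the recursion that defines the $6$-standard pre-path and showing that $(u,v)$ appears on no $2$-standard path invoked. I will assume without loss of generality that $v = v_k$; the case $v = v_1$ follows by symmetry.

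First I will verify the base case: $(u,v_k)$ does not lie on the $2$-standard path $p_0$ from $u$ to $v_k$. Since $(u,v_k)$ is non-anchor we have $l' \neq k$, so the anchor edge $(u,v_{l'})$ of $p_0$ is not $(u,v_k)$, and the canonical portion $\cpath{u}{v_{l'},v_k}$ (or its reverse) consists of edges $(v_s, v_{s+1})$ only, none of the form $(u,v_j)$. In Subcase~B the $1$-standard path from $v_{l'}$ to $u$ lies in cone $i+2$ of $v_{l'}$, but iterating Observation~\ref{ob:initial}\textit{-(d)} places $v_k$ in cone $i+1$ (not $i+2$) of $v_{l'}$, so $(u,v_k)$ can be neither the anchor edge $\anchor_{i+2}(v_{l'})$ nor a canonical edge of $v_{l'}$ in cone $i+2$, and therefore cannot appear on that $1$-standard path.

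For the deeper recursion, I will use the fact that, since $(u,v_k)$ is uni-directional canonical, Observation~\ref{ob:initial1.5}\textit{-(b)} together with Observation~\ref{ob:initial}\textit{-(d)} places any node $w$ of which $(u,v_k)$ is canonical in cone $i+1$ or $i-1$ of both $u$ and $v_k$, with $u$ and $v_k$ consecutive in $w$'s fan. For $(u,v_k)$ to appear on a $2$-standard path from some $x$ to $y$ we would need either $x = w$ (with $(u,v_k)$ on the canonical portion of the path) or the anchor foot of $x$ in the cone containing $y$ to equal $w$ (with $(u,v_k)$ on the $1$-standard path from $w$ back to $x$). The recursion invoked from the base path uses canonical edges of $u$ in cone $i$ or of $v_{l'}$ in cone $i+2$, whose endpoints lie in cone $i$ of $u$ or cone $i+2$ of $v_{l'}$; since $w$ lies in cone $i\pm 1$ of both $u$ and $v_{l'}$, this rules out the first case at the first recursion level.

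The subtle case will be the second: some endpoint $w_1 = v_s$ might have $w$ as its anchor foot in cone $i+1$, so that the $1$-standard path from $w$ to $v_s$ could cross $(u,v_k)$. To rule this out I will argue that since $u$ and $v_k$ are consecutive in $w$'s fan, any $v_s$ lying in $w$'s fan as a Yao neighbor of $w$ must lie outside the interval $[u, v_k]$ of that fan; and then use the non-anchor and uni-directional properties of $(u,v_k)$ together with Observation~\ref{ob:initial}\textit{-(e)} and the rectangle-emptiness conclusions of Observation~\ref{ob:initial}\textit{-(a),(d)} to force the anchor foot of $w$ in cone $i-1$ onto the same side of $(u,v_k)$ as $v_s$, so that the canonical path $\cpath{w}{v_s,z}$ does not traverse $(u,v_k)$. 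An analogous argument will handle the second (and final) recursion level. The hard part will be the geometric/combinatorial analysis pinning down the location of $w$'s anchor foot: it relies on exploiting the boundary position of $v_k$ in $u$'s fan together with the directional constraints imposed by $(u,v_k)$ being non-anchor and uni-directional, and this is the main technical obstacle of the proof.
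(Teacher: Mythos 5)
Your proposal is a plan rather than a proof, and the step you yourself flag as ``the main technical obstacle'' is precisely the content that the paper's argument supplies; as written there is a genuine gap. Concretely: (1) You only localize the node $w$ of which $(u,v_k)$ is a canonical edge to cone $i+1$ \emph{or} $i-1$ of $u$ and $v_k$. The paper's proof needs (and asserts) the sharper fact that for the \emph{last} edge $(u,v_k)$ the owner lies in cone $i+1$ of $u$ (and symmetrically in cone $i-1$ for $(u,v_1)$); this matters because in the case where $\anchor_i(u)$ is not selected (so $l'=1$) the second leg $\cpath{v_1}{u',u}$ of the $2$-standard path consists of canonical edges of nodes lying in cone $i+3$ of $u$, and with only ``cone $i\pm1$'' you cannot exclude $w$ from being one of them. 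Your sentence ``since $w$ lies in cone $i\pm1$ of both $u$ and $v_{l'}$'' is asserted but never established -- you proved the location relative to $v_k$, not $v_{l'}=v_1$. (2) The ``subtle second case'' (a recursed endpoint $v_s$ whose anchor foot is $w$, so that $(u,v_k)$ could lie on $\cpath{w}{\cdot,\cdot}$) is exactly where the proof lives, and your sketch of forcing the anchor foot of $w$ ``onto the same side of $(u,v_k)$'' is not carried out; the paper avoids this side-of-the-edge analysis entirely by observing that every canonical edge recursed upon at the $4$-standard level is a canonical edge of a node lying in cone $i$ or cone $i+3$ of $u$ (namely $v_r$, $v_{r+1}$, an anchor foot $w'$ in cone $i$ of $u$, or a node in cone $i+3$ of $u$), none of which can be the owner $s$ sitting in cone $i+1$ of $u$.

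There is also a structural difference that leaves extra unproven work in your plan: the paper disposes of the third recursion level with a cone-parity argument -- non-anchor canonical edges newly appearing at the $2$-, $4$-, $6$-standard levels lie alternately in cones $i\pm1$ and cones $i,i+2$ of their endpoints (by repeated use of Observation~\ref{ob:initial}\emph{-(d)}), so $(u,v_k)$, which lies in cones $i$ and $i+2$ of its endpoints, can only be threatened at the $4$-standard level. Your claim that ``an analogous argument will handle the second (and final) recursion level'' is not automatic without this parity observation, since at that level the relevant owners are anchor feet of anchor feet and are no longer confined to cones $i$ and $i+3$ of $u$. To repair the proposal you would need to (i) prove the cone-$i+1$ localization of the owner (a planarity/emptiness argument using $v_{k-1}$ and $R(s,v_k)$), (ii) replace the side-of-edge analysis by the owner-identification argument above (using that $(u,v_k)$, being non-anchor, can only enter a $2$-standard path as a canonical edge of its owner), and (iii) add the parity reduction from the $6$- to the $4$-standard pre-path.
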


\begin{proof}
We assume w.l.o.g. that $i=0$ and $(u,v) = (u,v_k)$. In
that case, $(u,v_k)$ is a canonical edge of some node $s$ in its cone 3 where
$s$ lies within cone $1$ of $u$. Note that edge $(u,v_k)$ and anchor edge
$(u,v_{l'})$ lie in cone 0 of $u$ and cones 2 of $v_k$ and $v_{l'}$,
respectively.

By applying Observation~\ref{ob:initial}{\em-(d)} to cone 0 of
$u$ and, if $(u,v_{l'})$ is not selected and thus $l' = 1$, cone 2 of $v_1$,
all non-anchor canonical edges on the $2$-standard path $p$ from $u$ to $v_k$
lie in cones 1 or 3 of their endpoints. By repeating this for every non-anchor
canonical edge on $p$, we deduce that all non-anchor canonical edges on the
$4$-standard pre-path in $H$ from $u$ to $v_k$ but not on the $2$-standard
pre-path in $H$ lie in cones 0 and 2 of their endpoints. We continue one more
time to find that all non-anchor canonical edges on the $6$-standard pre-path
in $H$ from $u$ to $v_k$ but not on the $4$-standard pre-path in $H$ lie in
cones 1 and 3 of their endpoints. This implies that if $(u,v_k)$ does not
appear in the $4$-standard pre-path in $H$ from $u$ to $v_k$, it will not
appear in the $6$-standard pre-path in $H$ either. Therefore we only need to
show that $(u,v_k)$ does not appear in the $4$-standard pre-path in $H$ from
$u$ to $v_k$.

The non-anchor canonical edges on the $2$-standard path from $u$ to $v_k$ are
1) canonical edges of $u$ in its cone 0 and, if $(u,v_{l'})$ is not selected
and thus $l' = 1$, 2) canonical edges of node $v_1$ in its cone 2 and that lie
in cone 3 of $u$. Consider a non-anchor canonical edge $(v_r,v_{r+1})$ of $u$
in its cone 0. Either $\anchor_1(v_r)$ is defined and
$\anchor_1(v_r) = (v_r, w')$
or $\anchor_3(v_{r+1})$ is defined and $\anchor_3(v_{r+1}) = (v_{r+1}, w')$.
Note that $w'$ must lie in cone 0 of $u$. The non-anchor canonical edges on
the $2$-standard path between $v_r$ and $v_{r+1}$ are all
canonical edges of nodes $v_r$ and $w'$ if $\anchor_1(v_r)$ is defined or
$v_{r+1}$ and $w'$ if $\anchor_3(v_{r+1})$ is defined. Since $s$ lies in cone
$1$ of $u$, $v_r$, $v_{r+1}$, and $w'$ must be different from $s$.
If we now consider a canonical edge $(w_1,w_2)$ of $v_1$ in
its cone 2, we can similarly show that non-anchor canonical
edges on the $2$-standard path between $w_1$ and $w_2$ are canonical edges
of nodes lying in cone 3 of $u$ and thus cannot be $s$.

This implies that $(u,v_k)$, a canonical edge of $s$, cannot appear on
a $4$-standard pre-path in $H$ from $u$ to $v_k$.
\end{proof}

By Observation~\ref{ob:standard}{\em-(b)}, if $(v_2,v_1)$ is a duplicate first
edge in cone $i$ of $u$ then no standard path in $H_8$ other than the
$4$-standard path (by Lemma~\ref{le:unidirectional}) from $u$ to $v_1$ uses
edge $(v_2,v_1)$. Furthermore, edge $(v_1,u) \in H_8$ by definition (of
duplicate first edge). Therefore, {\em as long as we keep edge
$(v_1,u)$}, we can remove $(v_2,v_1)$ from $H_8$ without breaking any standard
path other than the one from $u$ to $v_1$ and without increasing the stretch
factor bound from Theorem~\ref{th:span}. By symmetry, a similar insight can be
made about Observation~\ref{ob:standard}{\em-(c)} and duplicate last edge
$(v_{k-1},v_k)$. To generalize the discussion that follows, we will
call an edge a {\em duplicate edge of $u$} if it is a duplicate
first or last edge in some cone of $u$.

If $(v_2,v_1)$ is a duplicate first edge of $u$ in its cone $i$ then,
by Observation~\ref{ob:initial1.5}{\em-(b)}, $(v_2,v_1)$ is the last
edge in cone $i+1$ of $v_1$ and $(u,v_1)$ is a non-anchor uni-directional
canonical edge
in $H_8$ with $\edge{v_1,u} \in \oY4$. Then, either $(v_1,u)$ is a duplicate
last edge in cone $i+1$ of some node $w$
lying in cones $i-1$ of $u$ and of $v_1$ or it is not a duplicate
edge at all. This insight, along with the symmetric one regarding
$(v_{k-1},v_k)$ and $(v_k,u)$, motivates this definition:

\begin{definition} \rm
A {\em chain of duplicate edges} in $H_8$ is a path
$w_1, \dots , w_k, w_{k+1}$ of maximal length in $H_8$ in which
every edge $(w_{l-1},w_l)$ is a duplicate edge of $w_{l+1}$ for
every $l=2,3,\dots,k$ and $(w_k,w_{k+1})$ is not a duplicate
edge. Edge $(w_k,w_{k+1})$ is referred to as the {\em end edge} of the chain
and $k$ is the length of the chain.
\end{definition}
For instance in Figure~\ref{fi:bigexample}-(c), edge $(u_3,u_2)$ is the end
edge of the chain of duplicate edges $u_3,u_2,u_{12},u_{10}$.
\begin{observation}
\label{ob:dchain}
The following hold for chains of duplicate edges:
\begin{enumerate}[(a)]
\item A chain of duplicate edges does not form a cycle.
\item Every duplicate edge belongs to exactly one chain of duplicate edges.
\item Every non-anchor uni-directional canonical edge in $H_8$ that is not a
duplicate edge is the end edge of exactly one chain, possibly of length 1.
\end{enumerate}
\end{observation}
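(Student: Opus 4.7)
The plan is to derive all three parts from two uniqueness statements for chain extensions, plus a geometric argument that rules out cycles.

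First, I would establish that chains extend uniquely in both directions. Given an interior chain edge $(w_{l-1},w_l)$, since it is a uni-directional canonical edge, its single $\oY4$-orientation fixes the role it plays: the definition of ``duplicate first'' requires $\edge{v_2,v_1}\in\oY4$ (so the predecessor is counter-clockwise from the successor in the hosting fan), whereas ``duplicate last'' requires $\edge{v_{k-1},v_k}\in\oY4$ (so the predecessor is clockwise from the successor). Since, by Observation~\ref{ob:initial1.5}{\em-(b)}, the canonical edge belongs to a unique fan, the cone $i$ and the ``host'' node $w_{l+1}$ are pinned down: $w_{l+1}$ is the unique neighbor of $w_l$ in cone $i+2$ of $w_l$. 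Symmetrically, given $w_l$ and $w_{l+1}$, the predecessor $w_{l-1}$ must be the second (resp.\ second-to-last) element of the fan of $w_{l+1}$ in the cone containing $w_l$, and is therefore unique.

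For part (a), I would rule out cycles via a geometric argument on extremal vertices. A direct computation from the duplicate first/last definitions combined with Observation~\ref{ob:initial}{\em-(d)} shows that at every intermediate chain node $w_l$, the incoming chain edge lies in cone $i+1$ of $w_l$ (for a duplicate-first step) or cone $i-1$ (for a duplicate-last step), while the outgoing chain edge $(w_l,w_{l+1})$ lies in cone $i+2$ of $w_l$. Hence the two chain edges incident to $w_l$ always occupy adjacent cones differing by exactly $\pm 1$. If a cycle existed, take the vertex $w_l$ on the cycle with smallest $y$-coordinate. Its two chain-neighbors must lie strictly above it, hence in cones $0$ or $1$ of $w_l$; the adjacent-cone constraint then forces them to occupy precisely $\{0,1\}$. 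A short case analysis using the orientation $\edge{w_l,w_{l+1}}\in\oY4$ and the role of $w_l$ as an extremal element of $w_{l+1}$'s fan forces a companion vertex of that fan to lie strictly below $w_l$ (via Observation~\ref{ob:initial}{\em-(a)} and {\em-(d)}), contradicting extremality. This final case analysis is the main obstacle.

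Parts (b) and (c) then follow from (a) and the uniqueness of extension. For (b), starting from any duplicate edge, extend forward and backward using the unique-extension rules; by (a) the resulting walk is a simple path, and since $\Y4$ is finite it terminates at non-duplicate edges on both sides, giving the unique chain through the given edge. For (c), given a non-anchor uni-directional canonical edge $(a,b)\in H_8$ that is not a duplicate edge, treat $(a,b)$ as the candidate end edge and apply the unique backward extension repeatedly; the maximal backward extension is the unique chain whose end edge is $(a,b)$, and the chain has length $1$ precisely when $a$ is neither the second nor the second-to-last element of the fan of $b$ in the cone containing $a$.
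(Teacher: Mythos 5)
Your uniqueness-of-extension analysis (the forward host pinned down via Observation~\ref{ob:initial1.5}\emph{-(b)}, the backward predecessor pinned down as the second or second-to-last fan element) is sound and is essentially the paper's own argument for parts \emph{(b)} and \emph{(c)}. The genuine gap is in part \emph{(a)}. Your minimum-$y$ argument does not cover all chains: chains whose host cones alternate between $1$ and $2$, or between $3$ and $0$, are monotone in the $x$-coordinate, not in $y$, and along such chains there really are intermediate nodes $w_l$ whose two chain-neighbors both lie above $w_l$ (incoming and outgoing edges in cones $0$ and $1$ of $w_l$). These are exactly the configurations your extremal choice isolates, and there the rescue you propose fails: if the incoming edge is a duplicate last edge in cone $2$ of $w_{l+1}$ (so $w_l=v_k$), Observation~\ref{ob:initial}\emph{-(d)} puts every other element $v_j$, $j<k$, in cone $1$ of its successor, hence strictly \emph{above} $w_l$; symmetrically, if it is a duplicate first edge in cone $3$ (so $w_l=v_1$), all other fan elements lie strictly above $w_l$ as well. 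So no ``companion vertex of $w_{l+1}$'s fan strictly below $w_l$'' exists in precisely the cases you need it, and even if one existed it need not lie on the purported cycle, so it would not contradict the extremal choice. Since you yourself flag this case analysis as ``the main obstacle,'' part \emph{(a)} is not established.

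The repair is the directional-monotonicity fact that the paper uses: from any chain node, the predecessor lies in cone $i$ while \emph{all} successors lie in cones $i+2$ or $i+3$ of that node (equivalently, each chain is strictly monotone in $x$ or in $y$, depending on whether its host cones are $\{1,2\}$/$\{3,0\}$ or $\{0,1\}$/$\{2,3\}$). This follows by iterating exactly the local cone computation you already did (incoming in cone $i+1$ or $i+3$, outgoing in cone $i+2$, with the first/last type and the host cone alternating from step to step, as in the paper's discussion preceding the definition of chains), and it immediately rules out cycles. Alternatively, keep your extremal-vertex scheme but choose the extremal coordinate according to the chain's cone family ($y$-extremal for one family, $x$-extremal for the other). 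Finally, a small inaccuracy in part \emph{(c)}: backward extension from an end edge $(a,b)$ requires $a$ to be the \emph{first or last} element of $b$'s fan in the cone of $b$ containing $a$ (the second, resp.\ second-to-last, element is the new node $w_{k-1}$), and being first or last is necessary but not sufficient, since the orientation, non-anchor, and $H_8$-membership conditions of the duplicate-edge definition must also hold; your ``length $1$ precisely when $a$ is neither the second nor the second-to-last element'' misstates this, though it does not affect the uniqueness claim itself.
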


\begin{proof}
Let $(w_{l-1},w_l)$ be a duplicate (w.l.o.g., first) edge of
$w_{l+1}$ in its cone
$i$ and let $w_{l-2}$ be the predecessor of $w_{l-1}$ in a chain of duplicate
edges containing $(w_{l-1},w_l)$. Note that $w_{l-2}$ lies in cone $i$ of
$w_{l-1}$. On the other hand, $w_l$, $w_{l+1}$, and all other successors of
$w_{l-1}$ on the chain lie in cone $i+2$ or $i+3$ of $w_{l-1}$. This proves
part {\em (a)}. Because $(w_{l-2},w_{l-1})$ and $(w_l,w_{l+1})$ must be first
edges in cones $i$ of $w_{l-1}$ and $w_{l+1}$, respectively, the edges that
is before or after $(w_{l-1},w_l)$ in a chain of duplicate edges are uniquely
defined proving part {\em (b)}. For the same reason, there can be only
one chain whose end edge is a given uni-directional non-duplicate
edge and part {\em (c)} follows.
\end{proof}

By Observation~\ref{ob:dchain}, we can partition all non-anchor uni-directional
canonical edges of $H_8$ into chains of duplicate edges. To construct our final
spanner we first remove every other edge in every chain as follows:

\begin{step}
\label{st:notlast}
For every chain of duplicate edges $w_1,w_2, \dots, w_{k+1}$ we
remove from $H_8$ every other edge in the chain starting with $(w_{k-1},w_k)$,
i.e. $(w_{k-1},w_k)$, $(w_{k-3},w_{k-2})$, ...
\end{step}
We further remove edge pairs $(v_{r-1},v_r), (v_{r+1},v_r)$
and replace them with a shortcut:

\begin{step}
\label{st:last}
For every node $u$, every cone $i$ of $u$, and every edge pair
$(v_{r-1},v_r), (v_{r+1},v_r)$ in cone $i$ of $u$, we remove $(v_{r-1}, v_r)$
and $(v_r, v_{r+1})$ from $H_8$, we add a new (straight-line) edge between $v_{r-1}$ and
$v_{r+1}$, and charge edge $(v_{r-1}, v_{r+1})$ to the cones of $v_{r-1}$ and
$v_{r+1}$ in which the edge lies. We
call this edge a {\em shortcut} between $v_{r-1}$ and $v_{r+1}$.
We also call $v_r$ a {\em cut-off} node with respect to $u$.
\end{step}
Let $H_4$ be the resulting graph. In Figure~\ref{fi:bigexample}-(d), edges
$(u_8,u_{14})$ and $(u_1,u_{14})$ have been removed during step 4. All
other edges present in $H_8$ but not in $H_4$ have been removed during
step 3. Moreover, the shortcut edge $(u_8,u_1)$ has been added during
step 4.

Before we prove the main theorem, we show that the following strenghtening of
Theorem~\ref{th:span} holds for subgraph $H_6$ of $H_8$ that is obtained
after applying Step~\ref{st:notlast}, but not Step~\ref{st:last}, to $H_8$:
\begin{lemma}
\label{le:H6}
Let $(u,v)$ be an edge of $\Y4$ lying in cone i of u. There is a $6$-standard
path in $H_6$ from $u$ to $v$ or from $v$ to $u$ of length no more than
$(3+\sqrt 2)^6 \cdot d_2(u,v)$.
\end{lemma}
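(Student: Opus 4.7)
The proof strategy is to mirror that of Theorem~\ref{th:span}, replacing $H_8$ throughout by $H_6$. The core task is to re-prove, in turn, the $H_6$-analogs of Lemma~\ref{le:bidirectional} and Lemma~\ref{le:unidirectional}, after which the main statement assembles exactly as in Theorem~\ref{th:span}'s proof. The length bound of $(3+\sqrt 2)^6 \cdot d_2(u,v)$ then follows unchanged from Lemma~\ref{le:L1path} applied recursively.

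For the bi-directional case, the original proof of Lemma~\ref{le:bidirectional} transfers almost verbatim: I would just need to check that no canonical edge on the 2-standard path from $u$ to a bi-directional canonical edge $v=v_k$ can have been removed by Step~\ref{st:notlast}. The only natural candidate is the last canonical edge $(v_{k-1},v_k)$ on the path, but it cannot be a duplicate-last edge of $u$, since the duplicate-last condition requires $(u,v_k)$ to be uni-directional, contradicting the bi-directionality of $(u,v)$. The interior canonical edges sit at non-extremal positions of $u$'s fan, and by Observation~\ref{ob:initial1.5}{\em-(b)} a uni-directional canonical edge belongs to a unique fan, so they cannot be duplicate edges of any other node either; bi-directional canonical edges on the path are automatically not duplicates. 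Hence the 2-standard path lies entirely in $H_6$.

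The uni-directional case is where the main technical work lies. I would follow the proof of Lemma~\ref{le:unidirectional} and track what can now be missing from the 2-standard path: bi-directional canonical edges are handled by the first analog, while any missing uni-directional canonical edge must be a duplicate edge eliminated by Step~\ref{st:notlast}. For each such removed $e=(v_2,v_1)$, the 4-standard pre-path substitutes the 2-standard pre-path for $e$, and the crux is to show that this substitute is a complete path in $H_6$. By Observation~\ref{ob:initial1.5}{\em-(b)}, $e$ is the only edge in cone $i+3$ of $v_2$, so $\anchor_{i+3}(v_2)$ is undefined and the 2-standard path for $e$ is forced to go from $v_1$ to $v_2$ through the fan of $v_1$ in cone $i+1$. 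I would then verify, by tracking edge orientations and using the fact that Step~\ref{st:notlast} never removes two adjacent edges of the same chain, that the canonical edges on this substitute path cannot themselves be duplicates removed by Step~\ref{st:notlast}: a candidate duplicate edge of $v_1$ along the substitute lies in the same chain as $e$ and immediately precedes it, and so occupies a parity-opposite position that Step~\ref{st:notlast} leaves in place.

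The main obstacle, therefore, lies in the last verification of the uni-directional case: confirming by careful case-analysis on cones and orientations that the substitute 2-standard pre-path for a removed duplicate edge is itself entirely contained in $H_6$. Once that is in hand, the proof of Lemma~\ref{le:H6} proper mirrors Theorem~\ref{th:span} step-for-step, using the two analogs above in place of Lemma~\ref{le:bidirectional} and Lemma~\ref{le:unidirectional}, and the length bound is inherited from the recursive application of Lemma~\ref{le:L1path}.
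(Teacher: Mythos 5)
Your overall repair idea is the right one (replace a removed duplicate edge $(v_2,v_1)$ of apex $u$ by the standard path from $v_1$ to $v_2$, protected by the alternation in Step~\ref{st:notlast}), but the proposal has a genuine gap, and it is not only the verification you explicitly defer. The deeper problem is the depth accounting in your plan to ``re-prove the analogs of Lemma~\ref{le:bidirectional} and Lemma~\ref{le:unidirectional} in $H_6$ and then mirror Theorem~\ref{th:span} step-for-step.'' Your claimed $H_6$-analog of Lemma~\ref{le:unidirectional} at the $4$-standard level cannot hold in general: in the typical broken situation, $(u,v_1)$ is itself a non-anchor uni-directional canonical edge with $\edge{v_1,u}\in\oY4$ and the removed duplicate $(v_2,v_1)$ is exactly the final edge of the $2$-standard path from $u$ to $v_1$; its substitute, the $2$-standard path from $v_1$ to $v_2$, will in general contain non-selected bi-directional canonical edges that are absent from $H_8$ (hence from $H_6$) and need their own depth-$2$ repairs, so the analog only comes out at depth $6$. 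Feeding a depth-$6$ uni-directional analog into a verbatim mirror of Theorem~\ref{th:span} (which adds one more repair layer for the end edges excluded in Step~2) produces $8$-standard paths and a bound of $(3+\sqrt 2)^8$, overshooting the claimed $(3+\sqrt 2)^6$.

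The paper avoids this entirely by not re-deriving the lemma chain: it localizes the damage of Step~\ref{st:notlast} using Observation~\ref{ob:standard}\emph{-(b)/(c)}, which says a duplicate edge $(v_2,v_1)$ of $u$ lies on exactly one $6$-standard path of $H_8$, namely the path from $u$ to $v_1$, which by Lemma~\ref{le:unidirectional} is only a $4$-standard path. Hence every other $6$-standard path from Theorem~\ref{th:span} survives in $H_6$ unchanged, and the single broken path is repaired by splicing in the $4$-standard path from $v_1$ to $v_2$, which is itself intact: Observation~\ref{ob:standard2} rules out the circular reappearance of $(v_2,v_1)$ on it, and (as you correctly observe) the only other removable candidate is the chain-predecessor duplicate $(x,v_2)$ of $v_1$, which the alternation of Step~\ref{st:notlast} keeps. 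The splice stays at depth $6$ precisely because the broken path was only at depth $4$. So rather than the ``careful case-analysis on cones and orientations'' you flag as the main obstacle, what your argument is missing is the use of Observation~\ref{ob:standard} to confine the breakage to these apex-to-endpoint $4$-standard paths; without that localization your construction does not deliver the stated length bound, and with it the case analysis you worry about largely disappears.
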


\begin{proof}
Every standard path in $H_8$ that contains no edge removed in
Step~\ref{st:notlast} is in $H_6$, so we only need to consider
standard paths that do contain a removed edge, i.e. get broken.
Let $(v_2,v_1)$ be a duplicate first edge, of some node $u$, that is in $H_8$
but not in $H_6$ (the argument for a duplicate last edge is symmetric).  By
Observation~\ref{ob:standard}{\em-(b)}, the only standard path in $H_8$
that gets broken by the removal of $(v_2,v_1)$ is the standard path
in $H_8$ from $u$ to $v_1$. Since $(u,v_1)$ is a uni-directional canonical
edge in $\Y4$, by Lemma~\ref{le:unidirectional} this standard path must
be a $4$-standard path $p$ in $H_8$ from $u$ to $v_1$. Note that the
standard path in $H_8$ from $v_1$ to $v_2$, which happens to be a $4$-standard
path as well by Lemma~\ref{le:unidirectional}, does not get broken
(Observation~\ref{ob:standard2}) and
so it is also the $4$-standard path in $H_6$ from $v_1$ to $v_2$.
Since $(u,v_1)$ is a uni-directional canonical edge and since
$(v_1,v_2)$ belongs to the $2$-standard path from $u$ to $v_1$, the
path from $u$ to $v_1$ consisting of the $4$-standard path from $u$ to $v_1$ in
$H_8$ with edge $(v_2,v_1)$ being replaced by the $4$-standard path from $v_1$
to $v_2$ in $H_8$ is a $6$-standard path in $H_6$ from $u$ to $v_1$.
\end{proof}

\begin{theorem}
\label{th:final}
$H_4$ is plane spanner of the Euclidean graph of maximum degree at most 4
and stretch factor at most $\sqrt{4+2\sqrt{2}} (1+\sqrt{2})^2 (3+\sqrt 2)^6$.
\end{theorem}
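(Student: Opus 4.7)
The plan is to prove three things about $H_4$: it is planar, has maximum degree at most $4$, and is a spanner with the stated stretch factor. For the stretch factor, the plan chains three bounded-stretch results: $\cE$ to $L_\infty$-Delaunay (factor $\sqrt{4+2\sqrt{2}}$, from \cite{BGHP12}), $L_\infty$-Delaunay to $\Y4$ (factor $(1+\sqrt{2})$, from Lemma~\ref{ob:initial2}), and $\Y4$ to $H_6$ (factor $(3+\sqrt{2})^6$, from Lemma~\ref{le:H6}). The remaining task is to show that $H_4$ is a $(1+\sqrt{2})$-spanner of $H_6$, i.e.\ that for every edge $(a,b) \in H_6 \setminus H_4$ there is a path in $H_4$ from $a$ to $b$ of length at most $(1+\sqrt{2})\,d_2(a,b)$.

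For the degree bound, I would refine the charging scheme of Section~\ref{se:H8}. Lemma~\ref{le:charge2} lists the three cases in which a cone of some node carries charge $2$ in $H_8$. Step~\ref{st:notlast} removes every other edge of each chain of duplicate edges, which handles cases~(b) and~(c): each surviving duplicate-first (or last) edge has its predecessor in the chain removed from $H_6$. Step~\ref{st:last} replaces every edge pair $(v_{r-1},v_r),(v_r,v_{r+1})$ (case~(a)) by the single shortcut $(v_{r-1},v_{r+1})$, charged to the same cone of $v_{r-1}$ (respectively $v_{r+1}$) that previously held one of the removed edges, while the doubly-charged cone of $v_r$ is cleared altogether. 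A small case analysis then confirms that every cone in $H_4$ has charge at most $1$, so $H_4$ has maximum degree at most~$4$.

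For planarity, the only new edges in $H_4$ compared to $H_6 \subseteq \Y4$ are the shortcuts. Each shortcut $(v_{r-1},v_{r+1})$ lies entirely in $R(v_{r-1},v_r) \cup R(v_r,v_{r+1}) \cup \triangle v_{r-1} v_r v_{r+1}$, which by Observation~\ref{ob:initial}{\em-(a)} and {\em-(d)} is empty of points of $P$, and whose only crossing $\Y4$-edges are $(v_{r-1},v_r)$ and $(v_r,v_{r+1})$ (both removed by Step~\ref{st:last}) and the middle edge $(u,v_r)$ (which is not in $H_8$). Two distinct shortcuts correspond to disjoint triples of consecutive fan vertices of different nodes, and their regions lie in non-overlapping cells of $\Y4$, so they cannot cross each other either.

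The spanner claim is the main obstacle. The edges of $H_6 \setminus H_4$ are exactly the members of edge pairs removed by Step~\ref{st:last}. By Observation~\ref{ob:standard} the only $6$-standard paths in $H_6$ that are broken by removing such a pair $(v_{r-1}, v_r), (v_r, v_{r+1})$ are either paths that traverse both edges consecutively---which I reroute through the shortcut $(v_{r-1},v_{r+1})$ with no increase in length, by the triangle inequality---or paths that end at the cut-off node $v_r$. For this second case I would exploit that the edge-pair condition forces $\edge{v_r,v_{r-1}} \notin \oY4$ and $\edge{v_r,v_{r+1}} \notin \oY4$, so $v_r$ has a strictly closer (in $L_\infty$) outgoing $\oY4$-edge in each of its cones $i+1$ and $i+3$, making both $\anchor_{i+1}(v_r)$ and $\anchor_{i+3}(v_r)$ defined. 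At least one of these anchors, or its selected companion in a weak anchor chain (see Definition~\ref{de:anchors2}), is present in $H_4$ and reaches a point inside the empty square $S_{v_r}^{i+1}$ or $S_{v_r}^{i+3}$, which sits adjacent to $v_{r+1}$ or $v_{r-1}$. Concatenating the unbroken prefix of the standard path to $v_{r\pm1}$, the shortcut, and this short hop to $v_r$ gives the required detour. The delicate part of the argument will be bounding the detour's length by $(1+\sqrt{2})\,d_2(v_{r-1},v_r)$ (respectively $(1+\sqrt{2})\,d_2(v_r,v_{r+1})$), which should follow from the emptiness of the appropriate $L_\infty$ squares and the relation $d_2 \le \sqrt{2}\,d_\infty$ used throughout Lemma~\ref{le:L1path}.
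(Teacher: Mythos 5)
Your planarity and degree arguments follow essentially the same lines as the paper (charging scheme of Lemma~\ref{le:charge2}, shortcuts crossing only the middle edge $(u,v_r)$, which is not in $H_8$), and your handling of standard paths that use \emph{both} edges of a removed pair, via the shortcut and Observation~\ref{ob:standard}{\em-(a)}, is also the paper's argument. The genuine gap is in your treatment of the stretch factor, specifically the claim that $H_4$ is a $(1+\sqrt 2)$-spanner of $H_6$, i.e.\ that every removed edge $(v_{r-1},v_r)$ of an edge pair admits a replacement path in $H_4$ of length at most $(1+\sqrt 2)\,d_2(v_{r-1},v_r)$. Your proposed repair (an anchor of $v_r$ in cone $i+1$ or $i+3$ plus a ``short hop'') does not deliver this: the anchor chosen by $v_r$ in that cone may be a weak, non-selected anchor and hence absent from $H_4$; the anchor's endpoint is in general not $v_{r\pm 1}$ nor adjacent to it in $H_4$ (it is the far end of a maximal uni-directional canonical path, up to twice as far in $L_\infty$ as the nearest point), so returning from it to $v_{r\pm1}$ requires canonical edges that may themselves be missing from $H_4$ and must be replaced by standard paths whose guaranteed stretch is of order $(3+\sqrt 2)$ up to $(3+\sqrt 2)^6$. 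Consequently a per-edge factor of $(1+\sqrt 2)$ relative to the (possibly very short) removed edge is not obtainable with this machinery, and your factorization $\sqrt{4+2\sqrt 2}\,(1+\sqrt 2)\cdot(3+\sqrt 2)^6\cdot(1+\sqrt 2)$ cannot be justified by splitting off a $(1+\sqrt 2)$ factor between $H_6$ and $H_4$. You also leave unaddressed whether your replacement paths avoid other removed edges, which is needed to conclude anything about $H_4$ from per-edge substitutions.

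The paper allocates the extra $(1+\sqrt 2)$ factor differently: by Observation~\ref{ob:standard}{\em-(a)} the only standard paths genuinely broken by Step~\ref{st:last} are those from $u$ to the cut-off node $v_r$, and these are repaired at the level of the $\Y4$-edge $(u,v_r)$, not of the removed $H_6$-edge. Concretely, Observation~\ref{ob:standard2} shows $(v_{r-1},v_r)$ does not lie on the $6$-standard path between $v_r$ and $v_{r-1}$, and cut-off-node arguments show that this path and the $6$-standard path from $u$ to $v_{r-1}$ both survive in $H_4$; concatenating them and using Lemma~\ref{le:L1path}{\em-(b)} together with Lemma~\ref{le:H6} gives a $u$--$v_r$ path of length at most $(1+\sqrt 2)(3+\sqrt 2)^6\, d_2(u,v_r)$. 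Thus for every edge of $\Y4$ there is a path in $H_4$ of stretch $(1+\sqrt 2)(3+\sqrt 2)^6$, and composing with Lemma~\ref{ob:initial2} yields the stated bound. To fix your proposal you would need to replace the ``$(1+\sqrt 2)$-spanner of $H_6$'' step with this (or an equivalent) repair of the specific broken standard paths, including the verification that the repairing paths themselves are present in $H_4$.
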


\begin{proof}
We first argue planarity, which could potentially be affected
by shortcut edges since they are the only edges in $H_4$ not in $\Y4$.
Suppose $(v_{r-1},v_{r+1})$ is a shortcut in $H_4$ that was put in because
edge pair $(v_{r-1},v_r),(v_{r+1},v_r)$ in cone $i$ of $u$ was removed from
$H_8$ and $v_r$ is a cut-off node of $u$. Note that this implies that
$\edge{v_{r-1},v_r},\edge{v_{r+1},v_r} \in \oY4$. The only edge of $\Y4$ that
$(v_{r-1},v_{r+1})$
intersects is $(v_r,u)$. That edge is a middle edge and hence not canonical.
Therefore the only way for $(u,v_r)$ to be in $H_4$ is if it was the anchor
chosen by $u$.
By Observation~\ref{ob:anchors1}{\em-(a)}, that would contradict the
orientation of $(v_{r-1},v_r)$ and $(v_{r+1},v_r)$ in $\oY4$.
Furthermore, because a shortcut is added only between nodes $v_{r-1}$
and $v_{r+1}$ such that $(v_{r-1},v_r)$ and $(v_{r+1},v_r)$ are uni-directional,
$\edge{v_{r-1},v_r} \in \oY4$, and $\edge{v_{r+1},v_r} \in \oY4$, it is not
possible for two shortcut edges to intersect.

To argue the degree bound we only need to consider the three cases of
Lemma~\ref{le:charge2}. In case {\em (b)}, either $(v_1,u)$ or $(v_2,v_1)$
is no longer in $H_4$ and the charge in cone $i+2$ is reduced from 2 to
1. The same is true for case {\em (c)}. In case {\em (a)}, the charge in cone
$i+2$ of $v_r$ is reduced from 2 to 0. The added shortcut edge
$(v_{r-1},v_{r+1})$
replaces the removed edges $(v_{r-1},v_r)$ and $(v_{r+1},v_r)$ and does not
change the charge in the cones of $v_{r-1}$ and $v_{r+1}$ containing the shortcut
edge.

In order to prove the stretch factor bound, by Lemma~\ref{le:H6} we only need
to consider the standard paths in $H_6$ that are broken by the removal of
edges in Step~\ref{st:last}.
Consider an edge pair $(v_{r-1},v_r), (v_{r+1},v_r)$ of some node $u$ that
is in $H_8$ but not in $H_4$. Both edges are uni-directional canonical edges
and thus, by Observation~\ref{ob:initial1.5}{\em-(b)}, they are canonical
edges of just one node ($u$).
Because $\edge{v_{r-1},v_r}, \edge{v_{r+1},v_r} \in \oY4$, neither
edge can be a duplicate edge and therefore both are in $H_6$ as well.
If a standard path in $H_8$ contains both edges,
by Observation~\ref{ob:standard}{\em-(a)} the edges must be consecutive
in the standard path and therefore shortcut edge $(v_{r-1},v_{r+1}) \in H_4$
may be used instead of the missing edge pair which actually shortens the path
and so these paths are not broken. If a standard path in $H_8$
uses just one of $(v_{r-1},v_r)$ or $(v_{r+1},v_r)$ then by
Observation~\ref{ob:standard}{\em-(a)} the standard path must be the
one from $u$ to node $v_r$, a cut-off node with respect to $u$ in $H_4$.
Therefore, to complete the proof of the theorem, we only need to show that
a short path exists in $H_4$ between $u$ and cut-off node $v_r$ (with respect
to $u$). We assume w.l.o.g. that the standard path in $H_8$ from $u$ to
$v_r$ uses edge $(v_{r-1},v_r)$.

By Observation~\ref{ob:standard2}, edge $(v_{r-1},v_r)$ cannot appear
in the $6$-standard path in $H_6$ from $v_r$ to $v_{r-1}$ so its removal in
Step~\ref{st:last} does not break that path. Furthermore, since $v_r$ is a
cut-off node with respect to $u$ then $(u,v_r)$ must be a middle edge and
therefore $v_{r-1}$ cannot be a cut-off node of $v_r$ (since $(v_r,v_{r-1})$ is
canonical and thus not a middle edge). So, the $6$-standard path in $H_6$
between $v_r$ and $v_{r-1}$ must still exist, with shortcuts replacing any
edge pairs on the path, in $H_4$. By Lemma~\ref{le:L1path}{\em-(b)} and
Lemma~\ref{le:H6}, the length of this path is at most
$\sqrt{2}(3+\sqrt 2)^6 \cdot d_2(u,v)$. Since $v_{r-1}$ cannot be a cut-off
node of $u$ (because
$\edge{v_r,v_{r-1}} \not\in \oY4$), the $6$-standard path in $H_6$ from
$u$ to $v_{r-1}$ is also in $H_4$. The length of this path is no more than
$(3+\sqrt 2)^6 \cdot d_2(u,v)$. Therefore there is a path in $H_4$ from $u$
to $v_r$ of length bounded by $(1+\sqrt{2})(3+\sqrt 2)^6 \cdot d_2(u,v_r)$
and so $H_4$ is a spanner with stretch factor at most $\sqrt{4+2\sqrt{2}} (1+\sqrt{2})^2 (3+\sqrt 2)^6$ by
Lemma~\ref{ob:initial2}.
\end{proof}



\section{Conclusion}

The question that this paper addressed is:
{\em What is the smallest maximum degree that can be achieved for plane
spanners of complete Euclidean graphs?} The main result of this paper allows
this question to be reformulated as follows: {\em Is it always possible to
construct a maximum degree 3 plane spanner of complete Euclidean graphs?}

Given the $L_\infty$-Delaunay triangulation of the considered point set,
the construction of $H_4$ can be done in linear time. The stretch factor
bound from Theorem~\ref{th:final} is a very rough bound on the stretch factor
of $H_4$. Our main goal was to present a simpler proof showing that $H_4$ is
spanner of maximum degree four. The bound can be easily improved
with a more careful analysis (leading to a proof with more cases to consider). We have written a program that constructs spanner $H_4$ on a set of points
(see \texttt{http://www.labri.fr/$\sim$bonichon/deg4}) and we have failed to
to obtain examples that give a spanner with stretch factor greater than 10.
We believe that the real spanning ratio is much lower and
that this construction may have practical applications.

There exists a distributed algorithm that can compute a plane spanner
of maximum degree 6 (and we denote this spanner $H'_6$) with a constant
number of rounds~\cite{BGHP10}. For the construction of $H_4$, the number of
necessary rounds is bounded (from below) by the length of longest weak anchor
chain and the length of the longest duplicated chain, and such a chain can
have a linear number of vertices.

In \cite{bose2012competitive2} it is shown that there exists a routing
algorithm on $H'_6$ with a bound stretch factor. We leave open the
question whether or not it is possible to obtain a similar result on $H_4$.
The construction of $H'_6$ has been extended to constraints graphs
in~\cite{bose2012plane}. Once again, we leave open the question that
whether or not it is possible to obtain similar results on $H_4$.


\section{Figures}

\newcommand{\scaleExample}{0.018}

\begin{figure}[!b]
\noindent
\begin{minipage}{.5\textwidth}
\centering
\begin{tikzpicture}[scale=\scaleExample]
  \foreach \pos/\name/\hh/\hhh/\hhhh/\hhhhh/\ff/\fff/\ffff/\fffff/\side  in
  \nodeList
 \node[vertex, label=\side:$u_{\name}$] (v\name) at \pos {};

 \foreach \src/\dst/\color in \YaoEdges
 \draw[yaoEdge, color=\color] (v\src) -- (v\dst) ;

\draw[edge, blue, dashed] (v4)--(v23);
\draw[edge, red!70, dashed] (v26)--(v14);
\draw[edge, red!70, dashed] (v22)--(v7);
\end{tikzpicture}

\centering $(a)$
\end{minipage}%
\begin{minipage}{.5\textwidth}
\centering

\begin{tikzpicture}[scale=\scaleExample]
  \foreach \pos/\name/\hh/\hhh/\hhhh/\hhhhh/\ff/\fff/\ffff/\fffff/\side  in
  \nodeList
 \node[vertex, label=\side:$u_{\name}$] (v\name) at \pos {};


 \foreach \src/\dst/\color in \anchorEdges
 \draw[anchorEdge, dashed, color=\color] (v\src) -- (v\dst) ;

 \foreach \src/\dst/\color in \StrongEdges
 \draw[strongEdge, color=\color] (v\src) -- (v\dst);

\end{tikzpicture}

\centering $(b)$
\end{minipage}

\noindent
\begin{minipage}{.5\textwidth}
\centering
\begin{tikzpicture}[scale=\scaleExample]
 \foreach \src/\dst/\color in \stepTwoEdges
 \draw[addedEdge, color=green] (v\src) -- (v\dst);

 \foreach \src/\dst/\color in \StrongEdges
 \draw[strongEdge, color=\color] (v\src) -- (v\dst);

 \foreach \src/\dst/\color in \pseudoStrongEdges
 \draw[pseudoStrongEdge, dashed, color=\color] (v\src) -- (v\dst);

 \foreach \pos/\name/\hh/\hhh/\hhhh/\hhhhh/\ff/\fff/\ffff/\fffff/\side  in
  \nodeList
  \node[vertex, label=\side:\scalebox{0.75}{$\charge{\hh}{\hhh}{\hhhh}{\hhhhh}$}] (v\name) at \pos {};

\end{tikzpicture}

\centering $(c)$
\end{minipage}%
\begin{minipage}{.5\textwidth}
\centering
\begin{tikzpicture}[scale=\scaleExample]
 \foreach \src/\dst/\color in \stepThreeEdges
 \draw[addedEdge, color=green] (v\src) -- (v\dst);

 \foreach \src/\dst/\color in \StrongEdges
 \draw[strongEdge, color=\color] (v\src) -- (v\dst);

 \foreach \src/\dst/\color in \pseudoStrongEdges
 \draw[pseudoStrongEdge, dashed, color=\color] (v\src) -- (v\dst);

 \foreach \src/\dst in \shortcuts
 \draw[shortcutEdge] (v\src) -- (v\dst);

  \foreach \pos/\name/\hh/\hhh/\hhhh/\hhhhh/\ff/\fff/\ffff/\fffff/\side  in
  \nodeList
 \node[vertex, label=\side:\scalebox{0.75}{$\charge{\ff}{\fff}{\ffff}{\fffff}$}] (v\name) at \pos {};
\end{tikzpicture}

\centering $(d)$
\end{minipage}

\caption{
$(a)$ The $L_\infty$-Delaunay triangulation of $P = \{u_0,u_1,\dots,u_{28}\}$.
  An edge lying in cone $0$ of one endpoint and cone $2$ of the other is
  colored red and an edge lying in cone $1$ of one endpoint and cone $3$ of the
  other is colored blue; the oriented edges belong to $\oY4$ while the dashed
  edges do not.
$(b)$ The anchor edges. An anchor $(u_i,u_j)$ is shown to be oriented from $u_i$
  to $u_j$ if it is chosen by $u_i$; solid edges are strong anchors and dashed
  edges are weak anchors.
$(c)$ Graph $H_8$. The label at a node shows the charge of each cone of that
  node; non-anchor uni-directional canonical edges are dotted green.
$(d)$ Graph $H_4$. The undirected black edge is a shortcut edge.}
\label{fi:bigexample}
\end{figure}
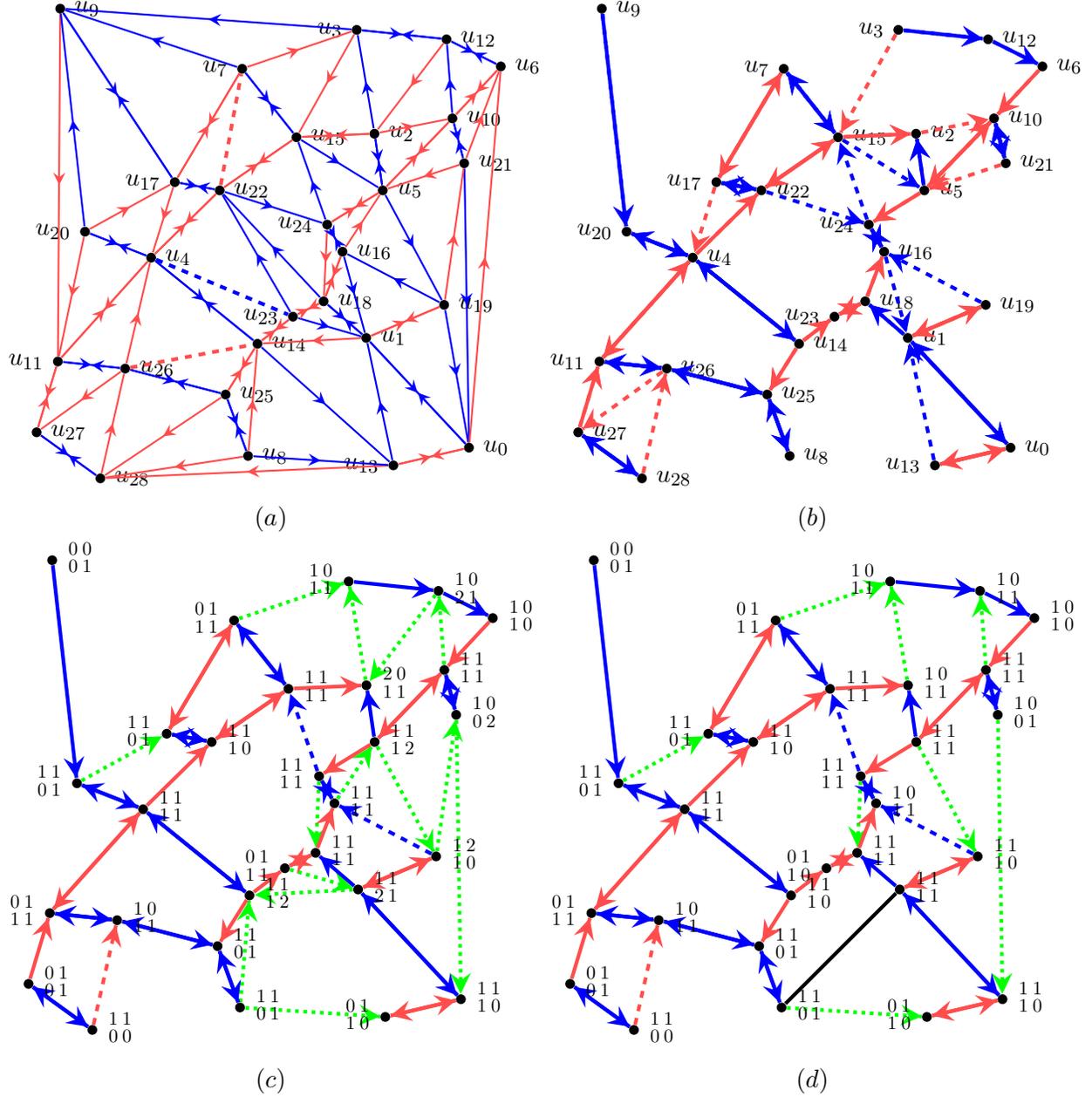

\newpage

\bibliographystyle{alpha}
\bibliography{bib}

\newcommand{\etalchar}[1]{$^{#1}$}
\begin{thebibliography}{BFvRV12b}

\bibitem[BCCY12]{BCC12}
P.~Bose, P.~Carmi, and L.~Chaitman-Yerushalmi.
\newblock On bounded degree plane strong geometric spanners.
\newblock {\em J. Discrete Algorithms}, 15:16--31, 2012.

\bibitem[BDD{\etalchar{+}}12]{BDD10}
P.~Bose, M.~Damian, K.~Dou\"{\i}eb, J.~O'Rourke, B.~Seamone, M.~H.~M. Smid, and
  S.~Wuhrer.
\newblock $\pi/2$-angle {Y}ao graphs are spanners.
\newblock {\em Int. J. Comput. Geometry Appl.}, 22(1):61--82, 2012.

\bibitem[BFvRV12a]{bose2012competitive2}
P.~Bose, R.~Fagerberg, A.~van Renssen, and S.~Verdonschot.
\newblock Competitive routing on a bounded-degree plane spanner.
\newblock In {\em Proceedings of the 24th Canadian Conference on Computational
  Geometry (CCCG 2012)}, pages 299--304, 2012.

\bibitem[BFvRV12b]{bose2012plane}
P.~Bose, R.~Fagerberg, A.~van Renssen, and S.~Verdonschot.
\newblock On plane constrained bounded-degree spanners.
\newblock In {\em Proceedings of the 10th Latin American Symposium on
  Theoretical Informatics (LATIN 2012)}, volume 7256 of {\em Lecture Notes in
  Computer Science}, pages 85--96, 2012.

\bibitem[BGHP10]{BGHP10}
N.~Bonichon, C.~Gavoille, N.~Hanusse, and L.~Perkovi\'{c}.
\newblock Plane spanners of maximum degree six.
\newblock In {\em Proceedings of the 37th International Colloquium on Automata,
  Languages and Programming (ICALP)}, volume 6198 of {\em Lecture Notes in
  Computer Science}, pages 19--30. Springer, 2010.

\bibitem[BGHP12]{BGHP12}
N.~Bonichon, C.~Gavoille, N.~Hanusse, and L.~Perkovi\'{c}.
\newblock The stretch factor of ${L}_1$- and ${L}_\infty$-{D}elaunay
  triangulations.
\newblock In {\em Proceedings of the 20th Annual European Symposium on
  Algorithms (ESA)}, volume 7501 of {\em Lecture Notes in Computer Science},
  pages 205--216. Springer, 2012.

\bibitem[BGS05]{BGS05a}
P.~Bose, J.~Gudmundsson, and M.~Smid.
\newblock Constructing plane spanners of bounded degree and low weight.
\newblock {\em Algorithmica}, 42(3-4):249--264, 2005.

\bibitem[BMSU01]{planerouting}
P.~Bose, P.~Morin, I.~Stojmenovi\'{c}, and J.~Urrutia.
\newblock Routing with guaranteed delivery in ad hoc wireless networks.
\newblock {\em Wireless Networks}, 7(6):609--616, 2001.

\bibitem[BS13]{BS10}
P.~Bose and M.~Smid.
\newblock On plane geometric spanners: A survey and open problems.
\newblock {\em Computational Geometry}, 46(7):818--830, 2013.

\bibitem[BSX09]{BSX09}
P.~Bose, M.~Smid, and D.~Xu.
\newblock Delaunay and diamond triangulations contain spanners of bounded
  degree.
\newblock {\em Int. J. Comput. Geometry Appl.}, 19(2):119--140, 2009.

\bibitem[Che86]{Che86}
L.~P. Chew.
\newblock There is a planar graph almost as good as the complete graph.
\newblock In {\em Proceedings of the Second Annual Symposium on Computational
  Geometry (SoCG)}, pages 169--177, 1986.

\bibitem[Che89]{Che89}
L.~P. Chew.
\newblock There are planar graphs almost as good as the complete graph.
\newblock {\em Journal of Computer and System Sciences}, 39(2):205--219, 1989.

\bibitem[DFS90]{DFS90}
D.P. Dobkin, S.J. Friedman, and K.J. Supowit.
\newblock Delaunay graphs are almost as good as complete graphs.
\newblock {\em Discrete \& Computational Geometry}, 5(4):399--407, December
  1990.

\bibitem[DH96]{DH96}
G.~Das and P.J. Heffernan.
\newblock Constructing degree-3 spanners with other sparseness properties.
\newblock {\em Int. J. Found. Comput. Sci.}, 7(2):121--136, 1996.

\bibitem[KG92]{KG92}
J.~M. Keil and C.~A. Gutwin.
\newblock Classes of graphs which approximate the complete {E}uclidean graph.
\newblock {\em Discrete \& Computational Geometry}, 7(1):13--28, 1992.

\bibitem[KP08]{KP08}
I.~Kanj and L.~Perkovi\'c.
\newblock On geometric spanners of {E}uclidean and unit disk graphs.
\newblock In {\em Proceedings of the $25^{th}$ Annual Symposium on Theoretical
  Aspects of Computer Science (STACS)}, volume hal-00231084, pages 409--420.
  HAL, 2008.

\bibitem[LW04]{LW04}
X.-Y. Li and Y.~Wang.
\newblock Efficient construction of low weight bounded degree planar spanner.
\newblock {\em Int. J. Comput. Geometry Appl.}, 14(1-2):69--84, 2004.

\bibitem[Sal94]{Sal94}
J.~Salowe.
\newblock Euclidean spanner graphs with degree four.
\newblock {\em Discrete Applied Mathematics}, 54(1):55--66, 1994.

\bibitem[WL06]{boundedrouting}
Y.~Wang and X.-Y. Li.
\newblock Localized construction of bounded degree and planar spanner for
  wireless ad hoc networks.
\newblock {\em Mobile Networks and Applications}, 11(2):161--175, 2006.

\bibitem[Xia13]{Xia13}
G.~Xia.
\newblock The stretch factor of the {D}elaunay triangulation is less than
  1.998.
\newblock {\em SIAM J. Comput.}, 42(4):1620--1659, 2013.

\end{thebibliography}



\end{document}